\renewcommand{\algorithmicrequire}{\textbf{Input:}}
\newcommand{\Input}{\renewcommand{\algorithmicrequire}{\textbf{Input:}} \Require}
\newcommand{\Init}{\renewcommand{\algorithmicrequire}{\textbf{Initialize:}} \Require}
\newcommand{\Begin}{\renewcommand{\algorithmicrequire}{\textbf{Begin:}} \Require}
\newtheorem{claim}{Claim}
\newtheorem{definition}{Definition}
\newtheorem{lemma}{Lemma}
\newtheorem{theorem}{Theorem}
\newcommand{\Alg}{\mathcal{A}}
\newcommand{\AllMakeIt}{\textsc{AllMakeIt}}
\newcommand{\AllButOne}{\textsc{AllButOne}}
\newcommand{\traveller}{agent }
\newcommand{\travellers}{agents }
\newcommand{\BS}{Bike Sharing\xspace}
\newcommand{\RBS}{Relaxed Bike Sharing\xspace}
\newcommand{\BSP}{\textsc{BS}\xspace}
\newcommand{\RBSP}{\textsc{RBS}\xspace}
\newcommand{\algoprob}[4]{\vspace{3mm}{\bf \large {#1}} \\ \noindent{\bf Input: } {#2} \\ \noindent{\bf Output: } {#3} \\ \noindent{\bf Objective: } {#4}\vspace{3mm}}
\DeclareMathOperator{\poly}{poly}
\title{The \BS Problem}
\author[1*]{Jurek Czyzowicz}
\author[2*]{Konstantinos Georgiou}
\author[3*]{Ryan Killick}
\author[3*]{Evangelos Kranakis}
\author[4]{Danny Krizanc}
\author[5*]{Lata Narayanan}
\author[5*]{Jaroslav Opatrny}
\author[5*]{Denis Pankratov}
\affil[1]{D\'{e}partemant d'informatique, Universit\'{e} du Qu\'{e}bec en Outaouais,  Gatineau, Canada}
\affil[2]{Department of Mathematics, Ryerson University, Toronto, Canada}
\affil[3]{School of Comp. Sci., Carleton University, Ottawa, Canada}
\affil[4]{Department of Mathematics \& Comp. Sci., Wesleyan University, Middletown CT, USA}
\affil[5]{Department of Comp. Sci. and Software Eng., Concordia University, Montreal, Canada}
\affil[*]{Research supported in part by NSERC}
\date{}
\begin{document}
\maketitle
\begin{abstract}
    Assume that $m \geq 1$  autonomous mobile agents and $0 \leq b \leq m$ single-agent transportation devices (called {\em bikes})  are initially placed at the left endpoint $0$ of the unit interval $[0,1]$. The agents are identical in capability and can move at speed one. The bikes cannot move on their own, but any agent riding bike $i$ can move at speed $v_i > 1$. An agent may ride at most one bike at a time. The agents can cooperate by sharing the bikes; an agent can ride a bike for a time, then drop it to be used by another agent, and possibly switch to a different bike. 
        
    We study two problems. In the \BS problem, we require all agents and bikes  starting at the left endpoint of the interval to reach the end of the interval as soon as possible. In the \RBS problem, we aim to minimize the arrival time of the agents; the bikes can be used to increase the average speed of the agents, but are not required to reach the end of the interval.
        
    Our main result is the construction of a polynomial time algorithm for the \BS problem that creates an arrival-time optimal schedule for travellers and bikes to travel across the interval. For the \RBS problem, we give an algorithm that gives an optimal solution for the case when at most one of the bikes can be abandoned.
\end{abstract}


\section{Introduction}\label{sec:intro}
    Autonomous mobile robots are increasingly used in many manufacturing, warehousing, and logistics applications. A recent development is the increased interest in deployment of so-called {\em cobots} - collaborative robots - that  are intended to work collaboratively and in close proximity with humans \cite{wikiCobot2020,Boy2007,veloso2012}. Such cobots are controlled by humans, but are intended to augment and enhance the capabilities of the humans with whom they work. 
  
    In this paper we study applications of cobots to transportation problems.  We propose  a new paradigm of cooperative transportation  in which cooperating autonomous mobile agents (humans or robots) are aided by  transportation cobots  (called bikes) that increase the speed of the \travellers when they are used. The \travellers are autonomous, identical in capability, and  can {\em walk} at maximum speed $1$. The bikes are not autonomous and cannot move by themselves, but \travellers can move bikes by riding them. At any time, an agent can ride at most one bike, and a bike can be ridden by at most one agent. An \traveller  riding bike $i$ can move at speed $v_i > 1$; note that the speeds of the bikes may be different. The bikes play a dual role: on the one hand, they are {\em resources} that can be exploited by the \travellers to increase their own speed, but on the other hand, they are also {\em goods} that need to be transported. We assume that initially all \travellers and bikes are located at the start of the unit interval $[0, 1]$. The goal of the \travellers is to traverse the interval, while also transporting (and being transported by) the bikes, in such a way as to minimize the latest arrival time of \travellers and bikes at the endpoint 1. 
        
    If the number of the bikes is smaller than the number of agents, or if the speeds of the bikes are different, it is clear that \travellers can collectively benefit by  {\em sharing}  the bikes. Consider for example, the simple case of 2 \travellers and 1 bike with speed $v>1$. If one of the \travellers uses the bike, and the other walks for the entire interval, the latest arrival time of the \travellers is 1. However if one of the \travellers rides the bike for distance $x$, drops the bike, and then walks the remaining distance, while the second \traveller walks until it reaches the dropped-off bike, and rides it for the remaining distance, the latest arrival time is $\max\{x/v + 1-x, x + (1-x)/v \} $ which is clearly less than 1. It is easy to see that the optimal value of $x$ is $1/2$ giving an arrival time of $\frac{1+v}{2v}$ for the two \travellers and a bike. 

    From the above example, we can see that a solution to the cooperative transport problem described here consists of a {\em schedule:} a partition of the unit interval into sub-intervals,   as well as a specification of the bike to be used by each \traveller in each  sub-interval. It is easy to see that it is never helpful for agents to turn back and ride in the opposite direction. In Section~\ref{sec:no-wait} we show that it is also not helpful for agents to {\em wait} to switch bikes, or to ride/walk at less than the maximum speed possible.  Thus we only consider {\em feasible} schedules wherein  the bike to be used by an agent in a sub-interval is always available when the agent reaches the  sub-interval. This will be defined formally in Section~\ref{sec:prelim}; for now it suffices to define the  {\em \BS} ($\BSP$ for short) problem informally: Given $m$ agents and  $b$ bikes, and the speeds of bikes $v_1 \ge v_2 \ge \cdots \ge v_b > 1$, find a feasible schedule specifying how $m$ agents can travel and transport bikes from initial location $0$ to location $1$ by sharing bikes, so that the  latest arrival time of agents and bikes is minimized.

    We also consider a variation of the problem when there is no requirement for the bikes to be transported; bikes are simply resources available to the \travellers to use in order to minimize the agents' arrival time, and can be abandoned  enroute if they are not useful towards achieving this objective. Thus the {\em \RBS} ($\RBSP$ for short) problem is: Given $m$ agents and  $b$ bikes, with respective speeds $v_1 \ge v_2 \ge \cdots \ge v_b > 1$, find a feasible schedule specifying how $m$ agents can travel  from initial location $0$ to location $1$ by sharing bikes,  and the  latest arrival time of agents is minimized. We also consider a version of the problem when the input also specifies an upper bound $\ell$ on the number of bikes that can be abandoned.

    Notice that for the same input, the optimal arrival time for the \RBSP problem can be less than that for the \BSP problem, as it may sometimes be more advantageous to abandon the slower bikes at some point and share the faster bikes. For example, suppose we have two agents and 2 bikes with speeds $v_1, v_2$ with $v_1 > v_2 > 1$. For the \BSP problem, the best arrival time achievable is $1/v_2$, as both bikes have to make it to the end. But for the \RBSP problem, a better arrival time for the agents can be achieved by the following strategy (see Figure~\ref{fig:22}): one agent rides the faster bike up to some point $z$ and then walks the remaining distance taking time $z/v_2 + 1 - z$. The other agent rides the slower bike up to point $z$, abandons it, switches to the faster bike, and rides it to the end. This takes time $z/v_2 + (1-z)/v_1$. By equating these two times, we get an overall arrival time of $\frac{v_1^2 - v_2}{v_2v_1^2 + v_1^2 - 2v_1v_2} \leq \frac{1}{v_2}$.

    \begin{figure}[H]
            \centerline{\includegraphics[scale=0.06]{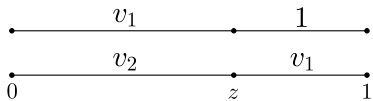}}
            \caption{Strategy for \RBSP for $2$ agents and $2$ bikes. The two rows correspond to the behaviour of two agents. Each line segment is labelled with a bike speed that corresponds to the agent travelling on that bike for the distance of the line segment. If an agent  walks then the line segment is labelled with $1$. Notice that bike 2 is abandoned before reaching the end.  }
            \label{fig:22}
    \end{figure}

\subsection{Related work}\label{ssec:related_work}
    The capabilities of autonomous mobile robots have been extensively studied; many different models have been proposed, algorithms and impossibility results have been given for tasks such as gathering, pattern formation etc. under different assumptions (for related references see the survey papers on pattern formation by \cite{PrencipeMAC19,VigliettaMAC19} and related work cited therein, on swarm robotics by \cite{KingMAC19}, as well as the monograph ~\cite{flocchini2019distributed}).  These robots are generally assumed to be identical, anonymous, and work in discrete look-compute-move cycles. 
        
    Closer to our work is the model used for multi-robot search and evacuation. Indeed scheduling and cooperation is paramount for the optimal performance of an algorithm for {\em evacuation} in a continuous domain, whereby all the robots are required to evacuate from an exit placed at an unknown location on the boundary, and as in our problem, we are interested in minimizing the time the last robot arrives at the exit ~\cite{CGKMAC19}. However, in this problem, unlike our problem,  the destination/exit is at an unknown location. 

    Another related problem is the {\em beachcombers} problem which was first introduced and analyzed in \cite{czyzowicz2015beachcombers}. In this problem,  $n$ mobile robots placed at an endpoint of the unit interval $[0,1]$. Each robot has its own search and walking capability independently of the rest of the robots, namely a searching speed $s_i$ and a walking speed $w_i$, where $s_i < w_i$, for $i=1,2,\ldots ,n$. The robots are to explore the interval collectively, and the goal is to minimize the time when every point in the interval has been searched by at least one robot.  However, in this  problem, unlike in our problem,   all robots are not required to reach the endpoint of the interval in the beachcombers problem, and also, robots can switch speeds independently of each other, and whenever they want.

    The bike sharing problem involves a sort of cooperation between \travellers and bikes in order to optimize the arrival time. In this context, related studies concern cooperation between mobile agents and wireless sensors. For example, \cite{DSKZ06} and \cite{SSL07} use information obtained from wireless sensors for the problem of localization of a mobile robot while in \cite{JS01}, mobile robots and stationary sensors cooperate in a target tracking problem whereby stationary sensors track moving sensors in their sensor range and mobile agents explore regions not covered by the fixed sensors.  There are also evaluation platforms for mixed-mode environments incorporating both mobile robots and wireless sensor networks, see e.g.,~\cite{Kropff08}. 
    
    Additional literature on cooperation between mobile agents and wireless sensors includes \cite{DBLP:conf/waoa/CzyzowiczKKNO14} in which a mobile robot is assisting moving sensors to attain coverage and \cite{wid2013} which provides hardness results, exact, approximation, and resource-augmented algorithms for determining whether there is a schedule of agents' movements that collaboratively deliver data from specified sources of a network to a central repository.  Of related interest is the addition of immobile~\cite{kranakis2003mobile} and mobile~\cite{czyzowicz2008power} tokens to aid in the exploration considered in the context of the rendezvous problem on a discrete ring network, however in both instances the token is merely used as a marker to recall the presence of another agent in the most recent past.  


    Bike sharing systems have been installed in many cities as a low-cost and environmentally friendly transportation alternative; researchers have considered optimization problems such as placement of bike stations or rebalancing of inventory - see for example \cite{chen15,schuijbroek17}. But this line of work is not concerned with the optimal usage of bikes by a particular set of users to travel between two points. The authors of \cite{Li19} consider {\em trip planning} for a set of users in a bike sharing system, but the problem is quite different from ours: bikes are available at a predefined set of locations, each user has a different start and end location, and can use at most one bike in their trip, and the problem is concerned with optimizing the number of served users and minimizing their trip time. The authors show NP-hardness of their trip planning problem, and give approximation algorithms. Another line of research relates to optimization of ride-sharing/carpooling which aims to match drivers offering rides with people needing rides; see \cite{agatz12} for a recent survey. Once again, each rider uses a single vehicle for their trip; there is no notion of switching vehicles during one trip. A couple of papers did consider ridesharing with transfers, but the algorithms given are heuristic algorithms with no guarantees of quality \cite{coltin2013,drews2013}.

    Our problem has similarities to job shop  scheduling problems \cite{Garey76}, where several jobs of the same processing time are to be scheduled on a collection of machines/servers with different processing power, and the objective is to minimize the {\em makespan}, the time when all jobs have finished.  Indeed our agents can be seen as akin to jobs, we can assume that there is an unlimited number of basic servers of low processing power available (this is akin to agents always being able to make progress by walking), and bikes are smiliar to the fast servers. However, the constraints imposed by bikes being available only at the points they are dropped off do not seem to translate to any natural precedence constraints in the context of multiprocessor or job shop scheduling. To the best of our knowledge, a  scheduling analog of our problem has not been studied.
    
%

\subsection{Outline and results of the paper}\label{ssec:outline}  

    In this paper we introduce and study  variations of the  Bike Sharing problem -- a novel paradigm of mobile agent optimization problems that includes both active elements (agents) and passive elements (bikes). For ease of exposition, we assume that all agents always move at the maximum speed possible (walking or biking), and never wait. In Section~\ref{sec:no-wait} we show that this assumption does not lose any generality. 
    For the \BSP problem, all bikes are required to make it to the end and so the speed of the slowest bike imposes a lower bound on the arrival time. We prove another lower bound on the arrival time based on the ``average'' completion time of the agents. We design an algorithm that produces a schedule with completion time matching the maximum of the two lower bounds. Our main result is the following:

    \begin{theorem}
            There is a polynomial time algorithm that constructs an arrival-time optimal schedule for the \BSP problem.         
    \end{theorem}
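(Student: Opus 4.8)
The plan is to pin down the optimal makespan exactly and then realize it constructively. Write $a_j := 1 - 1/v_j \in (0,1)$ and put
\[
    T^* \;:=\; \max\!\left\{\; \frac{1}{v_b}\,,\;\; 1 - \frac{1}{m}\sum_{j=1}^{b} a_j \;\right\}.
\]
Both quantities are lower bounds for any feasible schedule. The first is forced because bike $b$ has net displacement $1$ and top speed $v_b$, so it needs time at least $1/v_b$ to cross, and whoever escorts it across arrives no earlier. For the second, recall (Section~\ref{sec:no-wait}, and the fact that turning back never helps) that we may restrict attention to schedules with no waiting and no backtracking; there each bike $j$ is ridden over a family of sub-intervals whose total length is exactly $1$, so the total biking time summed over all agents is exactly $\sum_{j=1}^b 1/v_j$, and since the agents collectively cover distance $m$ of which exactly $b$ is covered while biking, the total walking time is exactly $m-b$. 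Hence the sum of the $m$ arrival times equals $m-b+\sum_j 1/v_j$, and the maximum arrival time is at least the average, which is the second expression. It remains to build a schedule of makespan $T^*$.

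\medskip
\noindent\emph{The slowest-bike bound is tight.} Suppose $T^* = 1/v_b$. Dedicate one agent to ride bike $b$ from $0$ straight to $1$; it and the bike arrive at exactly $1/v_b = T^*$. Remove them and recurse on the instance with $m-1$ agents and bikes $v_1\ge\cdots\ge v_{b-1}$. From the displayed formula, $1/v_b \ge 1 - \frac1m\sum_{j\le b}a_j$ is equivalent to $\sum_{j<b} a_j \ge (m-1)a_b$, and this inequality is exactly what is needed to conclude that the residual instance has optimal makespan at most $1/v_b$; since the removed agent does not impede the others, induction produces a schedule of makespan $1/v_b=T^*$ for the whole instance. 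Iterating the peeling, we either reach the trivial case $m=b$ (each remaining agent takes a distinct bike) or reach an instance in which the average bound strictly dominates.

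\medskip
\noindent\emph{The average bound is tight.} Now suppose $T^* = 1 - \frac1m\sum_{j=1}^b a_j$. Since the average arrival time is a fixed quantity equal to $T^*$, any schedule of makespan $T^*$ must have \emph{every} agent arrive at exactly $T^*$. We construct such a schedule as a relay: each bike is passed forward from agent to agent, its legs tiling $[0,1]$, while each agent traverses $[0,1]$ in alternating walking and biking legs (meeting each bike once) with breakpoints chosen so that its arrival time comes out to $T^*$. The natural way to organize this is by induction on the number of bikes: given an optimal relay for $v_1,\dots,v_{b-1}$ (in which every agent arrives at $1-\frac1m\sum_{j<b}a_j$), splice in bike $b$ by converting, for each agent, a suitable walking portion of its trip (of length $1/m$ in the balanced case) into a leg of bike $b$, with these legs arranged to tile $[0,1]$; each agent then speeds up by exactly $a_b/m$, landing on $T^*$, and bike $b$ is relayed all the way across. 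The base case $b=1$ is the relay in which agent $k$ rides the bike over $[\tfrac{k-1}{m},\tfrac{k}{m}]$ and walks the rest.

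\medskip
\noindent The main obstacle is the \emph{feasibility} of this relay: we must guarantee that whenever an agent reaches a point at which it is scheduled to pick up a bike, that bike has already been deposited there by the previous rider. This is delicate because converting a walking leg into a biking leg makes the agent run ahead of its former timetable, after which a fast agent can overtake a slower bike that it is later supposed to use. Overcoming this requires maintaining a structural invariant on the inductively built relay — controlling the order in which each agent meets the bikes and the relative positions at which the faster bikes are used, so that every bike stays ``ahead of'' the agents who still need it — together with a careful choice of which walking leg of each agent is converted and in what global order. Proving that this invariant survives the insertion and that all handoffs remain valid is the technical heart of the argument; once it is in place, all breakpoints and leg lengths are rational expressions in $v_1,\dots,v_b$ of polynomial size, computable in polynomial time, which establishes the theorem.
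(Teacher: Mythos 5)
Your two lower bounds and your treatment of the case $T^*=1/v_b$ are sound and essentially reproduce the paper's Lemma~\ref{lem:lbs-for-bsprob}, Lemma~\ref{lem:ub_gt_tmu} and Theorem~\ref{thm:bs_ub_le_tmu}: peeling off a dedicated rider for the slowest bike preserves the inequality and must terminate at an instance where the average bound dominates (not at ``$m=b$'' --- peeling preserves $m-b$; what saves you is that $u_1>T(m',\{u_1\})$ is impossible, so the average bound takes over before the bikes run out). The gap is in the case $T^*=T(m,U)$, which is the heart of the theorem. You correctly identify that feasibility of the handoffs is ``the technical heart of the argument'' --- and then you do not carry it out: the ``structural invariant'' that is supposed to guarantee every bike is already at its pickup point when the next rider arrives is never stated, let alone shown to survive the insertion of bike $b$. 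This is precisely the part of the problem where all the work lies (in the paper it occupies most of the proof of Theorem~\ref{thm:all_make_it}), so as written the proposal does not constitute a proof.

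Beyond being unproven, the specific splicing you propose is in trouble. For every agent to speed up by exactly $a_b/m$, each must convert a walking stretch of length exactly $1/m$ into a bike-$b$ leg; but a balanced optimal schedule for $(m,\{v_1,\dots,v_{b-1}\})$ only guarantees that the \emph{average} walking distance per agent is $(m-b+1)/m$, not that each individual agent walks at least $1/m$ (an agent who does all its riding on slow bikes must ride a long distance to achieve the common arrival time and may walk arbitrarily little). Moreover the $m$ legs must tile $[0,1]$ as $[\tfrac{k-1}{m},\tfrac{k}{m}]$ with the $k$-th rider reaching $\tfrac{k-1}{m}$ no earlier than the $(k-1)$-st rider deposits the bike there --- a global ordering constraint at $m-1$ interior points that the inherited sub-schedule knows nothing about. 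The paper sidesteps all of this with a different decomposition: a first phase that shifts all bikes to the rear $b$ agents while \emph{synchronizing} the front $m-b$ walkers, followed by a second phase in which bikes are absorbed one at a time, fastest first, into a synchronized group travelling at average inverse speed $T(m_k,U_k)$, recursing on $(m-b+k,\{u_1,\dots,u_k\})$ rather than on $(m,\{u_1,\dots,u_{b-1}\})$. Synchronization is the invariant that makes feasibility checkable: the only handoffs outside recursive calls occur between adjacent agents whose cumulative inverse speeds are pointwise ordered, and the partition lengths are given by an explicit recurrence whose coefficients are non-negative because $u_{k+1}\le T(m_k,U_k)$ (Lemma~\ref{cor:fast_set}). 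If you want to rescue your bike-by-bike induction, you would need to formulate and prove an invariant of comparable strength; absent that, the argument is a plan rather than a proof.
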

    \begin{proof}
            Follows from Theorems~\ref{thm:bs_ub_ge_tmu} and \ref{thm:bs_ub_le_tmu}.
    \end{proof}
    For the variation of the \RBSP problem, in which $\ell$ bikes are allowed to be abandoned, we use similar techniques to show the following:
    \begin{theorem}
            There is a polynomial time algorithm that constructs an arrival-time optimal schedule for the \RBSP problem when at most one bike can be abandoned.     
    \end{theorem}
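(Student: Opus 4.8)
The plan is to reduce the $\ell=1$ case of \RBSP to two invocations of (variants of) the \BSP algorithm and return the better outcome. If no bike is abandoned, the instance is literally \BSP, solved optimally by Theorems~\ref{thm:bs_ub_ge_tmu} and~\ref{thm:bs_ub_le_tmu}. So the real work is the case in which exactly one bike is abandoned, and the first step is to argue that this bike may be taken to be the slowest one, $v_b$. For a schedule that abandons bike $j$ at position $z$, write down the two quantities that drive the \BSP lower bound, now in their abandonment form: (i) the surviving-bikes bound $1/\min_{i\neq j} v_i$, since the remaining $b-1$ bikes must still reach $1$; and (ii) the average-completion bound $\tfrac1m\!\left(m-\sum_{i\neq j}(1-1/v_i)-z(1-1/v_j)\right)$, the subtracted $z(1-1/v_j)$ reflecting that bike $j$ is ridden only for distance $z$. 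For every fixed $z$ both quantities are smallest when $j=b$: bound (i) because $v_b$ is minimum, bound (ii) because the amount $(1-z)(1-1/v_j)$ of ``saving'' forgone by abandoning is largest when $v_j$ is smallest. Since on the algorithmic side we will match these bounds, and since any schedule abandoning $j\neq b$ is bounded below by its own (no smaller) bounds, abandoning $v_b$ is optimal among all single-abandonment strategies.

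Next, parametrize the abandon-$v_b$ problem by the drop position $z\in[0,1]$. I would show (a normalization lemma in the spirit of Section~\ref{sec:no-wait}) that for fixed $z$ there is an optimal schedule in which nobody waits or moves slower than possible, nobody turns back, and bike $b$ follows a monotone trajectory from $0$ to $z$ --- intuitively, because $v_b>1$ it never hurts to hand bike $b$ to an otherwise-walking agent, and because $v_b$ is the slowest resource it is the first to become useless. This puts the $z$-instance in the following shape: deliver bikes $v_1\ge\cdots\ge v_{b-1}$ together with the $m$ agents across $[0,1]$, with one auxiliary bike of speed $v_b$ that lives only on $[0,z]$ and need not be delivered. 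The optimum value $g(z)$ of this instance is then obtained by re-running the arguments behind Theorems~\ref{thm:bs_ub_ge_tmu} and~\ref{thm:bs_ub_le_tmu} on the modified bike set: the lower bound becomes the maximum of $1/v_{b-1}$, the $z$-dependent average bound above, and a third term accounting for the agent-time irrevocably spent bringing bike $b$ to $z$ and then still crossing the rest of the interval; the scheduling construction, adapted so that the auxiliary bike is passed forward along $[0,z]$, matches this maximum.

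It remains to optimize $g(z)$ over $z\in[0,1]$. The key structural fact to establish is that, as $z$ increases from $0$ to $1$, the combinatorial type of the schedule produced by the construction --- which bike relays which agent, and in what order --- changes only at polynomially many explicitly computable values of $z$; on each resulting interval, $g$ is an explicit rational function of $z$ (the solution of a small linear system equalizing the agents' arrival times). Hence $\min_{z} g(z)$ is found by evaluating $g$ at all these breakpoints and at the stationary point inside each piece. Comparing this minimum with the $\ell=0$ value from the \BSP algorithm and outputting the better schedule yields a polynomial-time optimal algorithm for \RBSP with one abandonment.

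I expect the middle step to be the crux. Proving the normalization lemma for the auxiliary bike, and in particular ruling out that bike $b$ could usefully be carried past $z$ and back or re-collected after being dropped (handled by no-backtracking and no-waiting), is routine but must be done carefully; the genuinely delicate part is assembling the matching lower bound for the $z$-instance out of the three terms above --- this is where the analysis of the \BSP case is reused rather than merely cited --- and then controlling the dependence of the optimal combinatorial structure on $z$ well enough that the final optimization is a search over a polynomial-size candidate set.
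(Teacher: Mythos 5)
Your overall skeleton matches the paper's: compare the no-abandonment \BSP solution against the best schedule that abandons only the slowest bike at a drop position $z$ chosen to balance the average-completion bound against the abandoning agent's own travel-time bound $zu_b+(1-z)u_1$. The paper packages this balance as $T_1(m,U)=\min_{y}\max\{T(m,U,[1,\ldots,1,y]),\ yu_b+(1-y)u_1\}$ and shows the optimum is $\max\{u_{b-1},T_1(m,U)\}$ whenever $u_b>T(m,U)$. Two remarks on the parts you do argue. First, your justification that the abandoned bike may be taken to be bike $b$ is miscalibrated: of your bounds, only (i) and (ii) are minimized at $j=b$; the third bound you later introduce, $zu_j+(1-z)u_1$, is \emph{maximized} at $j=b$, so ``all the lower bounds are smallest for $j=b$'' is not true. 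The conclusion survives for a simpler reason, which is how the paper argues it (Lemma~\ref{lem:t1mu_lb}): any schedule in which bike $b$ reaches the end already has $\tau\ge u_b$, and $u_b$ dominates everything the abandon-$b$ construction achieves, so no per-$z$, per-$j$ comparison is needed. Second, your closing optimization over $z$ is much heavier than necessary: the relevant lower bound is the max of two linear functions of $z$ moving in opposite directions (plus the constant $u_{b-1}$), so the minimizer is the solution of a single linear equation (Lemma~\ref{lm:T_1(m,U)}); no breakpoint enumeration or piecewise-rational analysis is required.

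The genuine gap is precisely the step you flag as the crux and then leave as a claim: that for each $z$-instance the three-term lower bound is attained by ``adapting'' the \BSP construction. That is where essentially all of the paper's work lies, and it is not a uniform adaptation --- a further case split on whether $u_{b-1}\le T_1(m,U)$ is needed. When $u_{b-1}\le T_1(m,U)$ the paper builds \AllButOne{} (agent $m$ rides bike $b$ alone to $z_1$, switches to bike $1$, and a cascade of absorptions follows), and feasibility requires verifying the precondition $u_{q+i}\le T(m_{q+i},U'_{q+i})$ for every recursive \AllMakeIt{} call and exhibiting a valid unexpanded partition (Theorem~\ref{thm:rbs_ub_gt_tmu_ub1_lt_t1mu}). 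When $u_{b-1}>T_1(m,U)$ the three-term max equals $u_{b-1}$ but is \emph{not} attained at the $z$ that equalizes the other two bounds; the paper instead peels off $k$ dedicated riders for bikes $b-k,\ldots,b-1$ via the existence statement of Lemma~\ref{lem:ub_gt_t1mu} and recurses on a smaller instance (Theorem~\ref{thm:rbs_ub_gt_tmu_ub1_gt_t1mu}). Your proposal does not anticipate this second regime, and a construction that tries to match your three bounds at every $z$ by simply relaying the auxiliary bike forward would fail there, because the surviving slow bikes cannot be absorbed into the synchronized group without violating the precondition of \AllMakeIt. So the reduction-and-search outline is sound, but the matching upper bound --- the actual content of the theorem --- is missing.
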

    \begin{proof}
            Follows from Theorems~\ref{thm:rbs_ub_lt_tmu}, \ref{thm:rbs_ub_gt_tmu_ub1_lt_t1mu} and \ref{thm:rbs_ub_gt_tmu_ub1_gt_t1mu}. 
    \end{proof}
    Finally, we show that the \RBSP problem can be solved efficiently and optimally for some special cases of bike speeds; we state the result  informally here; see Section~\ref{sec:rbs} for details.         
    \begin{theorem}[informal]\label{thm:informal}
            There is a polynomial time algorithm that constructs an arrival-time optimal schedule for the \RBSP problem when at most one bike is ``slow."    
    \end{theorem}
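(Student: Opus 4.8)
The plan is to leverage the two algorithms already in hand --- the \BSP algorithm of Theorem~1 (no bike abandoned) and the $\ell=1$ \RBSP algorithm of Theorem~2 (at most one bike abandoned) --- by showing that, under the hypothesis, no \RBSP optimal schedule needs to abandon more than one bike. The precise definition of a \emph{slow} bike is the one given in Section~\ref{sec:rbs}; the intended meaning is that bike $i$ is slow when its speed $v_i$ lies below a threshold computed from $m$ and the faster bikes $v_1,\dots,v_{i-1}$, small enough that it is never advantageous to carry it all the way to $1$. With the bikes sorted $v_1 \ge v_2 \ge \cdots \ge v_b$, I would first record that slowness is monotone in the index: deleting a bike only tightens the average-speed constraint while leaving the lone-crossing time $1/v_i$ of bike $i$ unchanged, so if bike $i$ is not slow it stays not slow in any sub-instance obtained by deleting other bikes. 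Consequently the slow bikes form a suffix $\{v_{k+1},\dots,v_b\}$ of the sorted list, and ``at most one bike is slow'' means this suffix is empty or equals $\{v_b\}$.

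The engine of the proof is the following exchange lemma: \emph{if bike $i$ is not slow, then among the \RBSP schedules optimal for the instance there is one that carries bike $i$ to the endpoint $1$.} I would prove it by taking an optimal schedule $S$ that abandons a not-slow bike $i$ at some point $x<1$ --- in the canonical no-wait, full-speed form of Section~\ref{sec:no-wait} --- and editing it so that bike $i$ reaches $1$ without any agent arriving later. The edit re-assigns, over the sub-interval $[x,1]$, some of the rides the agents perform on other bikes so as to ``pull'' bike $i$ forward, compensating the affected agents by letting them ride bike $i$ or walk on the stretches thus freed; that such a re-balancing exists \emph{and does not increase the makespan} is precisely what the non-slowness of bike $i$ guarantees, since the threshold defining slowness is calibrated exactly so that a non-slow bike can be absorbed into the remaining schedule over $[x,1]$ without delay. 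The delicate part, and where I expect the real work to concentrate, is this re-balancing: one must check the edited schedule stays feasible (every bike present where and when an agent is told to mount it), which needs the canonical form together with a careful left-to-right sweep of the interval, and one must align the accounting with whatever explicit threshold the Section~\ref{sec:rbs} definition uses, so that ``not slow'' is literally the condition under which the sweep closes.

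Given the lemma, the algorithm and its correctness are short. If there is no slow bike, then applying the lemma to every bike in turn yields an \RBSP optimal schedule that carries all bikes to $1$, i.e.\ a \BSP schedule, so the output of Theorem~1 is \RBSP optimal. If the unique slow bike is $v_b$, then applying the lemma to $v_1,\dots,v_{b-1}$ yields an \RBSP optimal schedule that carries all of these to the endpoint, hence abandons at most the single bike $v_b$; this is exactly an instance of \RBSP with $\ell=1$, solved optimally in polynomial time by the procedure behind Theorems~\ref{thm:rbs_ub_lt_tmu}, \ref{thm:rbs_ub_gt_tmu_ub1_lt_t1mu}, and \ref{thm:rbs_ub_gt_tmu_ub1_gt_t1mu}. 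Deciding ``no slow bike'' versus ``$v_b$ slow'' takes polynomial time from the explicit slow-ness test, and each branch runs a single polynomial-time subroutine, so the whole algorithm is polynomial. As noted, the sole real obstacle is the exchange lemma and its tight coupling to the definition of a slow bike.
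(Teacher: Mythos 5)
Your high-level plan --- show that no optimal \RBSP schedule needs to abandon a non-slow bike, and then hand the instance to the \BSP algorithm or to the $\ell=1$ machinery --- arrives at the right conclusion, but the entire burden of the argument rests on your exchange lemma, and that lemma is not proven. You yourself flag the re-balancing sweep as ``where the real work concentrates''; that work is the theorem. Local surgery on a feasible schedule is genuinely delicate here: pulling bike $i$ forward over $[x,1]$ means some agent must be physically present at $x$ at the right time to remount it, the agents whose rides you reassign must still find their bikes where the edited matrix says they are, and the makespan must not grow. None of this is exhibited. There is also an iteration problem you do not address: after editing the schedule so that bike $i$ reaches $1$, you must argue that the bikes you secured in earlier rounds are still carried to the end, i.e.\ the lemma must preserve the set of already-rescued bikes or all bikes must be handled simultaneously; ``applying the lemma to every bike in turn'' silently assumes this.

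The paper avoids the exchange argument entirely. It proves a lower bound valid for \emph{every} feasible schedule, with \emph{any} number of abandoned bikes: writing $Y=[y_1,\dots,y_b]$ for the abandonment vector, an averaging (total time travelled) argument gives $\tau \ge T(m,U,Y) = 1 - \tfrac{1}{m}\sum_k (1-u_k)y_k$ (Lemma~\ref{lem:lbs-for-rbsprob}), and $T(m,U,\cdot)$ is monotone decreasing in $Y$ (Claim~\ref{claim:lb_monotonicity}), so abandoning a bike earlier only worsens this bound. Combined with the pointwise bound $\tau \ge y_b u_b + (1-y_b)u_1$, this shows every schedule whatsoever has $\tau \ge T(m,U)$ when $u_b \le T(m,U)$, and $\tau \ge T_1(m,U)$ otherwise --- and the schedules produced by $\AllMakeIt^*$ and $\AllButOne$ meet these bounds exactly (Theorems~\ref{thm:rbs_ub_lt_tmu} and~\ref{thm:rbs_ub_lt_t1mu}). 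The conclusion of your exchange lemma then falls out a posteriori, because the constructed optimal schedule happens to carry every non-slow bike to the end; no schedule editing is ever performed. If you want to salvage your route, you would need to actually carry out the sweep and the preservation argument, which I expect to be substantially harder than the averaging bound the paper uses.
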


    The outline of the paper is as follows. In Section~\ref{sec:prelim} we introduce an intuitive matrix representation of bike sharing schedules and formally define the problems we are studying.  In Sections~\ref{sec:bs} and ~\ref{sec:rbs} we give our algorithms for the \BSP and \RBSP problems respectively. We then justify our initial assumption that agents always move at the maximum speed and never wait in Section~\ref{sec:no-wait}. Finally, we conclude the paper in Section~\ref{sec:conclusion} with a discussion of open problems.

\section{Definitions and preliminary observations}\label{sec:prelim}
 
    We assume there are $m$ agents and $b$ bikes with speeds $> 1$ on a line. The bikes and agents are initially located at the origin $0$.  All agents have the same walking speed of $1$ and will travel at speed $v_i>1$ when using the bike $i$. An agent can ride at most one bike at a time, and a bike can be ridden by at most one agent at a time. An agent can use a  bike for a portion of the trip and at any time  drop the bike. If an agent comes across bike $i$ while walking or riding a different bike $j$, it may choose to pick up bike $i$ (and drop bike $j$), and continue the trip on bike $i$. We assume that  pickups and dropoffs happen instantaneously, and that   any number of agents and bikes can be present at the same point simultaneously. Thus, bikes can be dropped without blocking other bikes or agents, and agents and bikes can pass each other at will. In the sequel we will assume that the agents always move at the maximum speed allowed by walking/biking, and never stop and wait. 

    When necessary, we will use $v_i$ to represent the speed of the $i^{th}$ bike, however, we will find it much more useful to work with the inverse speeds of the bikes -- which we represent by $u_i \equiv 1/v_i$. Thus, we will assume that the (multi)set $U$ lists the inverse speeds of the bikes, i.e. $U = \{u_1,u_2,\ldots,u_b\}$. An input to both \BSP and \RBSP problems is then represented by a pair $(m,U)$ where $m$ is the number of agents and the number of bikes is $b=|U|$. 
    We will assume without loss of generality that the bikes are labelled in increasing order of their inverse speeds (decreasing order of speeds). Thus, bike $1$ is the fastest bike and bike $b$ is the slowest bike. We reserve label $0$ to denote absence of a bike, i.e., walking. Thus, the inverse speed associated with this label is $1$.

    We say that bike $i$ is \emph{dropped} at location $y \in [0,1]$ if an agent leaves the bike at location $y$ and it is picked up by another agent at a later time. We say that bike $i$ is \emph{abandoned} at location $y$ if an agent leaves the bike at location $y$ and it is never picked up by another agent. Note that bikes cannot be abandoned in the \BSP problem but may  be abandoned in solutions to the \RBSP problem. 

    A solution to either problem is a schedule  that should specify for each agent $i$ a partition of the entire interval $[0,1]$ and for each block of the partition whether the agent walks or uses a particular bike. By taking a common refinement of all these partitions we may assume that all partitions are exactly the same. The information about which bike gets used during which block of the partition and by which agent is collected in a single matrix. Thus, we define a \emph{schedule} as a pair $(X,M)$ where
    \begin{itemize}
        \item $X = (x_1, x_2, \ldots, x_n)$ is a \emph{partition vector} satisfying $\forall j \in [n] \;\; x_j \ge 0$ and $\sum_{j = 1}^n x_j = 1$;
        \item $M$ is a \emph{schedule matrix} of size $m \times n$ such that entry $M(i, j)$ indicates the label of the bike used by agent $i$ during the interval $x_j$ or $0$ if agent $i$ walked during the interval $x_j$.
    \end{itemize}
    By a slight abuse of notation, we use $x_j$ to refer to both the $j^\text{th}$ interval of the partition as well as its length. 

    We define the \emph{size of a schedule} as $n$ -- the number of columns in the schedule matrix $M$ (also the number of entries in the partition vector). 
    While the number of rows $m$ will be the same for any schedule, a priori it is not even clear that there has to be an optimal schedule with a finite number of columns. Clearly, it is desirable to minimize the size of a schedule. At the very least, the size should be at most polynomial in the number of agents, i.e. $n = O(\poly(m))$.

    The matrix $M$ gives rise to an induced matrix of inverse speeds $\widetilde{M}$. More specifically, for $i \in [m]$ and $j \in [n]$ the entry $M(i,j) = k$ implies $\widetilde{M}(i,j) = u_k$ when $k \in [b]$, and $M(i,j) = 0$ implies $\widetilde{M}(i,j) = 1$. 
    It will be convenient for us to treat these representations as interchangeable, which is easy to implement. 

    The main utility of the above definitions is that they allow one to easily compute the completion times of the agents. Indeed, the completion time $t_i$ of agent $i$ is computed as the $i^\text{th}$ row of the matrix product $\widetilde{M} \cdot X$, i.e.
    \begin{equation}\label{eq:ti_general}
        t_i(X,M) = \sum_{j=1}^{n} \widetilde{M}(i,j) x_j.
    \end{equation}
    The time $t_{i,j}$ at which agent $i$ reaches the end of interval $x_j$ can be similarly computed using
    \begin{equation}\label{eq:tij_general}
        t_{i,j}(X,M) = \sum_{k=1}^{j} \widetilde{M}(i,k) x_{k}.
    \end{equation}
    We call $t_{i,j}$ the $j^{th}$ {\em completion time} of agent $i$. Note that $t_i = t_{i,n}$. The overall completion time will be represented by $\tau$ and equals the maximum of the $t_i$,
    \begin{equation}\label{eq:tau_general}
        \tau(X,M) \equiv \max_i t_i.
    \end{equation}
    We may omit $(X,M)$ when it is clear from the context and simply write $t_i, t_{i,j},$ and $\tau$. We shall sometimes refer to the completion time as \emph{arrival time}.

    Not every schedule $(X,M)$ is feasible. In particular, feasibility requires that (1) a bike cannot move on its own, (2) a bike is used by at most one agent during a particular interval of the partition, and (3)  a bike is available when an agent is supposed to use that bike. In light of the introduced notation and terminology we can formalize the above three conditions as follows:
    \begin{definition}\label{def:feasible}
        $(X,M)$ represents a feasible schedule if:
        \begin{enumerate}
            \item $M(i,j) \neq 0$ implies that $\exists\ i'\;\; {M}(i',j-1) = {M}(i,j)$,
            \item ${M}(i,j) \neq 0$ implies that $\forall\ i' \neq i \;\; {M}(i',j) \neq {M}(i,j)$,
            \item ${M}(i,j) = {M}(i',j-1) \neq 0$ implies that $t_{i',j-1} \leq t_{i,j-1}$.
        \end{enumerate}        
    \end{definition}        

    In a feasible schedule $(X,M)$ we say that \emph{bike $k$ makes it to the end} if this bike is used during the last interval, i.e., $\exists i$ such that ${M}(i, n) = k$.
        
    We are now ready to state the \BSP and \RBSP problems formally:

    \algoprob{The \BS Problem}{$m \ge 1$ -- number of agents; $b \ge 0$ -- number of bikes; $0 < u_1 \le u_2 \le \cdots \le u_b < 1$ -- inverse speeds of bikes.}{A feasible schedule $(X,M)$ such that every agent and bike makes it to the end.}{Minimize $\tau(X,M)$.}

    \algoprob{The \RBS Problem}{$m \geq 1$ -- number of agents; $b \geq 0$ -- number of bikes; $0 < u_1 \le u_2 \le \cdots \le u_b < 1$ -- inverse speeds of bikes.} {A feasible schedule $(X,M)$ such that all agents reach the end.}{Minimize $\tau(X,M)$.}

    Since many different schedules may achieve the same completion time, it is important to consider some complexity measures other than computational time that can be used to compare schedules. We already defined the {\em size} of a schedule and it can be considered as one such measure. Another measure is the {\em switch complexity} of a schedule, which is defined as the total number of {\em bike switches} in the schedule. We distinguish between two different kinds of switches: (1) \emph{a drop-switch} when an agent drops a bike and starts walking, and (2) \emph{a pickup-switch} when an agent picks up a new bike and the agent was either walking or using a different bike immediately prior to the switch. 

    Formally, we say that agent $i$ performs \emph{a drop-switch} between the intervals $x_{j-1}$ and $x_j$ if ${M}(i,j)=0$ and ${M}(i,j-1)\neq 0$.  We say that agent $i$ performs a \emph{pickup-switch} switch with agent $i' \neq i$ between the intervals $x_{j-1}$ and $x_j$ if ${M}(i,j) = {M}(i',j-1) \neq 0$.

    So far we have stated that the agents have to traverse the interval $[0,1]$. We could instead consider schedules for intervals of the form $[0,a]$ with obvious modifications to the definitions of a schedule and feasibility. Due to the linear nature of constraints and arrival times, a feasible schedule for the interval $[0,a]$ can be easily transformed into a feasible schedule for another interval $[0,b]$ by scaling:

    \begin{claim}\label{claim:scaling}
        Let $a,b \in \mathbb{R}_{>0}$ and let $(X,M)$ be a feasible schedule for the interval $[0,a]$. Define $X'$ to be the scaled version of $X$:
        $X' = (b/a) X$.
        Then $(X', M)$ is a feasible schedule for the interval $[0,b]$. Moreover, for all $i$ we have $t_i(X',M) = (b/a) t_i(X, M)$. In particular $\tau(X',M) = (b/a) \tau(X,M)$.
    \end{claim}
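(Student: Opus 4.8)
The plan is to unwind the definitions and observe that every quantity in sight scales linearly with the factor $b/a$, so nothing can break. First I would check that $X'$ is a legitimate partition vector for $[0,b]$: since $b/a>0$ and every $x_j\ge 0$, each entry $x'_j=(b/a)x_j$ is nonnegative, and $\sum_{j=1}^n x'_j=(b/a)\sum_{j=1}^n x_j=(b/a)\cdot a=b$, which is exactly what the interval-$[0,b]$ version of the definition of a partition vector requires.

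Next I would note that the schedule matrix $M$, and hence the induced inverse-speed matrix $\widetilde M$, is literally the same object in $(X,M)$ and $(X',M)$ — only the partition vector changes. Plugging into \eqref{eq:tij_general} then gives, for all $i,j$,
\[
t_{i,j}(X',M)=\sum_{k=1}^{j}\widetilde M(i,k)\,x'_k=\frac{b}{a}\sum_{k=1}^{j}\widetilde M(i,k)\,x_k=\frac{b}{a}\,t_{i,j}(X,M).
\]
Taking $j=n$ yields $t_i(X',M)=(b/a)\,t_i(X,M)$ for every $i$, and then \eqref{eq:tau_general} gives $\tau(X',M)=\max_i t_i(X',M)=(b/a)\,\tau(X,M)$, which is the ``moreover'' part of the claim.

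It remains to verify the three feasibility conditions of Definition~\ref{def:feasible} for $(X',M)$. Conditions (1) and (2) are assertions purely about the entries of $M$, which is unchanged, so they are inherited verbatim from the feasibility of $(X,M)$. For condition (3), if $M(i,j)=M(i',j-1)\neq 0$, then feasibility of $(X,M)$ gives $t_{i',j-1}(X,M)\le t_{i,j-1}(X,M)$, and multiplying both sides by the strictly positive constant $b/a$ — using the scaling identity just derived — yields $t_{i',j-1}(X',M)\le t_{i,j-1}(X',M)$. Hence $(X',M)$ is feasible for $[0,b]$. There is no genuine obstacle here; the only point needing a moment's care is that condition (3) is an inequality between arrival times rather than a combinatorial condition on $M$, but it survives scaling precisely because $b/a>0$. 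This claim will presumably be invoked later to normalize attention to the unit interval, or to splice a schedule built on a sub-interval back into a global one.
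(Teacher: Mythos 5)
Your proof is correct and follows exactly the route the paper has in mind: the paper states this claim without proof, justifying it only by ``the linear nature of constraints and arrival times,'' and your argument simply makes that linearity explicit. In particular, your observation that conditions (1) and (2) of Definition~\ref{def:feasible} depend only on $M$ while condition (3) is an inequality preserved under multiplication by the positive constant $b/a$ is precisely the content the authors left implicit.
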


    We can shift starting locations of agents and bikes (this obviously doesn't affect the arrival times) and work with schedules for arbitrary intervals $[a,b]$ such that $a < b$. First of all, these observations show that our focus on interval $[0,1]$ is without loss of generality. Second of all, our algorithms make an extensive use of these observations when recursively created schedules get used in certain subintervals of $[0,1]$.

    For the presentation of our algorithms in pseudocode, we introduce the following notation. $M(1:m,j)$ is used to indicate the $j^\text{th}$ column of $M$ and $M(i,j:n)$ is used to indicate the suffix of length $n-j+1$ of row $i$ of $M$. The notation $[X'; X(j)]$ is used to indicate the (column) vector $X'$ with the entry $X(j)$ adjoined to its end. Similarly, the notation $[M', M(1:m,j)]$ is used to indicate the matrix $M'$ with the column $M(1:m,j)$ adjoined to its end.

    We finish this section with an important observation. Notice that a solution to the \BSP problem naturally has two components: the continuous component represented by the partition vector, and the discrete component represented by the schedule matrix. There is a tension between these components. On one hand, the continuous component suggests that we may try approaching the \BSP problem using powerful tools  such as linear or convex programming. On another hand, the number of possible schedule matrices is exponentially large and this suggests that maybe hardness of some combinatorial optimization problems might be encoded in schedule matrices. Moreover, these components are not independent and they have interesting interactions. These observations suggest that the core of the problem lies in finding an optimal discrete component and then figuring out its interaction with the continuous component. Thus, intuitively we should focus on designing a ``good'' schedule matrix $M$. We formalize this observation as follows: 

    \begin{lemma}\label{lm:semi_opt}
        For any schedule matrix $M$  satisfying parts 1 and 2 of Definition~\ref{def:feasible}, we can in polynomial time find a partition vector $X$ that makes schedule $(X_M,M)$ feasible and achieves the smallest completion time among all schedules with the same schedule matrix $M$. 
    \end{lemma}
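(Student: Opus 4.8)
The plan is to fix the schedule matrix $M$ (hence the induced inverse-speed matrix $\widetilde{M}$) and observe that, once $M$ is fixed, both the feasibility constraints and the completion times are \emph{linear} in the partition vector $X$. Concretely, by \eqref{eq:ti_general} each $t_i$ is the linear form $(\widetilde{M} X)_i$, by \eqref{eq:tij_general} each prefix time $t_{i,j}$ is $\sum_{k\le j}\widetilde{M}(i,k)x_k$, and the only feasibility condition that still needs to be enforced for a matrix already satisfying parts 1 and 2 of Definition~\ref{def:feasible} is part 3: whenever $M(i,j)=M(i',j-1)\neq 0$ we need $t_{i',j-1}\le t_{i,j-1}$, i.e. $\sum_{k\le j-1}\bigl(\widetilde{M}(i',k)-\widetilde{M}(i,k)\bigr)x_k\le 0$. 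That is a finite system of linear inequalities in $X$. Together with $x_j\ge 0$ and $\sum_j x_j = 1$, and the objective $\min \tau = \min \max_i t_i$ (linearized in the usual way by a fresh variable $\tau$ with constraints $\tau \ge (\widetilde{M}X)_i$ for all $i$), we obtain a linear program whose feasible region is nonempty (it contains, e.g., the all-walking schedule $x_j$ obtained from any partition with $M\equiv 0$ — or more carefully, any $X$ in the relative interior of the simplex makes the strict-inequality version of part 3 satisfiable, and then part 3 as a non-strict inequality is certainly satisfiable). Solving this LP in polynomial time (Khachiyan/Karmarkar, or simply because the LP has $O(n)$ variables and $O(mn)$ constraints with rational data of size polynomial in the input) yields an optimal $X_M$.

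The key steps, in order, are: (1) write down the LP explicitly — variables $x_1,\dots,x_n,\tau$; constraints $x_j\ge 0$, $\sum_j x_j=1$, the part-3 inequalities above, and $\tau\ge \sum_j \widetilde{M}(i,j)x_j$ for each $i\in[m]$; objective $\min \tau$; (2) check the LP is feasible — exhibit at least one partition vector satisfying all constraints (the uniform vector $x_j = 1/n$ works, since then each part-3 inequality reads $\frac1n\sum_{k\le j-1}(\widetilde M(i',k)-\widetilde M(i,k))\le 0$, which may fail, so instead argue that feasibility of \emph{some} $X$ follows from the hypothesis that $M$ arose from a genuine refinement of feasible agent trajectories — or, cleanest, note that the statement only needs to produce $X$ when one exists, and when it does not the algorithm reports infeasibility); (3) invoke a polynomial-time LP solver to get an optimal vertex solution $X_M$; (4) conclude that $(X_M,M)$ is feasible by construction and that $\tau(X_M,M)$ is the minimum of $\tau(X,M)$ over all partition vectors $X$ making $(X,M)$ feasible, since the LP is an exact reformulation.

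The main obstacle is step (2): one must be careful that the LP is actually feasible, i.e. that there \emph{is} a partition vector realizing a given combinatorially-valid $M$. A matrix satisfying parts 1 and 2 of Definition~\ref{def:feasible} need not be completable to a feasible schedule for \emph{every} choice of interval lengths, but the lemma only asserts we can find a good $X$ when a feasible one exists (and our algorithms will only ever feed in matrices known to be completable); the LP transparently detects infeasibility otherwise. A secondary point worth a sentence is the bit-size bound: the part-3 coefficients $\widetilde M(i,k)-\widetilde M(i,k')$ are differences of the given inverse speeds $u_\ell$, so the LP has input size polynomial in the original instance, and a polynomial-time LP algorithm then runs in time polynomial in $m,n$ and the encoding length of the $u_\ell$'s — giving the claimed polynomial running time. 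Everything else (writing $\tau$ as a max of linear forms, the simplex constraint) is routine.
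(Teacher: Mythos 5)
Your proposal is correct and matches the paper's proof essentially verbatim: the paper also fixes $M$, observes that $t_i$ and $t_{i,j}$ are linear in $X$, and solves exactly the LP you describe (minimize $\tau$ subject to $\tau\ge t_i$, the part-3 inequalities, $x_j\ge 0$, $\sum_j x_j=1$). Your worry in step (2) about LP feasibility is moot --- the LP is always feasible for any $M$ satisfying parts 1 and 2, e.g.\ by taking $x_n=1$ and $x_j=0$ for $j<n$, which makes every constraint $t_{i',j-1}\le t_{i,j-1}$ read $0\le 0$ --- but this does not affect the correctness of your argument.
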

    \begin{proof}
        Fix $M$ and compute an optimal choice of $X$ for the given $M$ by solving the following optimization problem:
        \begin{align}\label{eq:LP2}
            \mbox{Minimize: }& \tau \nonumber\\
            \mbox{Subject to: }& \tau \geq t_i,& 1 \leq i \leq m \nonumber\\
            &t_{i',j-1} \leq t_{i,j-1}& \mbox{ whenever } M(i,j)=M(i',j-1)\neq 0,\ i\neq i'\\
            &x_j \geq 0,& 1\leq j \leq n,\nonumber\\
            & \sum_{j=1}^n x_j = 1.\nonumber
        \end{align}
        The constraints encode that $X$ must be a partition of the interval $[0,1]$ and that schedule $(X,M)$ must satisfy Definition~\ref{def:feasible} (notice that other parts of Definition~\ref{def:feasible} refer only to $M$). The objective is to minimize latest completion time. Observe that $t_i$ and $t_{i,j}$ are not variables in the above problem, but are just shorthands for the linear functions of variables $X$ as defined by Equation~\eqref{eq:ti_general} and \eqref{eq:tij_general}.
    \end{proof}
    We note that it is not guaranteed that an optimal solution to the LP \eqref{eq:LP2} is unique. However, at least one of these optimal solutions is guaranteed to reside at a vertex of the feasible region of \eqref{eq:LP2}. Let  $X_M$ be an optimal solution that corresponds to an optimal vertex of the LP \eqref{eq:LP2}. We will call $X_M$ an {\em induced partition vector} of $M$.

\section{Optimal and Efficient Algorithm for the \texorpdfstring{\BSP}{BS} Problem}\label{sec:bs}
    In this section we present a polynomial time algorithm that solves the \BS problem optimally, i.e., our algorithm produces a feasible schedule that minimizes the latest arrival time. 

    \emph{High-level overview of the algorithm.} The structure of our algorithm is quite involved and it relies on several subroutines, so we do not present the entire algorithm all at once. We build up towards the complete algorithm in stages:
    \begin{enumerate}
        \item We begin by deriving a lower bound $T$ on the minimum completion time of any feasible schedule. It turns out the relationship between the inverse speed of the slowest bike $u_b$ and $T$ controls the structure of an optimal schedule matrix for the \BS problem. 
        \item We first tackle the case when $u_b \le T$. We describe a subroutine \textsc{AllMakeIt} that calls itself recursively to build up an optimal schedule matrix for the case of $u_b \le T$. We prove the optimality of this subroutine and compute the size of the final schedule that this subroutine outputs. 
        \item Unfortunately, the \textsc{AllMakeIt} subroutine creates a schedule of size that is exponential in the number of bikes. We describe two more subroutines -- \textsc{Standardize} and \textsc{Reduce} that allow transforming any schedule of potentially large size $n$ into an equally or more efficient schedule of size $n' \le m$.
        \item We finish the case $u_b \le T$ by describing two modifications to \textsc{AllMakeIt} resulting in a polynomial time algorithm. First, we use the \textsc{Reduce} subroutine to prevent the size of the schedule from growing exponentially. Second, we observe that recursive calls have overlapping substructures, so we can rewrite the algorithm with the help of a dynamic programming table. With these modifications, the altered \textsc{AllMakeIt} subroutine runs in polynomial time and produces a schedule of size $\le m$.
        
        \item Then we tackle the case of $u_b > T$ by showing that it can be reduced, in a certain sense, to the case of $u_b \le T$. This completes the description of the algorithm.
    \end{enumerate}

    \subsection{Lower Bound on the Arrival Time for the \BS Problem}
        Define $T(m, U)$ as follows:
        \begin{equation}\label{eq:TmU}
            T(m,U) := 1 - \frac{1}{m} \sum_{j=1}^b (1-u_j)
        \end{equation}
        The following lemma shows that $T(m,U)$ is a lower bound on the optimum for the \BS problem. For ease of reference, we also record another trivial lower bound in the statement of the lemma, observe a sufficient condition for the schedule to satisfy $\tau(X,M) = T(m,U)$, and give two useful properties that result from the definition of $T(m,U)$.

        \begin{lemma} \label{lem:lbs-for-bsprob}
            Let $(X,M)$ be an arbitrary feasible schedule for the \BS problem with inputs $m, b,$ and $U$. Then we have
            \begin{enumerate}
                \item $\tau(X,M) \ge u_b$,
                \item $\tau(X,M) \ge T(m,U)$,
                \item\label{lempart:same-arrival} $\tau(X,M) = T(m,U)$ if and only if all agents have the same arrival time in $(X,M)$.
                \item\label{lempart:min_u1} $u_1 \leq T(m,U)$.
                \item\label{lempart:TmU_order} $T(m,U) \leq T(m-1,U\setminus \{u_k\})$ iff $u_k \leq T(m,U)$ with equality only when $u_k = T(m,U)$.
            \end{enumerate}
        \end{lemma}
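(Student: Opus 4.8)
The plan is to prove the five parts essentially separately, noting that parts~2 and~3 both follow from a single exact identity while parts~1,~4 and~5 are short direct arguments. For part~1, I would observe that in any feasible \BSP schedule bike $b$ must travel from $0$ to $1$; since a bike is immobile unless ridden and, while ridden, moves at speed exactly $v_b = 1/u_b$, and since the partition-vector representation only encodes forward motion, bike $b$ covers total distance $1$ at speed $v_b$ and so cannot arrive before time $u_b$. As $\tau$ is the maximum arrival time over all agents and bikes, $\tau(X,M) \ge u_b$.

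The heart of parts~2 and~3 is the identity $\sum_{i=1}^m t_i = m\,T(m,U)$, valid for every feasible \BSP schedule. To prove it, I would rewrite $\sum_i t_i = \sum_{i,j}\widetilde M(i,j)\,x_j$ and regroup by bike label: the contribution of label $0$ is the total walking distance of all agents, and the contribution of each label $k\in[b]$ is $u_k$ times the total distance ridden on bike $k$. The key sub-claim is that this total distance ridden on bike $k$ equals exactly $1$: bike $k$'s position is non-decreasing (forward motion only), runs from $0$ to $1$, and increases by $x_j$ precisely on those intervals $x_j$ in which it is ridden, while part~2 of Definition~\ref{def:feasible} ensures at most one agent rides it per interval, so there is no double counting and hence $\sum_{j:\,k\text{ used in }x_j}x_j = 1$. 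Summing, the total biking distance over all bikes is $b$ and the total walking distance is $m-b$, so $\sum_i t_i = (m-b) + \sum_{k=1}^b u_k = m - \sum_{k=1}^b(1-u_k) = m\,T(m,U)$. From this, $\tau = \max_i t_i \ge \frac1m\sum_i t_i = T(m,U)$, which is part~2; and $\tau = T(m,U)$ holds iff the maximum of the $t_i$ equals their average, i.e. iff all $t_i$ are equal (one direction is immediate from the identity, the other from the fact that equality of a maximum with an average forces all terms equal), which is part~3.

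Parts~4 and~5 are arithmetic from $T(m,U) = 1 - \frac1m\sum_{j=1}^b(1-u_j)$. For part~4, $u_1$ is the smallest inverse speed, so $1-u_j \le 1-u_1$ for each $j$, and with $b\le m$ we get $\sum_{j=1}^b(1-u_j) \le b(1-u_1) \le m(1-u_1)$, which rearranges exactly to $u_1 \le T(m,U)$. For part~5, set $S := \sum_{j=1}^b(1-u_j)$, so that $T(m,U) = (m-S)/m$ and $T(m-1,U\setminus\{u_k\}) = (m-S-u_k)/(m-1)$; since both denominators are positive, $T(m,U)\le T(m-1,U\setminus\{u_k\})$ is equivalent, after cross-multiplying and simplifying, to $u_k \le (m-S)/m = T(m,U)$, and the inequality is tight on one side iff it is tight on the other, i.e. iff $u_k = T(m,U)$.

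I do not anticipate a genuine obstacle here; the only point requiring care is the sub-claim that each bike is ridden over total distance exactly $1$, which rests on the immobility of unridden bikes and the forward-only nature of the partition-vector representation — both already part of the model — so the argument is essentially bookkeeping.
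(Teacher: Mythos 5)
Your proposal is correct and follows essentially the same route as the paper: the same counting identity $\sum_i t_i = (m-b) + \sum_k u_k = m\,T(m,U)$ (resting on each bike travelling total distance exactly $1$) combined with the max-versus-average observation gives parts~2 and~3, and part~1 is the same slowest-bike argument. The only divergence is cosmetic: for parts~4 and~5 you substitute direct algebraic manipulation of the formula for $T(m,U)$ where the paper invokes one-line facts about averages (the average is at least the minimum $t_i \ge u_1$; removing an element below the average of $U \cup \{1,\dots,1\}$ raises it), and both versions are valid.
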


        \begin{proof} 
            \begin{enumerate}
                \item This part follows from the fact that $u_b$ time is required for the slowest bike $b$ to travel from $0$ to $1$, which this bike is required to do.
                \item The total distance travelled by all agents in $(X,M)$ is $m$, since each agent travels distance $1$. Since all bikes must make it to the end,  the total distance travelled by each bike is exactly $1$, which implies that the sum of distances walked by all agents is $m-b$. Since walking speed is $1$, the total time agents spend walking is $m-b$, whereas the total time agents spend on bikes is $\sum_j u_j$. Combining these observations, we obtain:
                    $\sum_{i=1}^m t_i(X,M) = (m-b) + \sum_{j=1}^b u_j = m - \sum_{j=1}^b (1-u_j)$.
                Since the maximum is at least the average, we get
                    $\tau(X,M) \ge \frac{1}{m} \sum_{i=1}^m t_i(X,M) = 1 - \frac{1}{m} \sum_{j=1}^b (1-u_j) = T(m,U)$.
                \item Follows from the fact that the maximum $t_i$ is exactly equal to the average if and only if all values are the same.
                \item Follows from the fact that the minimum $t_i \geq u_1$ and the average is at least the minimum.
                \item Observe that $T(m,U)$ is just the average value of $U \cup \{1,\ldots,1\}_{m-b}$. Thus, this part follows from the fact that the removal from a set a member whose value is above/below/equal to the average value results in a new set with an average value smaller than/larger than/equal to the average value of the original set.
            \end{enumerate}
        \end{proof}

        In spite of the simplicity of the above lower bounds, they turn out extremely important: in the following sections we show that if $u_b \le T(m, U)$ then there is a feasible schedule $(X,M)$ with completion time $\tau(X,M) = T(m,U)$, and otherwise there is a feasible schedule $(X,M)$ with completion time $\tau(X,M) = u_b$. Putting it together the \BS problem has an optimal schedule that has completion time $\max(u_b, T(m,U))$.

\subsection{Finding an Optimal Schedule for the Case \texorpdfstring{$u_b \le T(m,U)$}{ub <= T(m,U)}}\label{ssec:all_make_it}

    The goal of this section is to present and analyze the algorithm \textsc{AllMakeIt} that on input $(m, U)$ satisfying $u_b \le T(m,U)$ produces a feasible schedule $(X,M)$ with $\tau(X,M) = T(m,U)$.

    In anticipation of the recursive nature of this algorithm, we introduce the following notation: for $k \in \{0, 1, 2, \ldots, b\}$ we define 
    \begin{equation}
        m_k := m-b+k,\ \mbox{ and  }\ \  U_k := \{u_1, \ldots, u_k\}.
    \end{equation} 
    Note that we have $U_0 = \emptyset$ and $U_b = U$.

    We say that a group of agents is \emph{synchronized} at a particular time if they are found to be at the same location at that time. We say that agent $i$ is a \emph{walker} at time $t$ in a given schedule if the agent does not use a bike at time $t$. It may be helpful to refer to Figure~\ref{fig:all_make_it} demonstrating the schedule matrix produced by \textsc{AllMakeIt} while reading the following description of the algorithm.

    At a high level, \textsc{AllMakeIt} has two phases. The goal of the first phase is to get a group of $m-b$ synchronized walkers ahead of all the bikes. At that time the bikes will be in use by the remaining $b$ agents. Moreover, the tardiness of agents on bikes has a particular order: if we make no changes to the schedule after the first phase and let the walkers walk and biking agents  bike, then the agent on bike $1$ will catch up to the walkers first, followed by the agent on bike $2$, and so on. This goal can be achieved by sharing bikes among $m$ bikers as follows. At first, agents $1, \ldots, b$ use bikes while others walk. After travelling a certain distance, each agent on a bike drops their bike to be picked up by an agent immediately behind them. This way the set of agents that use bikes during consecutive intervals propagate from $1, \ldots, b$ to $2, \ldots, b+1$ to $3, \ldots, b+2$, and so on until all the bikes accumulate among the last $b$ agents numbered $m-b+1, \ldots, m$. The intervals of the partition vector during this phase can be arranged so that walkers $1, 2, \ldots, m-b$ are synchronized at the end of the first phase.
    \begin{figure}[!h]
        \centering
        \includegraphics[scale=0.42]{all_make_it_feb7_2.pdf}
        \caption{The schedule matrix constructed by $\textsc{AllMakeIt}(m,U)$. Note that $b<m-b$ in this figure. The first phase corresponds to intervals $z_1, \ldots, z_{m-b}$ and the second phase corresponds to the intervals $z_{m-b+1}, \ldots, z_m$. Each large rectangle indicates a recursive call to the algorithm on the indicated number of agents using the indicated set of bikes. Note that the partition $Z$ refers to the intervals before expanding the recursively constructed submatrices. We describe later how to obtain the true partition $X$  after such an expansion.\label{fig:all_make_it}}
    \end{figure}
    
    In the second phase, the agent using bike $1$ is the first agent on a bike to catch up with the group of walkers. Rather than overtaking the group of walkers, the new agent and the bike ``get absorbed'' by the group. Intuitively, it means that now we have a group of $m-b+1$ agents with $1$ bike that can be shared in the entire group to increase the average speed with which this group can travel. It turns out that the best possible average speed of this group is  exactly $1/T(m_1, U_1)$. Moreover, we can design a schedule which says how this $1$ bike is shared with the group during any interval of length $x$ by taking a schedule produced (recursively) by \textsc{AllMakeIt} on input $(m_1, U_1)$ and scaling this schedule by $x$. This has the added benefit that if the group was synchronized at the beginning of the interval (which it was) the group remains synchronized at the end of the interval. The distance $x$ is chosen so the agent using bike $2$ catches up to the new group precisely after travelling distance $x$. When the agent using bike $2$ catches up to the group, the new agent and the new bike again get absorbed by the group, further increasing the average speed. An optimal schedule for the new group is given by a scaled version of  \textsc{AllMakeIt} on input $(m_2, U_2)$. This process continues for bikes $3, 4, \ldots, b-1$. Since recursive calls have to have smaller inputs, we have to stop this process before bike $b$ gets absorbed. What do we do with bike $b$ then? The answer is simple: we arrange the entire schedule in a such a way that bike $b$ catches up with the big group precisely at the the end of the entire interval $[0,1]$. The pseudocode for this algorithm is provided below.

    \begin{algorithm}[!h] \caption{$\textsc{AllMakeIt}(m,U)$} \label{alg:allmakeit}
        \begin{algorithmic}[1]             
        \Input {$m$ -- number of agents, $U = \{ u_1 \le u_2 \le \cdots \le u_b\}$ -- inverse bike speeds; \\
        \textbf{Precondition:} $b=0$ or $u_b \le T(m,U)$}\label{algline:allmakeit-precond}
        \Begin
            \If{$b = 0$}\Comment{Base case: no bikes, schedule is trivial, every agent walks}
            \State $X \gets [1]$ \Comment{One interval in the partition}
            \State $M \gets [0; 0; \ldots; 0]$\Comment{Schedule matrix is $m \times 1$ populated only with $0$ labels}
            \Else
            \State $M$ is initialized to an empty $m \times m$ matrix
            \For{$(j=1$ to $m-b)$} \Comment{Populate part of the matrix corresponding to the first phase}
                \For{$(i=1$ to $m)$}
                \If{$j \le i \le j+b-1$}\Comment{Figure~\ref{fig:all_make_it} may be helpful for understanding this part}
                    \State{$M(i,j) \gets i-j+1$}
                \Else
                    \State{$M(i,j) \gets 0$}
                \EndIf
                \EndFor
            \EndFor
            \For{$(j=m-b+1$ to $m)$} \Comment{Populate part of the matrix corresponding to the second phase}
            \State{$M(1 : (j-1), j) \gets \textsc{AllMakeIt}(j-1, \{u_1, \ldots, u_{j-m+b-1}\})$} \Comment{Recursive call}
                \For{$(i=j$ to $m)$}
                    \State{$M(i,j) \gets i-m+b$}
                \EndFor
            \EndFor
            \State{$M \gets \textsc{Expand}(M)$}\label{algline:allmakeit} \Comment{Expand recursively computed submatrices}
            \State{$X \gets \textsc{SolveLP}(U, M)$} \Comment{Solve the LP \eqref{eq:LP2}}
            \EndIf
            \State \Return $(X,M)$
        \end{algorithmic}
    \end{algorithm}  

    The pseudocode in Algorithm~\ref{alg:allmakeit} essentially populates the schedule matrix $M$ according to Figure~\ref{fig:all_make_it} and then finds an optimal $X$ by solving the LP \eqref{eq:LP2}. We chose this approach for simplicity and because it suffices for our purposes, but one can instead use a more direct and efficient way of computing optimal $X$ for the schedule matrix $M$ which is implicit in the proof of Theorem~\ref{thm:all_make_it} below. In the pseudocode we assign matrices obtained via recursive calls to \textsc{AllMakeIt} to certain ranges inside $M$. Thus, immediately prior to line \ref{algline:allmakeit}, $M$ has this doubly layered structure and $M$ needs to be brought to a proper $2$-dimensional array representation. Suppose that $M(1:i, j)$ contains an $i \times n'$ matrix $M'$. We modify $M$ by inserting $n'$ columns starting at column $j$, populating the first $i$ rows of the newly inserted columns with entries from $M'$ and populating the remaining $m-i$ rows of newly inserted columns by simply replicating  $n'$ times the submatrix $M((i+1):m, j)$. This is repeated for all recursively computed matrices. We refer to this procedure as \textsc{Expand} in line~\ref{algline:allmakeit} in the pseudocode. This procedure should be straightforward to implement, but details are a bit messy so we omit them for ease of presentation.    
    
    Before we dive into the analysis of the algorithm, we verify that the precondition in line~\ref{algline:allmakeit-precond} is maintained during the recursive calls. In other words, since $\AllMakeIt$ can only be used on inputs $(m,U)$ satisfying $u_b \leq T(m,U)$ we need to demonstrate that the inputs used in the recursive calls do, indeed, satisfy this condition. Formally, we need to show that $u_k \leq T(m_k,U_k)$, $k=1,\ldots,b-1$. The following lemma demonstrates this.
    \begin{lemma}\label{cor:fast_set}
        If $u_b \leq T(m,U)$ then $u_{k} \leq T(m_{k},U_{k})$ for $k=1,\ldots,b$.
    \end{lemma}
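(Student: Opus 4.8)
The plan is to establish the inequality for all $k$ at once by a downward induction on $k$, peeling off the slowest remaining bike one step at a time and invoking part~\ref{lempart:TmU_order} of Lemma~\ref{lem:lbs-for-bsprob} at each step. Write $T_k := T(m_k, U_k)$, so that $T_b = T(m,U)$, and by hypothesis $u_b \le T_b$; the goal is to show $u_k \le T_k$ for every $k = 1, \ldots, b$.

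First I would record the bookkeeping identities $m_k - 1 = m_{k-1}$ and $U_k \setminus \{u_k\} = U_{k-1}$, which say exactly that deleting the slowest bike $u_k$ from the instance $(m_k, U_k)$ produces the instance $(m_{k-1}, U_{k-1})$. One also checks $m_k = m - b + k \ge k = |U_k|$ using $m \ge b$, so each $T_k$ is a genuine average of $U_k$ together with $m_k - k$ copies of $1$, and part~\ref{lempart:TmU_order} of Lemma~\ref{lem:lbs-for-bsprob} is applicable. The cases $b \in \{0,1\}$ are immediate (for $b = 1$ the claim is literally the hypothesis), so assume $b \ge 2$.

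The induction then runs as follows. The base case $k = b$ is the hypothesis $u_b \le T_b$. For the inductive step, assume $u_{k+1} \le T_{k+1}$. Applying part~\ref{lempart:TmU_order} of Lemma~\ref{lem:lbs-for-bsprob} to the instance $(m_{k+1}, U_{k+1})$ with deleted element $u_{k+1}$ gives $T_{k+1} \le T(m_{k+1} - 1,\, U_{k+1} \setminus \{u_{k+1}\}) = T_k$. Combining this with the sortedness of inverse speeds, $u_k \le u_{k+1}$, yields $u_k \le u_{k+1} \le T_{k+1} \le T_k$, which both establishes $u_k \le T_k$ and supplies the hypothesis needed for the next step of the induction.

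There is no real obstacle; the one point requiring care is the direction of the equivalence in part~\ref{lempart:TmU_order}: it is precisely the hypothesis $u_{k+1} \le T_{k+1}$, already available from the previous induction step, that lets us conclude $T_{k+1} \le T_k$, so the induction must proceed in decreasing $k$ to keep this hypothesis in hand before it is used. The conceptual content is just that monotonicity of the $u_j$'s together with the fact that removing a below-average element can only raise the average propagates the bound $u_b \le T(m,U)$ down the whole chain.
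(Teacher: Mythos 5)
Your proof is correct and is essentially identical to the paper's: both run a downward induction from $k=b$, apply part~\ref{lempart:TmU_order} of Lemma~\ref{lem:lbs-for-bsprob} to peel off the slowest remaining bike and conclude $T(m_{k+1},U_{k+1}) \le T(m_k,U_k)$, and then use the sortedness $u_k \le u_{k+1}$ to close the inductive step. The extra bookkeeping you include (the identities $m_{k+1}-1 = m_k$ and $U_{k+1}\setminus\{u_{k+1}\} = U_k$, and the direction of the ``iff'') is left implicit in the paper but matches it exactly.
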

    \begin{proof}
        Follows from a simple induction argument on $k=b,b-1,\ldots,1$. The base case $k=b$ is true by assumption. Assuming that the lemma holds for some $k \le b$ we have $u_k \leq T(m_k,U_k)$ which implies that $T(m_k,U_k) \leq T(m_{k-1},U_{k-1})$ by part~\ref{lempart:TmU_order} of Lemma~\ref{lem:lbs-for-bsprob}. Since $u_{k-1} \le u_k$ the inductive step is completed.
    \end{proof}
            
    The difficulty in finishing the proof of correctness of \textsc{AllMakeIt} lies in the fact that we do not have a closed-form expression for $X$ and $M$ that are returned by the algorithm. Such closed-form expressions are hard to obtain due to the recursive structure of the algorithm and unspecified \textsc{SolveLP} procedure. In turn, this makes it difficult to verify that $(X,M)$ satisfy the requirements of Definition~\ref{def:feasible}. To sidestep these issues, instead of working with partition $X$ directly, we shall work with an \emph{unexpanded partition} $Z$  and an \emph{unexpanded schedule matrix} $M'$, which we define next.

    Figure~\ref{fig:all_make_it} refers to the unexpanded partition of the interval $[0,1]$ into $m$ subintervals $Z = [z_1,z_2,\ldots,z_m]$. Intervals $z_1, \ldots, z_{m-b}$ describe the first phase of the algorithm and intervals $z_{m-b+1}, \ldots, z_m$ refer to the second phase of the algorithm. We want to be able to refer to the intervals in the second phase with simple indices, so we introduce notation $c(k) = m-b+1+k$ for $k = 0, \ldots, b-1$. This way, the second phase has intervals $z_{c(0)}, z_{c(1)}, \ldots, z_{c(b-1)}$. The \emph{unexpanded matrix} $\widetilde{M'}$ is defined in terms of inverse speeds as follows. For the first phase of the algorithm, the matrix $\widetilde{M'}$ is identical to $\widetilde{M}$, that is, for $1 \leq j \leq m_0$ we have
    \[\widetilde{M'}(i,j) = \begin{cases}
        u_{i-j+1},& j \leq i \leq b+j-1\\
        1,& \mbox{otherwise}
    \end{cases}\]
    and for the second phase, that is, for $0 \le k \le b-1$ we have
    \[\widetilde{M'}(i,c(k)) = \begin{cases}
        T(m_k,U_k),& 1 \leq i \leq m_k\\
        u_{k+i-m_k},& m_k < i \le m
    \end{cases}.\]
    In words, the unexpanded matrix is obtained by treating agents $1\le i \le m_k$ during intervals $z_{c(k)}$ as if they were each riding a bike with speed $1/T(m_k,U_k)$. Of course, this is justified provided that \textsc{AllMakeIt}$(m_k, U_k)$ produces a feasible schedule with all agents having the same arrival time $T(m_k, U_k)$, but this can be assumed to hold in an inductive argument.

    We say that an unexpanded schedule partition $Z$ is \emph{valid} with respect to the unexpanded matrix $M'$ if the following condition holds: for $j \in [m]$ the agents $1,2,\ldots,j$ all reach the end of $z_j$ at the same time in the unexpanded schedule. In particular, validity implies that all agents have the same arrival time in $(Z, M')$. We are now ready to state the main result of this section -- the expanded schedule produced by \textsc{AllMakeIt} is optimal under the condition $u_b \le T(m,U)$.

    \begin{theorem}\label{thm:all_make_it}
        Consider an input $(m,U=\{u_1 \le u_2 \le \cdots \le u_b\})$ to the \BS problem satisfying $u_b \leq T(m,U)$. Let $(X,M)$ be the schedule produced by \textsc{AllMakeIt}$(m,U)$. The schedule $(X,M)$ is feasible and has completion time $\tau(X,M) = T(m,U)$.
    \end{theorem}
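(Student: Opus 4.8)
The plan is to argue by induction on the number of bikes $b=|U|$; to avoid reasoning about the unspecified \textsc{SolveLP} routine directly, I would first build a concrete ``intended'' schedule with the right value, and then invoke optimality of the LP to transfer the bound to the algorithm's output. The base case $b=0$ is immediate: \textsc{AllMakeIt} returns $X=[1]$ and the all-zero matrix, the schedule is vacuously feasible, and $\tau=1=T(m,\emptyset)$. For the inductive step, assume the statement for all inputs with fewer than $b$ bikes. First I would collect the structural facts: by Lemma~\ref{cor:fast_set} each recursive call $\textsc{AllMakeIt}(m_k,U_k)$, $k=0,\dots,b-1$, meets its precondition, so by the induction hypothesis it returns a feasible schedule in which all $m_k$ agents arrive together at time $T(m_k,U_k)$ and every bike in $U_k$ reaches the end; together with Claim~\ref{claim:scaling} this means a scaled copy of that schedule carries a synchronized group of $m_k$ agents (holding bikes $1,\dots,k$) across a subinterval of length $x$ in time $x\,T(m_k,U_k)$, leaving them synchronized. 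I would also telescope the ``iff'' of part~\ref{lempart:TmU_order} of Lemma~\ref{lem:lbs-for-bsprob} with Lemma~\ref{cor:fast_set} to obtain the chain $u_b\le T(m,U)\le T(m_k,U_k)$ for every $k\le b$; this is exactly the inequality that makes each recursive ``virtual bike'' faster than the bike about to be absorbed.

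The heart of the proof is the construction of a valid unexpanded partition $Z=[z_1,\dots,z_m]$ for the unexpanded matrix $M'$. I would determine $Z$ by sweeping the intervals left to right, imposing the validity condition (agents $1,\dots,j$ synchronized at the end of $z_j$) and tracking each agent's arrival time at each boundary; together with $\sum_j z_j=1$ this pins down $Z$ uniquely, and the real content is that every $z_j$ comes out nonnegative. In the first phase the key observation is that throughout $z_1,\dots,z_j$ agent $j{+}1$ ``shadows'' agent $j$ one bike slower (walking being slowest of all), so agent $j{+}1$ trails the synchronized group $\{1,\dots,j\}$, and the validity equation for $z_{j+1}$ reads $z_{j+1}(1-u_1)=t_{j+1,j}-t_{j,j}\ge 0$. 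In the second phase, at the start of $z_{c(k)}$ the group $\{1,\dots,m_k\}$ is synchronized (induction hypothesis on the block just completed) and agent $m_k{+}1$, riding bike $k{+}1$, trails it, so the equation for $z_{c(k)}$ has the form $z_{c(k)}\bigl(T(m_k,U_k)-u_{k+1}\bigr)=(\text{trailing gap})\ge 0$, which is admissible because $u_{k+1}\le T(m_k,U_k)$ from the chain above; the last interval $z_{c(b-1)}$, forced by $\sum_j z_j=1$, is covered by the same computation with $u_{k+1}$ replaced by $u_b$, and it expresses that agent $m$ on bike $b$ reaches location $1$ exactly when the big group does. Degenerate subcases -- $u_b=T(m,U)$, or runs of equal speeds -- merely collapse some $z_j$ to $0$, which is still allowed. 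This produces a valid $(Z,M')$, in which all agents share one arrival time.

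Next I would expand $(Z,M')$ into $(X_Z,M)$, replacing each second-phase interval $z_{c(k)}$ by a scaled copy of the recursive schedule for $(m_k,U_k)$; this reproduces exactly the matrix $M$ the algorithm builds before calling \textsc{SolveLP}. Feasibility of $(X_Z,M)$ I would verify against Definition~\ref{def:feasible}: parts 1 and 2 are purely combinatorial and read off the explicit formulas for $M$ (and the induction hypothesis for the recursive blocks) -- no bike label repeats within a column, and a label present in a column is present in the previous column because bikes are handed back in the first phase and absorbed in the second; part 3 holds inside each recursive block by the induction hypothesis and Claim~\ref{claim:scaling}, and at every seam (a first-phase hand-back, the first-to-second-phase transition, and each absorption) because the agent picking up a bike is precisely the trailing agent from the shadowing/trailing facts, hence arrives no earlier than the agent that dropped it. Since $(X_Z,M)$ is then feasible with all agents arriving simultaneously, part~\ref{lempart:same-arrival} of Lemma~\ref{lem:lbs-for-bsprob} gives $\tau(X_Z,M)=T(m,U)$; and since every bike reaches the end in $M$ (bikes $1,\dots,b{-}1$ via the final recursive block, bike $b$ via agent $m$ in the last column), the second part of Lemma~\ref{lem:lbs-for-bsprob} gives $\tau(X,M)\ge T(m,U)$ for every partition $X$. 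Finally, as $M$ satisfies parts 1--2 of Definition~\ref{def:feasible} and the feasible region of LP~\eqref{eq:LP2} is nonempty (witnessed by $X_Z$), Lemma~\ref{lm:semi_opt} says the algorithm's output $(X,M)$ with $X=\textsc{SolveLP}(U,M)$ is feasible and minimizes $\tau$ over partitions of $M$, so $T(m,U)\le\tau(X,M)\le\tau(X_Z,M)=T(m,U)$, which closes the induction.

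I expect the main obstacle to lie in the middle step: showing that the greedily determined interval lengths $z_j$ are all nonnegative, and, relatedly, that the bike hand-offs at the seams obey the ordering constraint of Definition~\ref{def:feasible}(3). Both come down to careful position bookkeeping whose only non-routine ingredients are the monotonicity $u_1\le\cdots\le u_b$, the chain $u_b\le T(m,U)\le T(m_k,U_k)$, and the shadowing structure of consecutive agents; the subtlety is the case analysis around the degenerate, equal-speed situations in which some intervals shrink to length $0$, which needs to be handled with a bit of care so that validity and feasibility are not lost.
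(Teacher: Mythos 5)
Your proposal is correct and follows essentially the same route as the paper's proof: induction on $b$, construction of a valid unexpanded partition $Z$ from the synchronization equations, nonnegativity of the $z_j$ via the shadowing structure and the inequality $u_{k+1}\le T(m_k,U_k)$ from Lemma~\ref{cor:fast_set}, expansion via Claim~\ref{claim:scaling}, the feasibility check against Definition~\ref{def:feasible}, and transfer to the \textsc{SolveLP} output through Lemma~\ref{lm:semi_opt}. The only (immaterial) differences are that you normalize $\sum_j z_j=1$ directly where the paper sets $z_1=1$ and rescales, and that you explicitly flag the degenerate equal-speed cases where some $z_j$ vanish.
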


    The (rather technical) proof proceeds along the following lines: (1) we make sure that the recursive calls made in the \textsc{AllMakeIt} procedure satisfy the precondition $u_b \le T(m,U)$, (2) we prove that there is a partition that makes schedule feasible and results in completion time $T(m,U)$. Thus, by Lemma~\ref{lm:semi_opt} such a partition can be found by solving a related LP. The proof of step (2) is by induction on the number of bikes $b$ and the difficulty is that we don't have a closed-form expression of the partition so we cannot easily verify property 3 of Definition~\ref{def:feasible}. We sidestep this difficulty by working with so-called ``unexpanded'' schedule (unexpanded partition and matrix) that precisely captures the recursive structure of Figure~\ref{fig:all_make_it}. We show how an unexpanded schedule can be transformed into a complete feasible schedule.

    \begin{proof}(Theorem~\ref{thm:all_make_it})
        First, observe that in light of part~\ref{lempart:same-arrival} of Lemma~\ref{lem:lbs-for-bsprob} the condition $\tau(X,M) = T(m,U)$ is equivalent to all agents having the same arrival time in $(X,M)$. We will use this fact multiple times throughout this proof. 
        
        We prove the theorem by induction on $b$. For the base case $b= 0$, observe that $T(m, \emptyset) = 1$ and in the schedule produced by \textsc{AllMakeIt}$(m, \emptyset)$ all agents walk. So the schedule is feasible and has completion time $1$.
        Now consider some $b \ge 1$ and assume the claim holds for all acceptable inputs with at most $b-1$ bikes. Consider an input $(m, U = \{u_1, \ldots, u_b\})$ and let $(X,M) = \textsc{AllMakeIt}(m, U)$. 
    
        Let $(X_k, M_k) = \textsc{AllMakeIt}(m_k, U_k)$. These are all the recursive calls made by the algorithm to populate the schedule matrix during the intervals $z_{c(k)}$ for $k = 0, \ldots, b-1$. By Lemma~\ref{cor:fast_set}, the inductive assumption applies to $(X_k, M_k)$ and we conclude that all these schedules are feasible and complete in optimal time. Moreover, in each $(X_k, M_k)$ all agents have the same arrival times.
    
        Let us assume that an unexpanded partition $Z$ that is valid for the unexpanded matrix $M'$ exists and see why this finishes the proof. The partition $Z$ for $(m,U)$ can be expanded into a true partition $X_Z$ that would go along with expanded $M$ via a simple procedure: splice vector $z_{c(k)} X_k$ into the entry at position $c(k)$ of $Z$ for $k = 0, \ldots, b-1$. The correctness of this transformation follows from Claim~\ref{claim:scaling}. It is then easy to see that that $(X_Z, M)$ satisfies the conditions of Definition~\ref{def:feasible}. Requirements $1$ and $2$ are clear from Figure~\ref{fig:all_make_it} and the fact that $(X_k, M_k)$ are all feasible and optimal. To see that requirement $3$ is satisfied observe that in every interval $j \leq i-1$, agent $i-1$ uses a bike at least as fast as agent $i$ does. Moreover, the only pickup-switches that occur during the algorithm (excluding any such bike switches occurring in a recursive call) do so between pairs of agents $i$ and $i-1$ at the end of intervals $z_j$, $j \leq i-1$. In order for these switches to be possible agent $i-1$ must reach the end of $z_j$ no later than agent $i$ does. This, of course, is guaranteed by the fact that agent $i-1$ uses a bike at least as fast as agent $i$ in every interval preceding the interval $z_j$. We can thus conclude that all pickup-switches are possible.
    
        Thus, by showing the existence of $Z$ we ultimately show that $X_Z$ exists that makes $(X_Z,M)$ feasible and every agent arrives at the end at the same time (guaranteed by validity of $Z$), meaning that $\tau(X_Z, M) = T(m, U)$. In turn, this implies that \textsc{SolveLP} procedure would find a partition $X$ with the same properties as $X_Z$ in Algorithm~\ref{alg:allmakeit}.
    
        By Claim~\ref{claim:scaling} and the fact that rescaling does not alter the validity of a partition, it suffices to construct a valid partition over any interval $[0,a]$ for $a > 0$. To that end, we let $z_1 = 1$ and for $j = 2, \ldots, m$ we let
        \[z_{j} = \frac{\sum_{p=1}^{j-1} [\widetilde{M'}(j,p)- \widetilde{M'}(j-1,p)] z_p}{\widetilde{M'}(j-1,j) - \widetilde{M'}(j,j)}.\] 
        First, we show that $z_j \ge 0$ for all $j \in [m]$ implying that this $Z$ is, indeed, a partition over the interval $[0, \sum_{j=1}^m z_j]$. We do so by demonstrating that each coefficient of $z_p$ is non-negative, so the non-negativity of $z_j$ follows by a simple induction on $j$.
    
        Consider expressions appearing as coefficients in front of $z_p$ above. We have already indicated that agent $j-1$ uses a bike at least as fast as agent $j$ in every interval preceding $z_{j}$ and thus we indeed have that $\widetilde{M'}(j-1,p)  \le \widetilde{M'}(j,p)$ for each $p=1,2,\ldots,j-1$. If $2 \le j \leq m-b$ then $\widetilde{M'}(j,j) = u_1$, $\widetilde{M'}(j-1,j)=1$, and so $\widetilde{M'}(j-1,j) \geq \widetilde{M'}(j,j)$ is certainly satisfied for $j=2,\ldots,m-b$. Moreover, as $k$ runs from $0$ to $b-1$, $c(k)$ runs from $m-b+1$ to $m$. Thus, for $j \ge m-b+1$ we have $\widetilde{M'}(j,j) = \widetilde{M'}(c(k),c(k)) = u_{k+1}$ while $\widetilde{M'}(j-1,j) = \widetilde{M'}(c(k)-1, c(k)) = T(m_k,U_k)$, and the fact that $u_{k+1} \le T(m_k,U_k)$ follows directly from Lemma~\ref{cor:fast_set}. 
    
        Second, we show that $Z$ is valid. Similarly to \eqref{eq:tij_general}, we have
        \[t'_{i,j} = \sum_{p=1}^{j} \widetilde{M'}(i,p) z_p.\]
        Rearranging the definition of $z_j$ we obtain
        \[(\widetilde{M'}(j-1,j) - \widetilde{M'}(j,j))z_{j} = \sum_{p=1}^{j-1} [\widetilde{M'}(j,p)- \widetilde{M'}(j-1,p)] z_p.\]
        Therefore
        \[\sum_{p=1}^{j} \widetilde{M'}(j,p) z_p - \sum_{p=1}^{j} \widetilde{M'}(j-1,p) z_p = 0.\]
        Thus, we have $t'_{j,j} = t'_{j-1,j}$. In words, the $j^\text{th}$ completion time of $j^\text{th}$ agent and the $j^\text{th}$ completion time of $(j-1)^\text{st}$ agent are the same. Thus, the validity of $Z$ follows by a simple induction.
    \end{proof}
    
    Unfortunately, the size of the schedule produced by \textsc{AllMakeIt} is exponential in the number of bikes.

    \begin{theorem}\label{thm:exp_size}
        $\textsc{AllMakeIt}(m,U)$ has size $1$ if $b=0$ and size $(m-1)2^{b-1}+1$ if $0 < b < m$.
    \end{theorem}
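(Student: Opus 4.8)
The plan is to prove both assertions by induction on the number of bikes $b$, after extracting from the algorithm a recurrence for the size $n(m,b)$ of the schedule that $\textsc{AllMakeIt}(m,U)$ returns when $|U|=b$. Note first that this size depends only on $m$ and $b$, not on the actual speeds, since it is determined purely by the combinatorial shape of the matrix $M$ and the behaviour of $\textsc{Expand}$, neither of which inspects the $u_i$.

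First I would handle the base case: for $b=0$ the algorithm takes its base-case branch and returns the $m\times 1$ all-$0$ matrix, so $n(m,0)=1$, which is the first assertion. For $0<b<m$ I would examine $M$ at the moment just before the call to $\textsc{Expand}$. At that point $M$ has exactly $m$ columns: the $m-b$ first-phase columns $j=1,\dots,m-b$ are ordinary scalar columns, while for $k=0,\dots,b-1$ the second-phase column $c(k)=m-b+1+k$ carries, in its top $m_k=m-b+k$ rows, the schedule matrix returned by the recursive call $\textsc{AllMakeIt}(m_k,U_k)$ (with $|U_k|=k$), and a single repeated scalar column in its remaining $b-k$ rows. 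Two things need checking here: that these recursive inputs are legitimate (the precondition $u_k\le T(m_k,U_k)$ holds by Lemma~\ref{cor:fast_set}), and that each recursive input still has more agents than bikes --- indeed $m_k-k=m-b\ge 1$ --- so that the induction hypothesis (the closed-form size) applies to it.

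Next I would set up the recurrence. Since $\textsc{Expand}$ leaves the $m-b$ first-phase columns untouched and replaces each second-phase column $c(k)$ by exactly the $n(m_k,k)$ columns of the (already expanded) recursive submatrix, inserting nothing extra at the seams, we get
\[
 n(m,b) \;=\; (m-b) \;+\; \sum_{k=0}^{b-1} n(m_k,k), \qquad m_k=m-b+k, \qquad n(\,\cdot\,,0)=1,
\]
where the $k=0$ term is $n(m-b,0)=1$ because that call is bike-free. I would then solve this by a second induction on $b$: plug the closed-form expression for each $n(m_k,k)$, $1\le k\le b-1$, into the sum; since $m_k-k=m-b$ is the same for every $k$, the summands collapse into an essentially geometric sum, and simplifying gives the claimed closed form, with the case $b=1$ (where $n(m,1)=(m-1)+n(m-1,0)=m$) as the base of this induction.

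The main obstacle I anticipate is the bookkeeping in the recurrence step: nailing down exactly how $\textsc{Expand}$ changes the column count --- that it acts only on the $b$ second-phase columns, replaces each by precisely the column count of the corresponding recursive submatrix with no off-by-one at the boundaries, and that the degenerate $k=0$ call contributes exactly one column --- and then carrying out the arithmetic of solving the recurrence carefully while making sure the recursive inputs never leave the regime $0\le b<m$ in which the formula is claimed, so that the induction is well founded. Once the recurrence is pinned down, the rest is a routine computation.
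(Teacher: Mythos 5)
Your derivation of the recurrence is exactly the paper's: the unexpanded matrix has $m-b$ scalar first-phase columns plus $b$ second-phase columns, the $k$-th of which expands to the size of $\textsc{AllMakeIt}(m_k,U_k)$, giving $n(m,b)=(m-b)+\sum_{k=0}^{b-1}n(m_k,k)$ with $n(\cdot,0)=1$. This is literally the paper's recursion $n_b=m-b+1+\sum_{j=1}^{b-1}n_j$, and your base case $n(m,1)=m$ matches as well. So up to the last step the approach coincides with the paper's.

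The gap is precisely in the step you defer as a ``routine computation.'' Substituting the induction hypothesis $n(m_k,k)=(m_k-1)2^{k-1}+1=(m-b+k-1)2^{k-1}+1$ into your sum does \emph{not} yield $(m-1)2^{b-1}+1$. Writing $a=m-b$ (constant over all recursive calls, as you note) and $g(k)=n(a+k,k)$, your recurrence reads $g(0)=1$ and $g(k)=a+\sum_{j=0}^{k-1}g(j)$, whence $g(k)=2g(k-1)$ for $k\ge 2$ and $g(b)=(m-b+1)2^{b-1}$, which agrees with the stated closed form only when $b=1$. Concretely, for $m=3$, $b=2$ the recurrence gives $1+1+2=4$ while the theorem claims $5$. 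The paper's own proof arrives at $(m-1)2^{b-1}+1$ only by substituting $(m-1)2^{k-1}+1$ for $n_k$, i.e., by treating the recursive call $\textsc{AllMakeIt}(m_k,U_k)$ as though it had $m$ rather than $m_k=m-b+k$ agents; also the summands are not a geometric sum, since the coefficient $m_k-1=m-b+k-1$ varies with $k$. So the recurrence (which you and the paper derive identically, and which appears correct) is inconsistent with the stated closed form, and your plan as written cannot be completed into a proof of the statement as it stands: the arithmetic you have not carried out is exactly where it breaks. The qualitative conclusion the paper needs, namely that the size is exponential in $b$, survives either way, since $(m-b+1)2^{b-1}$ is still exponential for $b<m$.
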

    \begin{proof}(Theorem~\ref{thm:exp_size})
        Let $n_k$ represent the size of $\textsc{AllMakeIt}(m_k,U_k)$. Referring to Figure~\ref{fig:all_make_it} one can observe that the size $n_b$ of $\textsc{AllMakeIt}(m,U)$ satisfies the following recursion
        $n_b = m - b + 1 + \sum_{j=1}^{b-1} n_j.$
        We will now proceed to prove the lemma by induction on $b$. 
        For $b=1$, we have $n_1 = m-1+1 = m$ as can be easily verified by consulting Figure~\ref{fig:all_make_it} for the case $b=1$. Thus, the base case holds. 
        Next, consider some $b \ge 2$, and assume the statement holds for all $k \leq b-1$. Then, we have
        \[n_b = m-b + 1 + \sum_{k=1}^{b-1} (m-1)2^{k-1}+1 = m + (m-1)(2^{b-1}-1) = (m-1)2^{b-1}+1\]
        which completes the proof by induction. 
    \end{proof}

\subsection{\textsc{Standardize} and \textsc{Reduce} Procedures}
    In this section, we present two procedures called \textsc{Standardize} and \textsc{Reduce} that would allow us to overcome the main drawback of \textsc{AllMakeIt} algorithm -- the exponential size of a schedule it outputs. 
    We begin with the \textsc{Standardize} procedure.

    Consider a feasible schedule $(X,M)$ of size $n$. If there exists an entry $x_j = 0$ in $X$ then this entry and the $j^\text{th}$ column of $M$ do not contain any useful information. These so-called {\em zero columns} only serve to increase the size of $(X,M)$ and can be removed from the algorithm by deleting $x_j$ and the corresponding column of $M$. Similarly, whenever we have two subsequent columns $j-1$ and $j$ of $M$ that are identical (i.e. none of the agents switch bikes between the intervals $x_{j-1}$ and $x_j$) then the $j^\text{th}$ column is a {\em redundant column} and can be removed from the algorithm by deleting the $j^\text{th}$ column of $M$ and merging the intervals $x_{j-1}$ and $x_j$ in $X$.

    In addition to removing these useless columns, we can also remove from the algorithm any useless switches that might occur. To be specific, if ever it happens during $(X,M)$ that an agent $i$ performs a pickup-switch with an agent $i'$ at the same position and {\em time}, then this bike switch can be avoided. Indeed, assume that $M$ instructs the agents $i$ and $i'$ to switch bikes at the end of some interval $x_k$ and assume that this switch happens at the same time. Then we can remove this switch by swapping the schedules of agents $i$ and $i'$ in every interval succeeding the interval $x_k$. In other words, we need to swap $M(i,j)$ and $M(i',j)$ for each $j = k+1,k+2,\ldots,n$. We will call these types of switches {\em swap-switches}. Note that a swap-switch is also a pickup-switch. Clearly, the removal of swap-switches will not affect the feasibility nor the overall completion time of an algorithm.

    We say that a schedule $(X,M)$ is expressed in \emph{standard form} if $(X,M)$ does not contain any zero columns, redundant columns, or swap-switches.
    In the appendix we present and analyze the procedure $\textsc{Standardize}(X,M)$ that takes as input a feasible schedule $(X,M)$ and outputs a feasible schedule $(X',M')$ in standard form. The main result is:

    \begin{lemma}
        A feasible schedule $(X,M)$ of size $n$ can be converted to standard form in $O(m^2 n)$ computational time using $O(mn)$ space.
    \end{lemma}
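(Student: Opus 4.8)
The plan is to convert a feasible schedule $(X,M)$ of size $n$ into standard form by a sequence of passes, each of which removes one of the three forbidden features (zero columns, redundant columns, swap-switches), while being careful that no pass reintroduces a feature removed by an earlier pass. The natural order is: first eliminate all swap-switches, then sweep away zero columns and redundant columns in a single left-to-right pass. The reason for doing swap-switches first is that removing a swap-switch only permutes the rows of $M$ in the suffix of columns after the switch — it changes neither $X$ nor the multiset of columns of $M$ up to row-permutation — so it cannot create new zero columns or new redundant columns. Conversely, deleting zero/redundant columns after the swap-switch cleanup cannot create a swap-switch, because it only merges adjacent intervals and deletes columns; every pickup-switch that survives already happened at distinct arrival times for the two agents involved (that is exactly what it means not to be a swap-switch), and merging a zero-length interval into its neighbor does not change any $t_{i,j}$.

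\textbf{First I would} handle the swap-switches. Scan the columns $j = 2, \ldots, n$ from left to right; at column $j$, for each pair of agents $i, i'$ that perform a pickup-switch between $x_{j-1}$ and $x_j$, check whether $t_{i,j-1} = t_{i',j-1}$ using the precomputed prefix sums from Equation~\eqref{eq:tij_general}. If so, swap the entries $M(i,p)$ and $M(i',p)$ for every $p = j, j+1, \ldots, n$. By Claim~\ref{claim:scaling}-style linearity (or directly from \eqref{eq:ti_general}), this does not change any arrival time and preserves feasibility: parts 1 and 2 of Definition~\ref{def:feasible} are symmetric in agent labels, and part 3 is preserved because the two agents had equal completion times at the swap point, so the relabelling exchanges two equal values. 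Since each such swap strictly decreases the number of pickup-switches in the schedule (the switch at column $j$ disappears and no new switch is created to the left of $j$, as those columns are untouched), and there are at most $O(mn)$ pickup-switches total, the process terminates. Maintaining the arrays $t_{i,j}$ (size $m \times n$) and updating them — in fact they do not change at all — keeps each column's work at $O(m)$ for the scan plus $O(n)$ per swap for the suffix relabelling; bounding the total swap work by $O(m^2 n)$ and the scan by $O(mn)$ gives the stated $O(m^2 n)$ time and $O(mn)$ space.

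\textbf{Then I would} do the second pass: walk through the columns once more, left to right, maintaining a pointer to the "last kept" column. For the current column $j$: if $x_j = 0$, delete column $j$ of $M$ and entry $x_j$ of $X$; if $M(1{:}m, j)$ equals the last kept column of $M$, delete column $j$ and add $x_j$ to the last kept entry of $X$ (merging intervals); otherwise keep column $j$ and advance the pointer. Each step is $O(m)$ work (comparing two length-$m$ columns), so this pass is $O(mn)$. Feasibility is preserved: deleting a zero column removes a vacuous interval, and merging two identical adjacent columns leaves all $\widetilde{M}(i,\cdot)$ row-sums — hence all $t_i$ and all relevant $t_{i,j}$ — unchanged, so part 3 of Definition~\ref{def:feasible} still holds for the surviving columns. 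After this pass the schedule has no zero columns, no redundant columns, and (by the first pass and the fact that the second pass does not create swap-switches) no swap-switches, so it is in standard form.

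\textbf{The main obstacle} I expect is the bookkeeping in the first pass: verifying rigorously that eliminating one swap-switch does not spawn another one earlier in the schedule, and that the total count of swap-switch eliminations is polynomially bounded so the procedure halts within the claimed time. The subtlety is that after relabelling the suffix, a pickup-switch that was previously between agents $(i, i')$ at some later column $j' > j$ may now be between a different pair of agents — but it is still a pickup-switch at the same column and the same physical time, so whether it is a swap-switch is unchanged; thus the measure "number of swap-switches remaining, scanned left to right" decreases monotonically and the argument closes. Everything else (the arithmetic on arrival times, the $O(m^2 n)$/$O(mn)$ accounting) is routine and is the content deferred to the appendix.
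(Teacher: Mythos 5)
Your proposal is correct and is essentially the paper's own construction: the paper's \textsc{Standardize} (Algorithm~\ref{alg:standardize}) performs the same three operations --- skip zero columns, un-swap suffixes at swap-switches detected via the prefix arrival times $t_{i,j-1}$, and merge identical adjacent columns --- in a single interleaved left-to-right pass rather than your two passes, with the same $O(m^2 n)$/$O(mn)$ accounting. One side-remark in your justification is false but harmless: removing a swap-switch \emph{can} create a redundant column (after un-swapping agents $i$ and $i'$, columns $j-1$ and $j$ may become identical --- indeed that is the typical outcome when the swap was the only change at that boundary); this does not affect correctness since your redundant-column pass runs afterward, but it is not a valid reason for the ordering --- the reason that matters, and which you do argue correctly, is that deleting/merging columns cannot create new swap-switches.
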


    Next, we describe the \textsc{Reduce} procedure that demonstrates that a feasible schedule is never required to have size larger than $m$. This procedure is presented as Algorithm~\ref{alg:full_reduce}. The idea is simple: we start with a feasible schedule $(X_M,M)$ ($X_M$ is the induced partition vector of $M$), apply the standardization procedure to get a new schedule $(X',M')$ of reduced size, and then iterate on the schedule $(X_{M'},M')$ ($X_{M'}$ is the induced partition vector of $M'$). The iteration stops once we reach a schedule $(X_{M''},M'')$ that is already in standard form. The key thing is to demonstrate that the iteration always stops at a schedule with size $n \leq m$ (without increasing the completion time of the schedule). This is the subject of Theorem~\ref{thm:full_reduce}.

        \begin{algorithm}[H] \caption{$\textsc{Reduce}(U,M)$} \label{alg:full_reduce}
            \begin{algorithmic}[1]
            \State $X \leftarrow \textsc{SolveLP}(U,M)$;   \Comment{Solve the LP \eqref{eq:LP2}}
            \Repeat       
            \State $(X,M) \leftarrow \textsc{Standardize}(U,X,M)$;
            \State $X \leftarrow \textsc{SolveLP}(U,M)$;
            \Until {$(X,M)$ is in standard form}
            \State \Return {$M$;}                
            \end{algorithmic}
        \end{algorithm}
        \begin{theorem}\label{thm:full_reduce}
            Consider a feasible schedule $(X_M,M)$ of size $n > m$ that completes in time $\tau$. Then $\textsc{Reduce}(M)$ produces a feasible schedule $(X_{M'},M')$ with size $n' \leq m$ that completes in time $\tau' \leq \tau$. The computational complexity of $\textsc{Reduce}(U,M)$ is $O(\poly(m,n))$ time and space.
        \end{theorem}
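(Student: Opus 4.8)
\textbf{Proof proposal for Theorem~\ref{thm:full_reduce}.}
The plan is to verify three properties of \textsc{Reduce} and then read off the running time. First, every pass through the loop preserves feasibility and does not increase the completion time. Second, the loop halts after at most $n$ passes. Third --- and this is the crux --- when the loop exits, the returned matrix has size at most $m$. With the iteration bound in hand, the complexity estimate is routine.

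Feasibility is immediate: \textsc{Standardize} returns a feasible schedule (by the lemma preceding this theorem), and $\textsc{SolveLP}(U,M)$ returns, by construction, a partition making $(X,M)$ satisfy Definition~\ref{def:feasible}. For the completion time, each of the three edits performed by \textsc{Standardize} --- deleting a zero column, merging a redundant pair of columns, and undoing a swap-switch --- leaves the multiset of agent arrival times, hence $\tau$, unchanged (undoing a swap-switch merely relabels two agents from the switch onward). The subsequent call $X \leftarrow \textsc{SolveLP}(U,M)$ replaces the current partition by one of minimum completion time for the current matrix (Lemma~\ref{lm:semi_opt}), which can only decrease $\tau$. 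Starting from $\tau(X_M,M)=\tau$, an induction over the passes gives $\tau' \le \tau$. For termination I would argue that whenever the current schedule is not already in standard form, \textsc{Standardize} strictly decreases the size: a zero or redundant column is removed outright, and in the remaining case uncrossing two trajectories that meet at a common point exposes a redundant column that is then also removed. Since the size is a positive integer that strictly drops on every non-terminating pass, the loop stops after at most $n$ passes.

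The heart of the argument is the bound $n' \le m$ on the size of the returned matrix $M'$. When the loop exits, $X_{M'} = \textsc{SolveLP}(U,M')$ is an optimal \emph{vertex} of the LP \eqref{eq:LP2} for $M'$, and $(X_{M'},M')$ is in standard form. The LP has $n'+1$ decision variables $x_1,\dots,x_{n'},\tau$, so at a vertex at least $n'+1$ linearly independent constraints must be tight. The constraints are: the single equality $\sum_j x_j = 1$; the $n'$ nonnegativity constraints $x_j \ge 0$; the $m$ constraints $\tau \ge t_i$; and the pickup constraints $t_{i',j-1} \le t_{i,j-1}$, one per pickup-switch of $M'$. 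Since $(X_{M'},M')$ has no zero columns, every $x_j>0$, so no $x_j \ge 0$ constraint is tight; since it has no swap-switches, every pickup-switch drops its bike strictly before it is picked up, so every pickup constraint is strict as well. Hence the only tight constraints are $\sum_j x_j = 1$ together with a subset of $\{\tau \ge t_i\}_{i=1}^m$, at most $m+1$ in all, and linear independence forces $n'+1 \le m+1$, i.e. $n' \le m$.

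Finally, each pass solves one instance of LP \eqref{eq:LP2}, which has $O(n)$ variables and $O(mn)$ constraints and is solvable in $\poly(m,n)$ time, and makes one call to \textsc{Standardize} costing $O(m^2 n)$ time and $O(mn)$ space; all intermediate schedules have size at most $n$ and there are at most $n$ passes, so the procedure runs in $\poly(m,n)$ time and space. The step I expect to be the main obstacle is pinning down termination in the swap-switch case: checking that ``standard form'' neutralises exactly the $x_j \ge 0$ and pickup constraints in the vertex count is clean, but showing that undoing a swap-switch makes genuine progress (rather than merely relabelling agents) requires the bookkeeping of the \textsc{Standardize} implementation; a fallback is to replace ``size strictly decreases each pass'' by a lexicographic potential --- size first, then a crossing count of the agent trajectories --- that rules out cycling.
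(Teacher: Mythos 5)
Your proposal is correct and follows essentially the same route as the paper: the bound $n'\le m$ comes from counting linearly independent tight constraints at an optimal vertex of the LP \eqref{eq:LP2} and observing that standard form rules out tight $x_j\ge 0$ constraints (zero columns) and tight pickup constraints (swap-switches); you state this as ``standard form implies $n'\le m$'' while the paper states the contrapositive ``$n'>m$ implies not in standard form,'' but the counting is identical. The one overreach is your primary termination claim that \textsc{Standardize} strictly decreases the size on every non-terminating pass --- undoing a swap-switch need not expose a redundant column, since other agents may still switch bikes between the two columns --- but your fallback lexicographic potential (size, then a secondary count) is exactly the paper's argument, which tracks the pair (schedule size, number of LP constraints) and gets an $O(nm)$ bound on the number of passes.
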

        \begin{proof}
            Consider a feasible schedule $(X,M)$ of size $n>m$ that completes in time $\tau$. We may assume without loss of generality that $X=X_M$ is an induced partition vector of $M$. Then $X_M$ is an optimal vertex of the LP \eqref{eq:LP2}.
        
            As one can see we have $n+1$ variables in the LP \eqref{eq:LP2}, one equality constraint, and a number of inequality constraints. We know that there is a feasible solution and thus the equality constraint is certainly satisfied leaving us with $n$ degrees of freedom. Of the remaining constraints there are: $m$ of the form $\tau \geq t_i$; $n$ of the form $x_j \geq 0$; and some number $s$ (equal to the number of pickup-switches in $M$) of the form $t_{i',j-1} \leq t_{i,j-1}$. Of these total $m+n+s$ inequality constraints, $n$ of them will be made tight by the optimal solution $X_M$. Since $n > m$ at least $n-m>0$ of the inequalities $x_j \geq 0$ and/or $t_{i',j-1} \leq t_{i,j-1}$ will be tightly satisfied. Note that this implies that $(X,M)$ cannot be expressed in standard form since each tight inequality $x_j\geq 0$ corresponds to a zero column and each tight inequality $t_{i',j-1} \leq t_{i,j-1}$ corresponds to a swap-switch. Assume that $n_0 \geq 0$ of the constraints $x_j\geq 0$ are tight and $n_s \geq 0$ of the constraints $t_{i',j-1} \leq t_{i,j-1}$ are tight. Note that we must have $n_0 + n_s \geq n-m > 0$.
            
            Let $(X',M')$ represent the schedule obtained by converting $(X,M)$ into standard form. Then $(X',M')$ has size $n' \leq n-n_0$. If $n' \leq m$ then we are done since $(X',M')$ will complete in the same time as $(X,M)$ (the completion times are unaffected by the standardization procedure). If $n' > m$ then we consider the schedule $(X_{M'},M')$, also of size $n'$, and where $X_{M'}$ is the induced partition vector of $M'$, i.e. $X_{M'}$ is an optimal solution to the same LP \eqref{eq:LP2}. Clearly the completion time of $(X_{M'},M')$ will be at least as good as $\tau$. 
            
            Now observe that this new LP has $n'+1$ variables, $1$ equality constraint, and $m+n'+s'$ inequality constraints: $n' \leq n-n_0$ of the form $X'(j) \geq 0$, and $s' \leq s-n_s$ of the form $t'_{'i',j-1} \leq t'_{i,j-1}$. Thus, either $(X_{M'},M')$ has size $n' < n$ or $X_{M'}$ optimally solves an LP involving fewer constraints than \eqref{eq:LP2}. In the latter case, since $n' > m$, at least one of the constraints of the form $x_j \geq 0$ or $t'_{S'(i,j),j-1} \leq t'_{i,j-1}$ will be tightly satisfied by $(X',M')$ implying that $(X',M')$ is not in standard form. Thus, we can repeat this whole procedure multiple times, each time reducing the size of the schedule and/or reducing the number of inequalities in the corresponding LP. Either way, the size of the schedule must eventually decrease to $m$.
        
            To deduce the computational complexity of the reduce procedure we observe that here can be at most $m(n-1)$ switches in any schedule of size $n$. Thus, the LP \eqref{eq:LP2} is an LP with $O(n)$ variables and $O(nm)$ constraints and can be solved using $O(\poly(n,m))$ steps. Since the standardization procedure is also $O(\poly(n,m))$ each iteration of Algorithm~\ref{alg:full_reduce} runs in $O(\poly(n,m))$. In each iteration the size is decreased by one and/or at least one constraint is removed. Thus, there can be at most $O(nm)$ iterations and we can conclude that the overall computational complexity of the reduction procedure is also $O(\poly(n,m))$.     
        \end{proof}        
        We say that a schedule $(X_M,M)$ is in {\em reduced} form if $M$ is obtained as an output of Algorithm~\ref{alg:full_reduce}. Clearly, any schedule that is in reduced form is also in standard form.

\subsection{\texorpdfstring{$\textsc{AllMakeIt}^*$}{AllMakeIt*}: Computationally Efficient Version of \textsc{AllMakeIt}}   
    There are two obstacles we need to overcome in order to make $\AllMakeIt$ algorithm run in polynomial time. The first obstacle is information-theoretic: the size of the schedule produced by $\AllMakeIt$ algorithm is exponential in the number of bikes. The second obstacle is computational: the sheer number of recursive calls made by $\AllMakeIt$ algorithm is exponential in the number of bikes, so even if sizes of all matrices were $1$ it still would not run in polynomial time. The \textsc{Reduce} procedure allows us to overcome the first obstacle. We overcome the second obstacle by observing that the number of \emph{distinct} sub-problems in $\AllMakeIt$ algorithm and all of its recursive calls is at most $b$. Thus, we can replace recursion with dynamic programming (DP) to turn $\AllMakeIt$ into a polynomial time algorithm $\AllMakeIt^*$. A more thorough explanation/analysis of $\AllMakeIt^*$ algorithm follows.

        We begin by establishing the following recursive invariant:

        \begin{lemma}\label{lem:allmakeit_invariant}
            Let $(m,U=\{u_1, \ldots, u_b\})$ be an input satisfying $u_b \le T(m,U)$. Let $(m',U')$ be an input to a recursive call that is made at any point during the execution of $\AllMakeIt$ algorithm on input $(m,U)$. Then
            \begin{enumerate}
                \item $m'-|U'| = m-|U| = m-b$, and
                \item $U' = \{u_1, \ldots, u_k\}$ for some $k \in \{0, 1, \ldots, b-1\}$.
            \end{enumerate}
        \end{lemma}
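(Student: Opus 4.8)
The plan is to prove both parts simultaneously by induction on the \emph{depth} of the recursive call, where a call at depth $1$ is one made directly by $\AllMakeIt(m,U)$ in its second-phase loop, and a call at depth $d+1$ is one made directly by a call at depth $d$. (If $b=0$ there are no recursive calls and the statement is vacuous, so assume $b \ge 1$.)

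First I would dispatch the depth-$1$ case by reading off Algorithm~\ref{alg:allmakeit}: the only recursive calls made by $\AllMakeIt(m,U)$ are those in the second-phase loop, namely $\AllMakeIt(j-1,\{u_1,\ldots,u_{j-m+b-1}\})$ for $j=m-b+1,\ldots,m$. Writing $m'=j-1$ and $k=j-m+b-1$, we get $U'=\{u_1,\ldots,u_k\}$ and $m'-|U'|=(j-1)-(j-m+b-1)=m-b$, which is exactly part~1 (with $m-|U|=m-b$). As $j$ runs over $\{m-b+1,\ldots,m\}$, the index $k$ runs over $\{0,1,\ldots,b-1\}$, giving part~2; the value $k=0$ corresponds to the empty bike set handled by the base case of $\AllMakeIt$.

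For the inductive step, let $(m'',U'')$ be a call at depth $d+1$. Then it is a direct recursive call of $\AllMakeIt(m',U')$ for some depth-$d$ call $(m',U')$, which by the inductive hypothesis satisfies $m'-|U'|=m-b$ and $U'=\{u_1,\ldots,u_k\}$ for some $k\le b-1$. The crucial point is that $U'$ is a prefix of the sorted list $u_1\le u_2\le\cdots\le u_b$, so the internal sorted labelling of bikes inside the call $\AllMakeIt(m',U')$ coincides with the original labelling; hence the recursive calls spawned by it really are $\AllMakeIt(j'-1,\{u_1,\ldots,u_{j'-m'+k-1}\})$ for $j'=m'-k+1,\ldots,m'$, with the referenced bikes being $u_1,\ldots,u_{j'-m'+k-1}$, a prefix of $U'$ and hence of $U$. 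Setting $k''=j'-m'+k-1$ we obtain $U''=\{u_1,\ldots,u_{k''}\}$ and $m''-|U''|=(j'-1)-k''=m'-k=m'-|U'|=m-b$, proving part~1; and as $j'$ ranges over $\{m'-k+1,\ldots,m'\}$, $k''$ ranges over $\{0,\ldots,k-1\}\subseteq\{0,\ldots,b-1\}$, proving part~2.

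The only ``obstacle'' here is bookkeeping rather than mathematics: one must be careful that the symbol $u_i$ appearing deep inside a nested recursive call denotes the same bike as the top-level $u_i$, which is precisely what part~2 of the invariant certifies and what makes the arithmetic of part~1 go through verbatim at every level. Once this is observed, each step is a one-line computation, so no genuine difficulty is expected.
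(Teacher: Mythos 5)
Your proof is correct and follows essentially the same route as the paper's: both verify the invariant directly for the calls $\AllMakeIt(m_k,U_k)$ spawned in the second phase and then propagate it through nested calls by induction, the only (immaterial) difference being that you induct on recursion depth while the paper inducts on the number of bikes $b$. Your explicit remark that $U'$ being a prefix of the sorted list keeps the bike labelling consistent across levels is a nice touch that the paper leaves implicit.
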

        \begin{proof}
            We prove this lemma by induction on $b$. The base case of $b = 0$ is trivially true.

            For the inductive step, consider $b \ge 1$. When the algorithm $\AllMakeIt$ is executed on $(m,U)$ it makes calls to $\AllMakeIt(m_k, U_k)$ for $k \in \{0, \ldots, b-1\}$. Since $m_k - |U_k| = (m-b+k) - k = m-b$, the statement holds for all recursive calls done at this level. Now consider input $(m',U')$ in a recursive call made within $\AllMakeIt(m_k, U_k)$. Since $|U_k| \le b-1$ we can apply the induction assumption and conclude that $m'-|U'| = m_k - |U_k|$ and that $U' = \{u_1, \ldots, u_{k'}\}$ for some $k' \in \{0, 1, \ldots, k-1\}$. Since $m_k - |U_k| = m-b$ and $k'-1 \le b-2$, the inductive step holds.
        \end{proof}

        The above lemma implies that all recursive sub-problems can be parameterized by a single parameter $k$ -- number of bikes. The number of agents associated with this sub-problem is fixed by the identity $m'-|U'| = m-b$, since $|U'| = k$ implies $m' = m-b+k$. Thus, we introduce the dynamic programming table $DP[\cdot]$ with the intended meaning
        \[DP[k] = \text{optimal schedule matrix with } m-b+k \text{ agents and bikes } \{u_1, \ldots, u_k\}.\]
        By filling the $DP[k]$ table in order $k = 0, k=1, \cdots, k = b$ we guarantee that solutions to all subproblems used in $DP[k]$ are available at the time we fill in $DP[k]$. The \textsc{Reduce} procedure is then used to make sure that the size of $DP[k]$ schedule is at most $m-b+k$. We call the resulting algorithm $\AllMakeIt^*$ and the pseudocode for the $\AllMakeIt^*$ procedure is given in Algorithm~\ref{alg:allmakeitstar} .
        
        \begin{algorithm}[H] \caption{$\textsc{AllMakeIt}^*(m,U)$} \label{alg:allmakeitstar}
            \begin{algorithmic}[1]             
            \Input {$m$ -- number of agents, $U = \{u_1 \leq u_2 \leq ... \leq u_b\}$ -- inverse bike speeds;\\
            \textbf{Precondition:} $b=0$ or $u_b \le T(m,U)$}
            \Init
            \State $DP[0...b] \leftarrow$ empty table with $b+1$ entries; \Comment{DP table that stores schedules}
            \Begin
            \State $DP[0] \leftarrow [0;0;...;0]$; \Comment{Base case: no bikes, schedule is trivial, every agent walks}
            \For {$k=1$ to $b$}
                \State $M$ initialized to an empty $(m-b+k) \times (m-b)$ matrix;
                \For {$j=1$ to $m-b$} \Comment{Phase 1}
                    \For {$i=1$ to $m-b+k$}
                        \If {$j \leq i \leq j+k-1$}
                            \State $M(i,j) \leftarrow i-j+1$;
                        \Else
                            \State $M(i,j) \leftarrow 0$;
                        \EndIf
                    \EndFor
                \EndFor
                \For {$j=m-b+1$ to $m-b+k$} \Comment{Phase 2}
                    \State $M' \leftarrow DP[j-(m-b+1)]$;
                    \State $n' \leftarrow \textsc{Size}(M')$; \Comment{Get size of $M'$}
                    \For {$i=j$ to $m-b+k$}
                        \State $M' \leftarrow [M'; [i-m+b]_{n'}]$; \Comment{Append to $M'$ a row with $n'$ entries equal to $i-m+b$}
                    \EndFor
                    \State $M \leftarrow [M, M']$; \Comment{Append $M'$ to the right of $M$}
                \EndFor
                \State $DP[k] \gets \textsc{Reduce}(U_k,M)$; \Comment{Prevent size from blowing up} 
            \EndFor
            \State \Return $DP[b]$;
            \end{algorithmic}
        \end{algorithm} 

    \begin{theorem} \label{thm:bs_ub_ge_tmu}
        Let $(m,U)$ be the input to the \BS problem such that $u_b \ge T(m,U)$. The algorithm $\AllMakeIt^*$ runs in polynomial time on input $(m,U)$ and returns an optimal schedule $(X,M)$ with $\tau(X,M) = T(m,U)$.
    \end{theorem}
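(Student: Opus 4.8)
The plan is to establish two things about $\AllMakeIt^*$: correctness (it returns a feasible schedule with $\tau(X,M)=T(m,U)$) and polynomial running time. First I would reconcile the stated hypothesis with the algorithm's precondition. By parts 1 and 2 of Lemma~\ref{lem:lbs-for-bsprob}, every feasible schedule obeys $\tau \ge u_b$ and $\tau \ge T(m,U)$, hence $\tau \ge \max(u_b,T(m,U))$. The target value $T(m,U)$ is therefore attainable only when $u_b \le T(m,U)$; combined with the hypothesis $u_b \ge T(m,U)$, this places the input in the regime $u_b = T(m,U)$, at which $\AllMakeIt^*$'s precondition $u_b \le T(m,U)$ holds with equality and the two lower bounds coincide. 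It thus suffices to prove that, whenever the precondition is met, $\AllMakeIt^*$ returns a feasible schedule completing in time exactly $T(m,U)$ and runs in polynomial time.

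For correctness I would argue by induction on $k$ that after the $k$-th iteration of the outer loop, $DP[k]$ holds a schedule matrix for the subproblem $(m-b+k,\,U_k)$ whose induced partition achieves $\tau=T(m_k,U_k)$ with all agents synchronized at the end. Lemma~\ref{lem:allmakeit_invariant} guarantees these are exactly the subproblems that ever arise, so the table is indexed correctly and every subproblem needed at stage $k$ (those of index $<k$) is already filled; Lemma~\ref{cor:fast_set} guarantees each invoked subproblem satisfies the precondition. The loop body assembles the same two-phase matrix as $\AllMakeIt$ (Figure~\ref{fig:all_make_it}), with each recursively generated block replaced by the stored $DP[\cdot]$. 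The key point is that $DP[k]$ need not equal the matrix $\AllMakeIt$ would have produced; it only needs to be interchangeable with it in the splicing argument of Theorem~\ref{thm:all_make_it}. Theorem~\ref{thm:full_reduce} delivers exactly this: $\textsc{Reduce}$ preserves feasibility and completion time, so $DP[k]$ still completes in time $T(m_k,U_k)$, and by part~\ref{lempart:same-arrival} of Lemma~\ref{lem:lbs-for-bsprob} completing in that time forces all agents of the subproblem to arrive synchronized. Synchronization is the only structural feature of the sub-schedule on which the outer construction relies, so the expansion/splicing step and the feasibility check of Theorem~\ref{thm:all_make_it} (using Claim~\ref{claim:scaling}) carry over, yielding a feasible outer schedule with $\tau=T(m,U)$.

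For the running-time claim I would invoke the size guarantee of Theorem~\ref{thm:full_reduce}: each $\textsc{Reduce}$ call forces $DP[k]$ to have size at most $m-b+k \le m$, so every stored matrix is of polynomial size, removing the information-theoretic blow-up of $\AllMakeIt$ established in Theorem~\ref{thm:exp_size}. Unlike the recursion, which spawns exponentially many calls, the dynamic program fills only $b+1$ entries; each is built from previously computed polynomial-size matrices in polynomial time and closed off with a single $\textsc{Reduce}$ call, which by Theorem~\ref{thm:full_reduce} costs $O(\poly(m,n))$ with $n=O(m)$. Summing over the $b \le m$ iterations gives $O(\poly(m))$ total, and the partition $X$ accompanying $DP[b]$ is recovered by one further LP solve as in Lemma~\ref{lm:semi_opt}.

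The main obstacle I expect is the interchangeability argument: I must ensure that substituting a reduced sub-schedule $DP[k]$ for the genuine recursive output does not break feasibility of the assembled schedule. The subtlety is that $\textsc{Reduce}$ may return a combinatorially different matrix, so I cannot apply Theorem~\ref{thm:all_make_it} as a black box to the assembled matrix. Instead I would isolate the precise property of each sub-block used in that proof—that all its agents enter and leave synchronized, effectively travelling at the uniform average inverse speed $T(m_k,U_k)$ across the block—and verify that Theorem~\ref{thm:full_reduce} preserves exactly this property. Once "completes at $T(m_k,U_k)$" is seen to be equivalent to "all agents synchronized" via part~\ref{lempart:same-arrival} of Lemma~\ref{lem:lbs-for-bsprob}, the remainder is bookkeeping, and the partition constructed in Theorem~\ref{thm:all_make_it} together with the scaling of Claim~\ref{claim:scaling} applies unchanged.
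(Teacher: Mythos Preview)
Your proposal is correct and follows the same strategy as the paper: correctness is inherited from the analysis of $\AllMakeIt$ (Theorem~\ref{thm:all_make_it}) together with Lemma~\ref{lem:allmakeit_invariant}, and polynomial time is obtained by bounding intermediate matrix sizes via the $\textsc{Reduce}$ guarantee of Theorem~\ref{thm:full_reduce} (the paper makes this explicit with the bound $\mathrm{size}(M_k)\le mb$). If anything you are more careful than the paper, which dispatches correctness in a single sentence without addressing the interchangeability of the reduced sub-schedules $DP[j]$ with the genuine recursive outputs; your observation that completion time $T(m_k,U_k)$ forces synchronization via part~\ref{lempart:same-arrival} of Lemma~\ref{lem:lbs-for-bsprob}, and that synchronization is the only block-level property the splicing argument of Theorem~\ref{thm:all_make_it} needs, fills precisely that gap.
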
    
    
    \begin{proof}
        The correctness of the procedure follows from the correctness of $\AllMakeIt$ procedure, Lemma~\ref{lem:allmakeit_invariant} and Figure~\ref{fig:all_make_it}. Thus, it is only left to prove that the procedure runs in polynomial time.

        The pseudocode consists of several nested for loops, each with at most $m$ iterations. Within the $k^\text{th}$ iteration of the outer for loop, the code performs polynomially many concatenation operations with matrices $DP[j]$ for $j \le k-1$ and applies a polynomial time \textsc{Reduce} procedure. As long as the size of the largest matrix ever created in $\AllMakeIt$ is polynomial in $m$, the entire code runs in polynomial time. The schedule matrix $M_k$ created during the $k^\text{th}$ iteration  is largest immediately prior to calling \textsc{Reduce}. Since \textsc{Reduce} operation guarantees $size(DP[j]) \le m_j$ for $j \le k-1$, we have
        \[size(M_k) = m-b+1 + \sum_{j=1}^{k-1} size(DP[j]) \le m-b+1 + \sum_{j=1}^{k-1} m_j \le m + m(b-1) = mb.\]
        This means that all matrices encountered during the execution of $\AllMakeIt^*$ are of polynomial size.
    \end{proof}

\subsection{Finding an Optimal Schedule for the Case \texorpdfstring{$u_b > T(m,U)$}{ub > T(m,U)}}\label{ssec:ub_g_T}

    In this section we solve the \BS problem efficiently and optimally for the case when the slowest bike is the bottleneck. The idea is to reduce it to the case of Subsection~\ref{ssec:all_make_it}. 

    We make the following observation:
    \begin{lemma} \label{lem:ub_gt_tmu}
        Let $(m, U=\{u_1 \le u_2 \ldots \le u_b\})$ be the input to the \BS problem. If $u_b > T(m,U)$ then there exists $k \in \{1, 2, \ldots, b-1\}$ such that
        $u_{k} \le T(m_k, U_{k}) \le u_b$.
    \end{lemma}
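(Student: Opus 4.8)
The plan is to interpolate. Recall from Lemma~\ref{lem:lbs-for-bsprob}, part~\ref{lempart:TmU_order}, that $T(m,U) \le T(m-1, U\setminus\{u_k\})$ if and only if $u_k \le T(m,U)$, and more generally the quantities $T(m_k, U_k)$ for $k = b, b-1, \ldots, 1$ form a chain obtained by successively removing the largest remaining bike from the ``averaged'' multiset $U \cup \{1\}_{m-b}$. Since we are in the case $u_b > T(m,U) = T(m_b, U_b)$, removing $u_b$ strictly increases the average, so $T(m_b, U_b) < T(m_{b-1}, U_{b-1})$. Also recall part~\ref{lempart:min_u1}: $u_1 \le T(m_1, U_1)$ always. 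So as $k$ decreases from $b$ to $1$, the value $T(m_k, U_k)$ starts below $u_b$ (at $k=b$, where it equals $T(m,U) < u_b$) and I want to locate the largest $k \le b-1$ at which it is still $\le u_b$; at that $k$ I will also need $u_k \le T(m_k, U_k)$.

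First I would define $k^\star$ to be the largest index in $\{1, \ldots, b-1\}$ with $T(m_{k^\star}, U_{k^\star}) \le u_b$; this set is nonempty because... actually I need to be careful here, so let me instead argue as follows. Consider the largest $k \in \{0, 1, \ldots, b-1\}$ such that $u_{k+1} \le T(m_k, U_k)$, i.e.\ such that the precondition of $\AllMakeIt$-style recursion still holds at that level. Such a $k$ exists since $k = 0$ works: $u_1 \le T(m_1,U_1)$ by part~\ref{lempart:min_u1}, hmm, that gives $u_1 \le T(m_1, U_1)$, not $u_1 \le T(m_0, U_0) = T(m-b, \emptyset) = 1$ — but $u_1 < 1$ trivially, so $k=0$ indeed satisfies $u_{1} \le T(m_0,U_0)$. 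So the set of valid $k$ is nonempty; let $k^\star$ be its maximum. The key claim is $k^\star \le b-1$ (immediate, by the case hypothesis $u_b > T(m,U)$ which rules out $k = b$) together with the two desired inequalities for $k = k^\star$. The inequality $u_{k^\star} \le T(m_{k^\star}, U_{k^\star})$ I get as follows: if $k^\star \ge 1$ then by choice of $k^\star$ we would want $u_{k^\star} \le u_{k^\star+1} \le T(m_{k^\star}, U_{k^\star})$? That is not quite what maximality gives me directly, so the cleaner route is: maximality of $k^\star$ means $u_{k^\star+2} > T(m_{k^\star+1}, U_{k^\star+1})$ (when $k^\star+1 \le b-1$) or we are at the top; combined with part~\ref{lempart:TmU_order} this forces $T(m_{k^\star+1}, U_{k^\star+1}) < T(m_{k^\star}, U_{k^\star})$, and I iterate/telescope to compare against $u_b$.

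The honest version of the argument, and the one I would actually write, is monotone interpolation on the single real sequence $a_k := T(m_k, U_k)$, $k = 0, 1, \ldots, b$, using: (i) $a_b = T(m,U) < u_b$; (ii) $a_0 = 1 > u_b$ (since $u_b < 1$); (iii) whenever $a_{k} \ge u_{k}$... — more precisely I use part~\ref{lempart:TmU_order} which says $a_k$ vs $a_{k-1}$ is governed by $u_k$ vs $a_k$. Starting at $k = b$ and decreasing: as long as $u_k \le a_k$ we have $a_{k-1} \ge a_k$, so the sequence is nondecreasing as we go down; since it must climb from $a_b < u_b$ up to $a_0 = 1 > u_b$, there is a first index (going down) where it crosses $u_b$. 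Let $k$ be the largest index with $a_k \le u_b$; then $k \le b-1$ (as $a_b < u_b \le \ldots$, wait $a_b < u_b$ so $a_b \le u_b$, hmm then $k$ could be $b$) — so instead take $k$ to be the largest index with $a_k \le u_b$ and $k \le b-1$, i.e.\ start the search at $b-1$; this is nonempty iff $a_{b-1} \le u_b$, which I must verify, but that follows because if $a_{b-1} > u_b \ge T(m,U) = a_b$ then... I'd derive a contradiction with part~\ref{lempart:TmU_order} applied with $u_b$ and $a_b$ only if $u_b \le a_b$, which is false — so actually $a_{b-1} > a_b$ is consistent and expected. Hence the real content is: find $k \le b-1$ maximal with $a_k \le u_b$, show it exists by pushing down from $b-1$ until $a_k$ drops to $\le u_b$ (it must, or else $a_j > u_b$ for all $j \le b-1$, and then at each such level part~\ref{lempart:TmU_order} gives $u_{j+1} > a_{j+1}$... propagating down to $u_1 > a_1$, contradicting part~\ref{lempart:min_u1}), and then show $u_k \le a_k$ at that maximal $k$ (because $a_{k+1} > u_b \ge a_k$ forces, via part~\ref{lempart:TmU_order} with index $k+1$, that $u_{k+1} > a_{k+1}$ is false — wait, $a_{k+1} > a_k$ means $T(m_{k+1},U_{k+1}) > T(m_k, U_k) = T(m_{(k+1)-1}, U_{(k+1)-1})$, and by part~\ref{lempart:TmU_order} this holds iff $u_{k+1} \le T(m_{k+1}, U_{k+1}) = a_{k+1}$ with equality excluded — hence $u_k \le u_{k+1} \le a_{k+1}$, but I want $u_k \le a_k$; so I also use that going up by one, $u_{k+1} \le a_{k+1}$ combined with $a_k$ being the value after removing bike $k+1$ from level $k+1$ gives $u_k \le a_k$ by yet another application). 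I expect this last chain — extracting $u_k \le T(m_k, U_k)$ at the crossover index — to be the one fiddly point; everything else is bookkeeping on the monotone sequence $a_k$ via Lemma~\ref{lem:lbs-for-bsprob}. The main obstacle is purely in setting up the crossover index cleanly so that both $u_k \le T(m_k,U_k)$ and $T(m_k,U_k) \le u_b$ fall out simultaneously; I would handle it by choosing $k$ as the largest index in $\{1,\dots,b-1\}$ with $u_j \le T(m_j,U_j)$ for all $j \le k$ (equivalently, the recursion precondition survives down to level $k$), which exists since $j=1$ qualifies, and then use maximality plus part~\ref{lempart:TmU_order} to show $T(m_k,U_k) \le u_b$.
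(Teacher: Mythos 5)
Your final formulation is correct, and it takes a slightly different route from the paper. You take $k$ to be the largest index in $\{1,\dots,b-1\}$ such that $u_j \le T(m_j,U_j)$ for all $j \le k$ (nonempty for $b\ge 2$ since $u_1 \le T(m_1,U_1)$ by part~\ref{lempart:min_u1} of Lemma~\ref{lem:lbs-for-bsprob}), and the crossover step you leave unverified does go through in two cases: if $k=b-1$, then part~\ref{lempart:TmU_order} applied with $u_b > T(m,U)$ gives $T(m_{b-1},U_{b-1}) < T(m,U) < u_b$; if $k<b-1$, maximality gives $u_{k+1} > T(m_{k+1},U_{k+1})$, and part~\ref{lempart:TmU_order} at level $k+1$ then gives $T(m_k,U_k) < T(m_{k+1},U_{k+1}) < u_{k+1} \le u_b$. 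The paper instead argues by induction on $b$: the base case $b=1$ is vacuous (the premise forces $1-u_1 < (1-u_1)/m$, impossible), and for the inductive step one splits on whether $u_{b-1} \le T(m-1,U_{b-1})$ --- if so, $k=b-1$ works since $T(m-1,U_{b-1}) < T(m,U) < u_b$ by part~\ref{lempart:TmU_order}; if not, recurse on $(m-1,U_{b-1})$ and use $u_{b-1}\le u_b$. Both proofs rest on exactly parts~\ref{lempart:min_u1} and~\ref{lempart:TmU_order}; the induction buys a shorter writeup (each step is one application of part~\ref{lempart:TmU_order}) and, as you noticed, sidesteps any appeal to global monotonicity of $T(m_k,U_k)$, which indeed fails once the condition $u_k \le T(m_k,U_k)$ breaks. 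Note also that the two arguments can select different indices $k$, which is fine since the lemma only asserts existence. Two things to clean up: state the $b=1$ case explicitly, and cut the abandoned attempts --- in particular ``largest $k$ with $T(m_k,U_k)\le u_b$'' is the wrong extremal set, since it does not by itself deliver $u_k \le T(m_k,U_k)$; your final set, keyed to the $\AllMakeIt$ precondition, is the right one.
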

    \begin{proof}
        We give a proof by induction on $b$. For $b = 1$, the claim is trivially true, since its premise is false: $u_1 > T(m, \{u_1\}) = 1 - \frac{1}{m} (1-u_1)$ implies that $u_1 > 1$, a contradiction.

        Next consider $b \le 2$ and assume the claim is true for any input with $b-1$ bikes. We know that $u_{b-1} \le u_b$. If $u_{b-1} \le T(m-1, U_{b-1})$ then Part~\ref{lempart:TmU_order} of Lemma~\ref{lem:lbs-for-bsprob} implies that $T(m-1, U_{b-1}) \le T(m, U)$ which in turn is $<u_b$ by assumption. So  $k=b-1$ satisfies $u_{k} \le T(m_k, U_{k}) \le u_b$. If instead $u_{b-1} >  T(m-1, U_{b-1})$, we can apply the induction assumption to $(m-1, U_{b-1})$ and obtain a $k' \in \{1, 2, \ldots, b-2\}$ such that $u_{k'} \le T(m_{k'}, U_{k'}) \le u_{b-1}$. We conclude that 
        \[u_{k'} \le T(m_{k'}, U_{k'}) \le u_{b-1} \le u_b,\]
        which completes the proof. 
    \end{proof}

    With this lemma we can prove the main result of this section. 

    \begin{theorem}\label{thm:bs_ub_le_tmu}
        Let $(m,U)$ be the input to the \BS problem such that $u_b > T(m,U)$. There exists a polynomial time algorithm that constructs an optimal schedule $(X,M)$ with $\tau(X,M) = u_b$.
    \end{theorem}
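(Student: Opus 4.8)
The plan is to reduce the case $u_b > T(m,U)$ to the case $u_b \le T(m,U)$ already handled by \textsc{AllMakeIt}$^*$. By Part~1 of Lemma~\ref{lem:lbs-for-bsprob}, every feasible \BSP schedule has $\tau \ge u_b$, so it is enough to construct in polynomial time a feasible \BSP schedule (one in which all agents \emph{and} all bikes reach $1$) with completion time exactly $u_b$.

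First I would apply Lemma~\ref{lem:ub_gt_tmu} to obtain an index $k \in \{1, \dots, b-1\}$ with $u_k \le T(m_k, U_k) \le u_b$, where $m_k = m-b+k$ and $U_k = \{u_1, \dots, u_k\}$. The idea is to split the fleet: the $b-k$ slowest bikes $u_{k+1}, \dots, u_b$ are each carried across by a dedicated agent who rides only that bike over all of $[0,1]$, while the fast bikes $U_k$ are shared among the remaining $m_k$ agents. The inequality $u_k \le T(m_k, U_k)$ is exactly the precondition required to run \textsc{AllMakeIt}$^*(m_k, U_k)$, which by Theorem~\ref{thm:bs_ub_ge_tmu} outputs, in polynomial time, a valid \BSP schedule $(X', M')$ of size $n' \le m_k$ in which all $m_k$ agents and all $k$ bikes of $U_k$ reach the end at the common time $T(m_k, U_k)$.

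Next I would assemble the combined schedule on the common partition $X'$: form the $m \times n'$ matrix $M$ whose first $m_k$ rows are $M'$ and whose row $m_k+i$ is the constant label $k+i$ for $i = 1, \dots, b-k$. I then need to check feasibility (Definition~\ref{def:feasible}): parts~1 and 2 are immediate, since the appended rows are constant and carry the distinct labels $k+1, \dots, b$ which never appear in the top block, and part~3 holds because the only pickup-switches between distinct agents are those already in $M'$, whose timing constraints are met by $X'$ (the completion times $t_{i,j}$ of agents $i \le m_k$ in $(X', M)$ agree with those in the feasible schedule $(X', M')$). Every bike reaches $1$: bikes $1,\dots,k$ by the \textsc{AllMakeIt}$^*$ guarantee, and bikes $k+1,\dots,b$ because they are ridden through the last interval. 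Finally, since $\sum_j x'_j = 1$, agent $i \le m_k$ arrives at $T(m_k, U_k)$ and agent $m_k+i$ arrives at $u_{k+i}$, so $\tau(X', M) = \max\{T(m_k, U_k), u_{k+1}, \dots, u_b\} = u_b$ using $T(m_k, U_k) \le u_b$; together with the lower bound this is optimal. (If one prefers a canonical partition, setting $X = \textsc{SolveLP}(U,M)$ can only decrease the completion time by Lemma~\ref{lm:semi_opt}, so it stays equal to $u_b$.) For the running time: computing $T(m_j, U_j)$ for all $j$ to find $k$ is $O(b^2)$ operations, \textsc{AllMakeIt}$^*$ is polynomial by Theorem~\ref{thm:bs_ub_ge_tmu}, and the assembly (and optional LP) are polynomial; the output already has size $n' \le m$.

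The step I expect to be the main obstacle is the feasibility bookkeeping for the merged schedule — in particular, justifying that the sub-instance partition $X'$ can be reused for the larger matrix $M$ without violating part~3 of Definition~\ref{def:feasible}, and confirming that \textsc{AllMakeIt}$^*$ delivers all the fast bikes (not just the agents) to the endpoint, so that $(X', M)$ is genuinely a legal \BSP solution rather than just an \RBSP one.
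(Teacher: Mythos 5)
Your proposal is correct and follows essentially the same route as the paper: invoke Lemma~\ref{lem:ub_gt_tmu} to find the split point, run $\textsc{AllMakeIt}^*$ on the fast-bike sub-instance (whose precondition the lemma guarantees), assign each remaining slow bike to a dedicated agent riding it over all of $[0,1]$, and stack the two schedules on a common partition to get completion time $\max\{T(m_k,U_k),u_{k+1},\ldots,u_b\}=u_b$, matching the lower bound from Lemma~\ref{lem:lbs-for-bsprob}. The feasibility concerns you flag are handled the same way in the paper (the appended constant rows introduce no new pickup-switches, and $\textsc{AllMakeIt}^*$ solves the \BSP{} sub-instance so the fast bikes do reach the end), so no gap remains.
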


    \begin{proof}
        First, we find $k$ such that $u_{b-k} \le T(m-k, U_{b-k}) \le u_b$, which is guaranteed to exist by Lemma~\ref{lem:ub_gt_tmu}. Second, we apply the algorithm $\AllMakeIt^*(m-k, U_{b-k})$ to obtain a schedule $(X_1, M_1)$ in which $m-k$ agents share bikes $U_{b-k}$ to complete the interval $[0,1]$ in time $\tau(X_1, M_1) = T(m-k, U_{b-k}) \le u_b$. Next, we create a schedule $(X_2, M_2) = ([1], [b-k+1; b-k+2; \ldots; b])$. In words, in $(X_2, M_2)$ the remaining $k$ agents use the remaining bikes $b-k+1, b-k+2, \ldots, b$ without sharing, i.e., each agent gets assigned a unique bike and the agent rides the bike for the entirety of the interval $[0,1]$. Notice that the agents in $(X_2, M_2)$ have completion times $u_{b-k+1} \le u_{b-k+2} \le \ldots \le u_b.$ Lastly, we let $(X,M)$ be the concatenation of schedules $(X_1, M_1)$ and $(X_2, M_2)$. This means that the partition $X_2$ needs to be refined to match the partition $X_1$ and the corresponding entries of $M_2$ need to be duplicated, and then $(X_1, M_1)$ can be simply stacked on top of $(X_2, M_2)$. The feasibility of $(X,M)$ is evident and $\tau(X,M) \le \max(\tau(X_1, M_1), \tau(X_2, M_2)) = u_b$. The schedule matrix $M$ is schematically shown in Figure~\ref{fig:all_make_it_ublb}. The running time is clearly polynomial, since $\AllMakeIt^*$ runs in polynomial time and the rest of the computation consists of a few simple splicing and concatenation operations on lists of polynomial size. Since the procedure is rather straightforward, we omit its pseudocode. 
        \begin{figure}[!h]
            \centering
            \includegraphics[scale=0.45]{all_make_it_ublb.pdf}
            \caption{The schedule matrix for the \BSP problem for the case when $u_b > T(m,U)$. Here $k$ is the smallest integer such that $u_{b-k} \le T(m-k, U_{b-k})$.\label{fig:all_make_it_ublb}}
        \end{figure}    
    \end{proof}

\section{The \texorpdfstring{\RBSP}{RBS} problem}\label{sec:rbs}
    In Section~\ref{sec:bs} we presented an optimal algorithm for the \BSP problem. Recall that in the case $u_b > T(m,U)$ the minimum arrival time is $u_b$. Thus, the arrival time in that case is controlled by how much time it takes the slowest bike to travel from $0$ to $1$. This suggests that if we relax the requirement of all bikes making it to the end and allow agents to abandon the slowest bike, for example, the overall completion time for the agents and the remaining bikes might be improved. This naturally leads to the \RBSP problem. We begin by developing a polynomial time algorithm that solves the \RBSP problem optimally when the abandonment limit $\ell$ is $1$, that is we allow at most one bike to be abandoned.

    \emph{High-level overview of the algorithm.} Our approach to the \RBSP problem with the abandonment limit $\ell = 1$ mimics to some extent our approach to the \BSP problem. The structure of the optimal schedule depends on the relationships between speeds of two slowest bikes and certain expressions lower bounding the optimal completion time. More specifically, 
    we design an algorithm \textsc{AllButOne} that solves the \RBSP problem with the abandonment limit $\ell=1$ optimally in polynomial time. We follow the next steps:

    \begin{enumerate}
        \item We begin by generalizing the lower bound of Lemma~\ref{lem:lbs-for-bsprob} to the situation where multiple bikes may be abandoned.
        \item Analyzing the lower bound from the first step we can immediately conclude that $\AllMakeIt^*$ algorithm provides an optimal solution to the \RBSP problem (in fact for any value of $\ell$) under the condition $u_b \le T(m,U)$. This is the content of of Theorem~\ref{thm:rbs_ub_lt_tmu}, which is also subsumed by Theorem~\ref{thm:rbs-spec-case1} below.
        \item By the previous step, it remains to handle the case $u_b > T(m,U)$. We split this case into two sub-cases depending on the relationship between the inverse speed of the second slowest bike $u_{b-1}$ and $T_1$, which is the lower bound from Step 1 specialized to the case $\ell = 1$ (the subscript in $T_1$ indicates the abandonment limit). We handle the case of $u_{b-1} \le T_1(m,U)$ first.
        \item Lastly, we show how to handle the remaining sub-case of $u_b > T(m,U)$, namely, when $u_{b-1} > T_1(m,U)$.
    \end{enumerate}

    Carefully carrying out the above steps results in the following two theorems that jointly provide a solution to the \RBSP problem with the abandonment limit $\ell =1$.
    \begin{theorem}
        \label{thm:rbs_ub_gt_tmu_ub1_lt_t1mu}
            Let $(m,U = \{u_1 \le u_2 \le \cdots \le u_b\})$ be such that $u_b > T(m,U)$ and $u_{b-1} \le T_1(m,U)$. The schedule $(X,M)$ output by $\textsc{AllButOne}(m,U)$ is feasible and has completion time $\tau(X,M) = T_1(m,U)$. In particular, $(X,M)$ is an optimal solution to the \RBSP problem with abandonment limit 1. 
        \end{theorem}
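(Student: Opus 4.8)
The proof splits into a matching lower bound and an analysis of the construction. For the lower bound, I would invoke the $\ell$-abandonment generalization of Lemma~\ref{lem:lbs-for-bsprob} obtained in Step~1, specialized to $\ell=1$: it gives $\tau \ge T_1(m,U)$ for every feasible schedule of the \RBSP instance $(m,U)$ with abandonment limit $1$. Together with the trivial bound $\tau\ge u_{b-1}$ (with only one abandonment allowed, at least $b-1$ bikes must cross, so some bike of inverse speed $\ge u_{b-1}$ reaches $1$), this yields $\tau\ge\max\{u_{b-1},T_1(m,U)\}$, and the hypothesis $u_{b-1}\le T_1(m,U)$ is precisely the regime in which this equals $T_1(m,U)$. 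Hence it suffices to show that $\AllButOne(m,U)$ returns a feasible schedule of completion time exactly $T_1(m,U)$; optimality is then immediate.

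For the construction I would follow the blueprint of \AllMakeIt. The schedule produced by \AllButOne should transport bike $b$ forward -- sharing it among the agents so nobody falls behind -- up to a carefully chosen abandonment location $y$, drop it there, and on the remaining interval $[y,1]$ run an \AllMakeIt\nobreakdash-type schedule on the $m$ agents and the $b-1$ surviving bikes $U_{b-1}=\{u_1,\dots,u_{b-1}\}$. The first thing to check is that the parameters chosen by the algorithm are legitimate: that $0<y<1$, and that the precondition $u_{b-1}\le T(m,U_{b-1})$ needed to invoke \AllMakeIt (equivalently its polynomial-time version, Theorem~\ref{thm:bs_ub_ge_tmu}) on $[y,1]$ is met. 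Here I expect to use $u_b>T(m,U)$ -- which is what forces $y<1$, i.e.\ makes abandonment worthwhile -- and $u_{b-1}\le T_1(m,U)$ together with part~\ref{lempart:TmU_order} of Lemma~\ref{lem:lbs-for-bsprob} to relate $T_1(m,U)$ to $T(m,U_{b-1})$, to guarantee $y>0$, and to verify the \AllMakeIt precondition. This mirrors how Theorem~\ref{thm:bs_ub_le_tmu} reduced the case $u_b>T(m,U)$ of the \BSP problem back to the ``all make it'' regime.

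Next, exactly as in the proof of Theorem~\ref{thm:all_make_it}, rather than reasoning about the fully expanded partition I would work with an \emph{unexpanded} schedule encoding the phase structure: a first phase that ferries bike $b$ and passes the bikes backward among the agents just as in the first phase of \AllMakeIt, so that a block of synchronized walkers ends up ahead of the remaining bikes; the drop of bike $b$ at $y$; and a final phase that is a scaled copy of the (inductively available) \AllMakeIt schedule for $(m,U_{b-1})$ over $[y,1]$. One then writes the recursion for the unexpanded interval lengths $z_j$ forcing the $j$-th completion times of agents $j-1$ and $j$ to coincide -- the precise analogue of the displayed formula for $z_j$ in the proof of Theorem~\ref{thm:all_make_it} -- shows each coefficient in it is nonnegative (because agent $i-1$ always rides a bike at least as fast as agent $i$, so $\widetilde{M'}(i-1,p)\le\widetilde{M'}(i,p)$, and each denominator $\widetilde{M'}(i-1,j)-\widetilde{M'}(i,j)$ is positive), concludes $z_j\ge0$ and that all agents are synchronized throughout, and then expands via Claim~\ref{claim:scaling} into a genuine schedule. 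Feasibility is checked against Definition~\ref{def:feasible} as for \AllMakeIt: the only pickup-switches outside recursive sub-schedules occur between consecutive agents $i-1,i$ at synchronized instants, so condition~3 holds. An agent-distance accounting (bikes $1,\dots,b-1$ each travel the whole interval, bike $b$ travels distance $y$, the rest is walking) computes the common arrival time and, by the definition of $T_1$ and the choice of $y$, identifies it with $T_1(m,U)$; finally Lemma~\ref{lm:semi_opt} guarantees the \textsc{SolveLP} step inside \AllButOne recovers a partition with this same completion time.

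The main obstacle I anticipate is pinning down the abandonment point $y$ and proving it lies in $(0,1)$ precisely under the two hypotheses -- untangling how $u_b>T(m,U)$ and $u_{b-1}\le T_1(m,U)$ interact with the recursive quantities $T(m_k,U_k)$ and with $T(m,U_{b-1})$, and confirming the \AllMakeIt precondition on $[y,1]$. The second delicate point is feasibility condition~3 of Definition~\ref{def:feasible} at the seam: every bike -- including bike $b$ during the ferry phase and bikes $1,\dots,b-1$ as they are handed off in the final \AllMakeIt phase -- must be present when its next rider arrives even though bike $b$ vanishes from the schedule after $y$; the synchronization built into the unexpanded partition is what makes this work, but it has to be verified across the phase boundary rather than inherited from \AllMakeIt.
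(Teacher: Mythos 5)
Your lower-bound half is fine: Lemma~\ref{lem:t1mu_lb} already gives $\tau \ge T_1(m,U)$ for every feasible schedule with abandonment limit $1$, so the whole burden is on the construction. But the construction you propose---ferry bike $b$ to $y$ while keeping everybody together, drop it, and then run $\AllMakeIt(m,U_{b-1})$ on $[y,1]$---does not work, for two separate reasons. First, you cannot have all $m$ agents synchronized at $y$ with every bike in continuous use on $[0,y]$: that would put the group at $y$ at time $yT(m,U)$, while bike $b$ needs time $yu_b > yT(m,U)$ just to cover distance $y$. So either the fast bikes idle while waiting for bike $b$ (destroying the averaging identity $\sum_i t_i = m - \sum_k (1-u_k)y_k$ that the bound $T(m,U,Y)$ relies on), or the agents are not co-located at $y$ and you cannot splice in $\AllMakeIt(m,U_{b-1})$ at all. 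Second, even granting synchronization at time $yu_b$, the agent who abandons bike $b$ would cross $[y,1]$ at average inverse speed $T(m,U_{b-1}) > u_1$ and finish at $yu_b + (1-y)T(m,U_{b-1}) > yu_b+(1-y)u_1 = T_1(m,U)$; by part~\ref{lempart:lb_rbs_1} of Lemma~\ref{lem:lbs-for-rbsprob} that agent's only way to finish by $T_1(m,U)$ is to ride bike $1$ \emph{exclusively} from $y$ to $1$. So sharing all of $U_{b-1}$ among all $m$ agents after $y$ provably overshoots the optimum.

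The schedule the paper actually builds is different in exactly these two respects. It never synchronizes the whole population at $y$: it takes $q$ to be the largest index with $u_q \le T(m_q,U_q)$, runs $\AllMakeIt(m_q,U_q)$ for the subgroup $1,\ldots,m_q$ on $[0,z_1]$ while agents $m_q+1,\ldots,m$ ride bikes $q+1,\ldots,b$ individually, and then absorbs the stragglers one at a time (the phase-2 mechanism of $\AllMakeIt$, not its phase 1). And after agent $m$ abandons bike $b$ at $z_1$, bike $1$ is handed to agent $m$ alone for the remainder, so the growing synchronized group shares only $U'_k = U_k\setminus\{u_1\} = \{u_2,\ldots,u_k\}$; this forces $t_m = z_1u_b + (1-z_1)u_1 = T_1(m,U)$ while the valid unexpanded partition makes everyone else arrive simultaneously with agent $m$. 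Consequently the precondition you need to verify is $u_{q+i}\le T(m_{q+i},U'_{q+i})$ (obtained from $u_{b-1}\le T_1(m,U)\le T(m,U_{b-1})\le T(m-1,U'_{b-1})$ together with Lemma~\ref{cor:fast_set}), not $u_{b-1}\le T(m,U_{b-1})$ for a call of $\AllMakeIt$ on all $m$ agents. Your recursion for the $z_j$ and the feasibility check at the seam are the right tools, but they have to be applied to this asymmetric schedule, not to the symmetric one you describe.
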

        \begin{theorem}\label{thm:rbs_ub_gt_tmu_ub1_gt_t1mu}
        Let $(m,U = \{u_1 \le u_2 \le \cdots \le u_b\})$ be such that $u_b > T(m,U)$ and $u_{b-1} > T_1(m,U)$. There is a polynomial time computable feasible schedule $(X,M)$ in which at most one bike is abandoned and $\tau(X,M) = u_{b-1}$. In particular, $(X,M)$ is an optimal solution to the \RBSP  problem with abandonment limit 1.
    \end{theorem}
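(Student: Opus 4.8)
The plan is to show the optimal completion time equals $u_{b-1}$, by a matching lower bound and a construction. For the \emph{lower bound}: first note the hypotheses force $b\ge 2$, since for $b\le 1$ the inequality $u_b>T(m,U)$ would give $u_1>1$. Let $(X,M)$ be any feasible schedule abandoning at most one bike and let $S$ be the set of bikes reaching location $1$, so $|S|\ge b-1$. A short case analysis gives $\max_{i\in S}u_i\ge u_{b-1}$: the only way for $u_{b-1}$ not to appear among $\{u_i:i\in S\}$ is to abandon bike $b-1$, but then bike $b$ survives and $u_b\ge u_{b-1}$. Since a bike moves only while ridden, and then at speed exactly $1/u_i$, any surviving bike $i$ needs time at least $u_i$ to traverse $[0,1]$, so $\tau(X,M)\ge\max_{i\in S}u_i\ge u_{b-1}$. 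Combined with the generalized lower bound $\tau(X,M)\ge T_1(m,U)$ and the hypothesis $u_{b-1}>T_1(m,U)$, the optimum is at least $u_{b-1}$.

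For the \emph{construction} I would abandon bike $b$ and build a feasible schedule with $\tau=u_{b-1}$ in which bikes $1,\dots,b-1$ all reach the end, splitting on the sign of $u_{b-1}-T(m,U_{b-1})$. If $u_{b-1}>T(m,U_{b-1})$, then $(m,U_{b-1})$ is a \BSP instance whose slowest bike is the bottleneck, so Theorem~\ref{thm:bs_ub_le_tmu} yields in polynomial time a feasible \BSP schedule for $(m,U_{b-1})$ with completion time exactly $u_{b-1}$; abandoning bike $b$ at location $0$ makes it a feasible \RBSP schedule with a single abandonment, optimal by the lower bound. If instead $u_{b-1}\le T(m,U_{b-1})$, then $\AllMakeIt^*(m,U_{b-1})$ by itself only attains $T(m,U_{b-1})\ge u_{b-1}$, so bike $b$ must be exploited: I would ride it along a prefix $[0,y]$ and abandon it at $y$, interleaving this with an \AllMakeIt-style sharing of the surviving bikes and peeling off agents that ride the slowest surviving bike individually (exactly as in the proof of Theorem~\ref{thm:bs_ub_le_tmu}, so the precondition of $\AllMakeIt^*$ holds at every recursion level), and choosing the split point $y$ so the pieces compose to total time exactly $u_{b-1}$.

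The \emph{main obstacle} is this last sub-case. Because $u_b>T(m,U)$, part~\ref{lempart:same-arrival} of Lemma~\ref{lem:lbs-for-bsprob} forbids a synchronized schedule that also brings every bike (including bike $b$) to the end, so the portion of the schedule in which bike $b$ is still present cannot simply terminate with all $m$ agents and all bikes meeting at $y$; the agents necessarily enter the bike-$b$-free part of $[0,1]$ out of synchronization, and one must argue that the $\AllMakeIt^*$-based tail can nonetheless be spliced on feasibly and with the prescribed arrival time. As in Theorem~\ref{thm:all_make_it}, verifying condition~3 of Definition~\ref{def:feasible} without a closed form for the partition vector will require passing to an ``unexpanded'' schedule and inducting on the number of bikes; the role of the hypothesis $u_{b-1}>T_1(m,U)$ is precisely to ensure the target time $u_{b-1}$ lies above every intermediate lower bound encountered in the recursion, so the construction neither overshoots nor stalls — mirroring the analysis behind \AllButOne in the complementary case $u_{b-1}\le T_1(m,U)$.
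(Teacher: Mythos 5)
Your lower bound is correct and in fact more explicit than what the paper writes: since at most one bike may be abandoned, at least one of bikes $b-1$, $b$ reaches location $1$, and any surviving bike $i$ forces $\tau \ge u_i$, so $\tau \ge u_{b-1}$. Your Case 1 ($u_{b-1} > T(m,U_{b-1})$) is also fine: applying Theorem~\ref{thm:bs_ub_le_tmu} to $(m,U_{b-1})$ and discarding bike $b$ at the origin gives a feasible schedule meeting the bound.

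The genuine gap is your Case 2, $u_{b-1} \le T(m,U_{b-1})$, and it is not vacuous: one always has $T_1(m,U) < T(m,U_{b-1})$ when $u_b > T(m,U)$ (by Lemma~\ref{lm:T_1(m,U)}, since $T(m,U,[1,\ldots,1,y])$ is strictly below its value at $y=0$), so $u_{b-1}$ can genuinely land in the interval $(T_1(m,U),\, T(m,U_{b-1})]$. In that regime you never actually construct the schedule; you describe riding bike $b$ on a prefix $[0,y]$ and ``splicing on'' an $\AllMakeIt^*$ tail, and then you yourself identify the feasibility of that splice as an unresolved obstacle. The paper closes exactly this hole with two ingredients you are missing. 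First, Lemma~\ref{lem:ub_gt_t1mu} (the $T_1$-analogue of Lemma~\ref{lem:ub_gt_tmu}) produces an index $k \in \{1,\ldots,b-2\}$ with $u_{b-k-1} \le T_1(m-k,\, U_{b-k-1}\cup\{u_b\}) \le u_{b-1}$. Second, the construction keeps the \emph{slowest} bike $b$ in the shared pool: the $k$ peeled-off agents ride the intermediate bikes $u_{b-k},\ldots,u_{b-1}$ solo for the whole interval, while the remaining $m-k$ agents run $\AllButOne(m-k,\, U_{b-k-1}\cup\{u_b\})$ --- the procedure already proved feasible and optimal in Theorem~\ref{thm:rbs_ub_gt_tmu_ub1_lt_t1mu} --- which is what rides bike $b$ on a prefix and abandons it. Invoking $\AllButOne$ as a black box on a sub-instance that provably satisfies its precondition is what sidesteps the unexpanded-partition feasibility argument you were worried about; without that counting lemma and that reduction, your Case 2 remains a sketch rather than a proof.
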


    Having a closer look at our results for the \BSP and \RBSP problems, it turns out that we can solve the \RBSP problem optimally under some conditions. In particular,  
    when all bikes are "fast enough", that is, when $u_b \leq T(m, U)$, we have: 

    \begin{theorem}\label{thm:rbs-spec-case1}
        $\AllMakeIt^*$ solves the \RBSP problem optimally when $u_b \leq T(m, U)$. 
    \end{theorem}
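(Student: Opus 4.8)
The plan is to sandwich the optimal \RBSP value for such an input: show that \emph{every} feasible \RBSP schedule has completion time at least $T(m,U)$, and recall that $\AllMakeIt^*$ already outputs a feasible schedule meeting this bound. The second half is immediate: since $u_b \le T(m,U)$, Theorem~\ref{thm:bs_ub_ge_tmu} guarantees that $\AllMakeIt^*(m,U)$ runs in polynomial time and returns a feasible \BSP schedule $(X,M)$ with $\tau(X,M) = T(m,U)$; and any schedule feasible for \BSP is \emph{a fortiori} feasible for \RBSP, since \RBSP merely drops the requirement that the bikes reach the end while keeping the same objective. So the whole statement reduces to the lower bound, which is precisely the specialization to $u_b \le T(m,U)$ of the generalized lower bound promised in Step~1 of the high-level overview for the \RBSP problem.

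For that lower bound I would rerun the averaging argument from the proof of Lemma~\ref{lem:lbs-for-bsprob}, being careful to account for abandoned bikes. Fix any feasible \RBSP schedule $(X,M)$ of size $n$. For each bike $k$ let $r_k$ denote the total distance bike $k$ is ridden, i.e.\ the sum of the lengths $x_j$ over the columns $j$ in which bike $k$ is used. Part~1 of Definition~\ref{def:feasible} forces the columns using bike $k$ to form a prefix $\{1,2,\ldots,j_1(k)\}$ of the column indices, so $r_k = \sum_{j \le j_1(k)} x_j \le \sum_{j=1}^n x_j = 1$, with equality exactly when bike $k$ makes it to the end. Summing completion times column by column and swapping the order of summation yields the identity
\[
\sum_{i=1}^{m} t_i(X,M) \;=\; m \;-\; \sum_{k=1}^{b} (1-u_k)\, r_k \;\ge\; m \;-\; \sum_{k=1}^{b} (1-u_k),
\]
the inequality using $1-u_k>0$ and $r_k \le 1$. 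Dividing by $m$ and bounding the maximum below by the average gives $\tau(X,M) \ge 1 - \frac{1}{m}\sum_{k=1}^b (1-u_k) = T(m,U)$. Intuitively, abandoning a bike only shrinks that bike's ridden distance below $1$, which can only raise the average arrival time, so abandonment never helps unless it is forced — and here the slowest-bike bottleneck $u_b \le T(m,U)$ is not binding.

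Combining the two halves: $\AllMakeIt^*$ produces a feasible \RBSP schedule with $\tau = T(m,U)$ and no feasible \RBSP schedule does strictly better, so $\AllMakeIt^*$ is optimal for \RBSP whenever $u_b \le T(m,U)$; note the lower bound used nothing about an abandonment limit $\ell$, so the conclusion is independent of $\ell$ (which is why this theorem subsumes Theorem~\ref{thm:rbs_ub_lt_tmu}). The only place requiring care is the claim $r_k \le 1$ — i.e.\ that a bike's usage columns form a prefix — which is exactly what Part~1 of Definition~\ref{def:feasible} gives; the rest is the one-line column-summation identity together with the observation that \RBSP relaxes \BSP and the direct invocation of Theorem~\ref{thm:bs_ub_ge_tmu}.
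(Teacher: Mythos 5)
Your proposal is correct and follows essentially the same route as the paper: the paper proves the lower bound via Lemma~\ref{lem:lbs-for-rbsprob} (the averaging identity $\sum_i t_i = m - \sum_k (1-u_k)y_k$ with $y_k$ the ridden distance of bike $k$) combined with the monotonicity Claim~\ref{claim:lb_monotonicity}, which is exactly your column-summation identity plus the observation $r_k \le 1$, and the upper bound is likewise inherited from $\AllMakeIt^*$. Your remark that the bound is independent of the abandonment limit $\ell$ matches the paper's observation that this theorem subsumes Theorem~\ref{thm:rbs_ub_lt_tmu}.
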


    In fact, even if {\em all but} the slowest bike are fast enough, we can solve the \RBSP problem optimally. In particular, if $u_b > T(m,U)$ and $u_{b-1} \le T_1(m,U)$ then $\AllButOne$ produces a schedule where only a single bike is abandoned. This schedule is also a possible solution for the \RBSP problem for any $\ell \ge 1$. In the following theorem we demonstrate that there is no better schedule.

    \begin{theorem}\label{thm:rbs_ub_lt_t1mu}
        $\AllButOne$ solves the \RBSP problem optimally when $u_b > T(m,U)$ and $u_{b-1} \leq T_1(m,U)$.
    \end{theorem}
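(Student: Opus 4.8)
The goal is to show that no feasible schedule for the \RBSP problem (with any abandonment limit $\ell \ge 1$, in particular $\ell=1$) beats the completion time $T_1(m,U)$ achieved by $\AllButOne$ under the hypothesis $u_b > T(m,U)$ and $u_{b-1} \le T_1(m,U)$. Since Theorem~\ref{thm:rbs_ub_gt_tmu_ub1_lt_t1mu} already gives us a feasible schedule with $\tau = T_1(m,U)$ in which exactly one bike is abandoned, the entire content of this theorem is a matching lower bound: every feasible \RBSP schedule has $\tau(X,M) \ge T_1(m,U)$. So I would phrase the proof as ``by Theorem~\ref{thm:rbs_ub_gt_tmu_ub1_lt_t1mu} the schedule produced by $\AllButOne$ is feasible with completion time $T_1(m,U)$; it remains to prove optimality, i.e.\ a lower bound of $T_1(m,U)$ on any feasible \RBSP schedule.''

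For the lower bound I would mimic the averaging argument of Lemma~\ref{lem:lbs-for-bsprob}, part~2, but now accounting for abandoned bikes. Suppose an arbitrary feasible \RBSP schedule abandons some subset $S$ of the bikes (possibly empty) and the remaining bikes $[b]\setminus S$ make it to the end. A bike that makes it to the end is ridden over a total distance exactly $1$; a bike abandoned at location $y$ is ridden over total distance at most (in fact exactly, if we ignore idle bikes, which never help) $y \le 1$. Hence the total biking distance is at most $\sum_{j=1}^b 1 = b$, so the total walking distance of the $m$ agents is at least $m-b$; but this is the same bound as in the \BSP case, so the naive average only reproduces $T(m,U)$, not $T_1(m,U)$. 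The extra leverage must come from Step~1 of the high-level overview: the generalized lower bound $T_\ell$, which presumably says that if at most $\ell$ bikes are abandoned, then among the $b$ bikes at least $b-\ell$ are ridden the full distance $1$, and crucially the $\ell$ abandoned bikes are ``best case'' the slowest ones — so the total biking time is \emph{minimized} (hence the walking time maximized, hence the average-based lower bound maximized) by abandoning the $\ell$ slowest bikes and riding them distance $0$. Concretely, $T_1(m,U) = 1 - \frac{1}{m}\sum_{j=1}^{b-1}(1-u_j)$ should be the value of $T(m,U')$ over the best sub-multiset $U'$ obtained by deleting one bike, and the key inequality is that deleting the slowest bike $u_b$ (which is above average since $u_b > T(m,U)$) increases this average-type bound. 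This is exactly the kind of statement captured by part~\ref{lempart:TmU_order} of Lemma~\ref{lem:lbs-for-bsprob}, so I would invoke it: the lower bound from Step~1, for $\ell=1$, is $\max_k T(m-0,\,U\setminus\{u_k\})$-style quantity, and it is maximized at $k=b$ precisely because $u_b$ exceeds the overall average.

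So the proof skeleton is: (i) cite Theorem~\ref{thm:rbs_ub_gt_tmu_ub1_lt_t1mu} for the upper bound and feasibility; (ii) invoke the generalized lower bound from Step~1 (the $T_\ell$ lower bound) specialized to $\ell = 1$, which says every feasible \RBSP schedule abandoning at most one bike has $\tau \ge \min_{\text{abandon choice}} (\text{average completion time})$; (iii) show this minimum equals $T_1(m,U) = 1 - \frac1m\sum_{j=1}^{b-1}(1-u_j)$, using that $u_b$ is the largest inverse speed, hence the abandonment that hurts the lower bound least is abandoning the slowest bike and never moving it; (iv) conclude $\tau \ge T_1(m,U)$, matching the upper bound. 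I would also remark that since the produced schedule abandons only one bike, it is simultaneously feasible and optimal for \RBSP with any abandonment limit $\ell \ge 1$, because allowing more abandonments cannot decrease the lower bound below $T_1(m,U)$ in this regime (abandoning a second, even slower-or-equal bike is dominated; more carefully, $T_\ell$ is nonincreasing once we have already removed the above-average elements — one would check $u_{b-1} \le T_1(m,U)$ means the second-slowest bike is now \emph{below} the new average, so removing it would \emph{decrease} the bound, i.e.\ not help the lower bound, confirming $T_1$ is the right value).

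**Main obstacle.** The delicate point is step~(iii)/(iv): turning the combinatorial ``which bikes get abandoned'' choice into the clean claim that $T_1(m,U)$ is exactly the maximal lower bound, and in particular justifying that an abandoned bike contributes zero biking distance in the worst case. One must argue that for the purpose of the lower bound we may assume an abandoned bike is simply never used (any use of it only adds biking time, which only \emph{raises} $\sum_i t_i$, never lowers it — good for the schedule's objective would be the opposite, so the adversary minimizing $\sum_i t_i$ sets abandoned-bike distance to $0$), and that among single deletions, deleting $u_b$ gives the largest bound by part~\ref{lempart:TmU_order} of Lemma~\ref{lem:lbs-for-bsprob} together with $u_b > T(m,U)$. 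The hypothesis $u_{b-1} \le T_1(m,U)$ is what guarantees we should not want to abandon a second bike, sealing optimality for all $\ell \ge 1$. I expect the bulk of the write-up to be bookkeeping of these averaging inequalities rather than any conceptually new idea, since the heavy lifting (the schedule itself) is already done in Theorem~\ref{thm:rbs_ub_gt_tmu_ub1_lt_t1mu}.
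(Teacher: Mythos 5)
There is a genuine gap, and it is in the lower bound, which is the entire content of the theorem (the upper bound and feasibility are indeed just Theorem~\ref{thm:rbs_ub_gt_tmu_ub1_lt_t1mu}, as you say). You identify $T_1(m,U)$ with $1-\frac1m\sum_{j=1}^{b-1}(1-u_j) = T(m,U,[1,\ldots,1,0])$, i.e.\ the averaging bound in which the abandoned bike is never moved. That is not the paper's quantity: by definition \eqref{eq:T_1(m,U)}, $T_1(m,U)=\min_{0\le y\le 1}\max\{T(m,U,[1,\ldots,1,y]),\ yu_b+(1-y)u_1\}$, and when $u_b>T(m,U)$ the minimizer is some $y^*\in(0,1)$ (Lemma~\ref{lm:T_1(m,U)}), so $T_1(m,U)=T(m,U,[1,\ldots,1,y^*])$ is \emph{strictly smaller} than your expression, because $y\mapsto T(m,U,[1,\ldots,1,y])$ is strictly decreasing. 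Since $\AllButOne$ itself achieves completion time $T_1(m,U)$, the bound $\tau\ge 1-\frac1m\sum_{j=1}^{b-1}(1-u_j)$ you propose is false. The root of the error is the sign claim that ``any use of the abandoned bike only raises $\sum_i t_i$'': riding bike $k$ over distance $y_k$ replaces walking time $y_k$ by biking time $u_ky_k<y_k$, so it \emph{lowers} $\sum_i t_i$ by $(1-u_k)y_k$ (see Equation~\eqref{eq:total-time}). Hence a schedule minimizing the average rides the doomed bike as far as possible, not distance $0$, and the averaging argument alone degenerates to $\tau\ge T(m,U,Y)\ge T(m,U)$, which is strictly below $T_1(m,U)$ in this regime and proves nothing.

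The missing ingredient is part~\ref{lempart:lb_rbs_1} of Lemma~\ref{lem:lbs-for-rbsprob}: if bike $b$ is abandoned at $y_b$, then $\tau\ge y_bu_b+(1-y_b)u_1$, because the bike must physically be carried to $y_b$ (taking time $y_bu_b$) and the agent who drops it still has to finish, at best on bike $1$. This bound \emph{increases} in $y_b$ while the averaging bound \emph{decreases} in $y_b$; taking the maximum of the two and minimizing over the unknown $y_b$ is exactly how $T_1(m,U)$ is defined, and Claim~\ref{claim:lb_monotonicity} reduces a general abandonment vector $Y'$ to $[1,\ldots,1,y_b']$. You also need the complementary case: if bike $b$ is \emph{not} abandoned then $\tau\ge u_b>T_1(m,U)$. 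Finally, note that the hypothesis $u_{b-1}\le T_1(m,U)$ plays no role in the lower bound at all --- it is needed only so that $\AllButOne$ attains $T_1(m,U)$; your attempt to use it to rule out abandoning a second bike is unnecessary once the monotonicity claim is in place.
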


    We now present the details of these theorems.

    Many of the techniques that we developed in Section~\ref{sec:bs} for the Bike Sharing problem either can be immediately applied to the Relaxed Bike Sharing problem or they can be applied with minor modifications. Examples of techniques of the former kind are \textsc{Standardize} and \textsc{Reduce} procedures which were explicitly stated to work for arbitrary feasible schedules. Transforming an exponential time algorithm based on a recursive definition of the schedule matrix into a polynomial time algorithm is an example of a technique of the latter kind. In this section, we assume that the reader is familiar with these techniques and so we concentrate only on novel aspects of the \RBSP problem with the abandonment limit $\ell=1$ as compared with the vanilla \BSP problem. In particular, we shall give recursive definitions of schedule matrices and prove their optimality without specifying an algorithm that would compute those matrices. It can be verified that such an algorithm running in polynomial time can be obtained by modifications of our approach from Section~\ref{sec:bs}. We also assume that the reader is familiar with the notation and terminology introduced in Section~\ref{sec:bs}, e.g., $m_k = m-b+k$, $U_k = \{u_1, \ldots, u_k\}$, expanded vs. unexpanded matrix, and so on.

\subsection{Generalized Lower Bound for the \texorpdfstring{\RBSP}{RBS} problem}
    
        In this subsection, we prove a generalization of Lemma~\ref{lem:lbs-for-bsprob} that establishes a couple of lower bounds on the optimal completion time for the \RBSP problem. These lower bounds do not explicitly refer to the abandonment limit $\ell$ because they are sufficiently general that they can be adapted for any value of the abandonment limit $\ell$. To state the lemma, we introduce a new definition:
        \begin{definition}
            Let $(X,M)$ be an arbitrary feasible schedule for the \RBSP problem with inputs $m, b,$ and $U$. Let $y_1, \ldots, y_b \in [0,1]$ be the total distance for which bikes $1, \ldots, b$ are used respectively, in $(X,M)$. Observe that $y_i < 1$ means that bike $i$ is abandoned at position $y_i$, and $y_i = 1$ means the bike makes it to the end. The \emph{abandonment vector} of $(X,M)$, denoted by $Y(X,M)$, is defined as  $Y(X,M)=[y_1, \ldots, y_b]$. When the schedule $(X,M)$ is clear from the context, we simply write $Y$ to denote the abandonment vector.
        \end{definition}
        We are now ready to state and prove generalized lower bounds on the completion time.
    
        \begin{lemma}\label{lem:lbs-for-rbsprob}
            Let $(X,M)$ be an arbitrary feasible schedule for the \RBSP problem with inputs $m, b,$ and $U$. Let $Y$ be the abandonment vector of $(X,M)$. Define
            \begin{equation}\label{eq:arrival-time-Y}
                T(m,U,Y) := 1 - \frac{1}{m} \sum_{k=1}^b (1-u_k) y_k.
            \end{equation}
            Then we have
            \begin{enumerate}
                \item $\tau(X,M) \ge y_i u_i  + (1-y_i) u_1$,\label{lempart:lb_rbs_1}
                \item $\tau(X,M) \ge T(m,U,Y)$,\label{lempart:lb_rbs_2}
                \item $\tau(X,M) = T(m,U,Y)$ if and only if all agents have the same arrival time in $(X,M)$,\label{lempart:tmuy_equals}
                \item $u_1 \leq T(m,U,Y)$.\label{lempart:min_u1_tmuy}
            \end{enumerate}
        \end{lemma}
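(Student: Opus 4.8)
The plan is to prove each of the four parts in turn, largely by mimicking the proof of Lemma~\ref{lem:lbs-for-bsprob} but with the abandonment vector $Y$ accounting for the fact that bike $k$ only travels distance $y_k$ rather than the full distance $1$. I would keep the same high-level ideas: a ``potential / counting'' argument for the average completion time, a ``the bottleneck bike drags someone down'' argument for the per-bike bound, and elementary facts about averages for the remaining two parts.

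For part~\ref{lempart:lb_rbs_1}, I would fix the bike $i$ and consider any agent who rides bike $i$ during the interval where it is abandoned (such an agent exists whenever $y_i > 0$; if $y_i = 0$ the claim reduces to $\tau \ge u_1$, which follows from part~\ref{lempart:min_u1_tmuy} together with part~\ref{lempart:lb_rbs_2}, or directly since every agent travels distance $1$ at speed at most $1/u_1$). The key observation is that the \emph{last} agent to use bike $i$ covers the sub-interval of total length $y_i$ no faster than speed $1/u_i$ (actually exactly at bike-$i$ speed for the portions on bike $i$; the point is that bike $i$ has to physically be carried distance $y_i$), and it must also cover the remaining distance $1 - y_i$, which it can do at speed at most $1/u_1$ (the fastest bike). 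Hence its completion time is at least $y_i u_i + (1-y_i) u_1$, and $\tau(X,M)$ is at least this agent's completion time. I should be a little careful that the ``time spent carrying bike $i$'' and the ``time spent on the rest'' genuinely add up along one agent's trajectory; the cleanest framing is: bike $i$ is transported from $0$ to $y_i$, this takes total time $\ge y_i u_i$ (it is pushed only by agents riding it), that time is ``charged'' across possibly several agents, but the \emph{final} carrier arrives at $y_i$ no earlier than $y_i u_i$ after the start, and then needs $\ge (1-y_i) u_1$ more. This is the step I expect to require the most care to state rigorously.

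For part~\ref{lempart:lb_rbs_2}, I would repeat the averaging argument of Lemma~\ref{lem:lbs-for-bsprob}(2): the total distance walked by all agents is $m - \sum_k y_k$ (each bike $k$ covers distance $y_k$, and the rest of the collective distance $m$ is walked), so the total walking time is $m - \sum_k y_k$, the total biking time is $\sum_k u_k y_k$, hence $\sum_i t_i(X,M) = m - \sum_k y_k + \sum_k u_k y_k = m - \sum_k (1-u_k) y_k$, and dividing by $m$ and using max $\ge$ average gives $\tau(X,M) \ge T(m,U,Y)$. Part~\ref{lempart:tmuy_equals} is then the usual ``max equals average iff all equal'' fact applied to the $t_i$. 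For part~\ref{lempart:min_u1_tmuy}, I would note that $T(m,U,Y)$ is the average of $t_i$ over a feasible schedule and every $t_i \ge u_1$ (each agent covers distance $1$ at speed at most $1/u_1$), so the average is $\ge u_1$; alternatively, observe directly from \eqref{eq:arrival-time-Y} that $T(m,U,Y) = 1 - \frac1m \sum_k (1-u_k) y_k \ge 1 - \frac1m \sum_k (1-u_k) \cdot 1 \ge 1 - (1-u_1) = u_1$ using $y_k \le 1$ and... actually that last inequality needs $\sum_k (1-u_k) \le m(1-u_1)$, which is false in general, so I would use the averaging argument rather than this naive bound. The main obstacle overall is part~\ref{lempart:lb_rbs_1}: making precise the claim that the last carrier of an abandoned bike pays $y_i u_i$ for the carried portion and $(1-y_i)u_1$ for the rest, when the schedule matrix formalism tracks intervals rather than a single agent's continuous trajectory.
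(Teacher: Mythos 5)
Your proposal is correct and follows essentially the same route as the paper's own proof: the ``last carrier of bike $i$ reaches $y_i$ no earlier than $y_iu_i$ and then needs at least $(1-y_i)u_1$ more'' argument for part~1, the total-distance/total-time averaging identity for parts~2 and~3, and ``average $\ge$ minimum $\ge u_1$'' for part~4. Your self-correction on part~4 (rejecting the naive bound from the formula in favour of the averaging argument) lands exactly on the paper's argument, so there is nothing substantive to add.
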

        \begin{proof}
            \begin{enumerate}
                \item Consider bike $i$. It takes at least $y_i u_i$ time for bike $i$ to reach location $y_i$. If $y_i < 1$, the agent responsible for abandoning this bike is present at location $y_i$ at time $t \ge y_i u_i$. The fastest possible way for the agent complete the journey is if the agent picks up the fastest bike $1$ at $y_i$ and rides it to the end of the $[0,1]$ interval -- this takes additional $(1-y_i)u_1$ time.
    
                \item The total distance travelled by all agents in $(X,M)$ is $m$, since each agent travels distance $1$. The total distance travelled on  bike $k$ is exactly $y_k$, which implies that the sum of distances walked by all agents is $m-\sum_{k=1}^b y_k$. Since walking speed is $1$, the total time agents spend walking is $m-\sum_{k=1}^b y_k$, whereas the total time agents spend on bikes is $\sum_{k=1}^b u_k y_k$. Combining these observations, we obtain:
                \begin{equation}\label{eq:total-time}
                \sum_{i=1}^m t_i(X,M) = \left( m-\sum_{k=1}^b y_k \right) + \sum_{k=1}^b u_k y_k = m - \sum_{k=1}^b (1-u_k) y_k.
                \end{equation}
                Since the maximum is at least the average, we get
                \[ \tau(X,M) \ge \frac{1}{m} \sum_{i=1}^m t_i(X,M) = 1 - \frac{1}{m} \sum_{k=1}^b (1-u_k)y_k = T(m,U,Y).\]
                \item Follows from the fact that the maximum $t_i$ is exactly equal to the the average if and only if all values are the same.
                \item Follows from the fact that the minimum $t_i \geq u_1$ and the average is at least the minimum.
            \end{enumerate}
        \end{proof}
    
        As mentioned earlier, the lower bound $T(m, U, Y)$ can be specialized to any value of the abandonment limit $\ell$ by adding a constraint that $| \{ y_i < 1 \} | \le \ell$. The following definition can be thought of as a specialization of the generalized lower bound from Lemma~\ref{lem:lbs-for-rbsprob} to the case of the \RBSP problem when at most one bike can be abandoned.
        \begin{equation}\label{eq:T_1(m,U)}
            T_1(m, U) = \min_{0 \leq y \leq 1} \max\left\{T(m,U,[1,\ldots,1,y]),\ y u_b + (1-y)u_1\right\}
        \end{equation}
        First, we observe a few basic facts about the $T_1$ measure.
        \begin{lemma}\label{lm:T_1(m,U)}
            If $u_b > T(m,U)$ then $T_1(m,U) = T(m,U,[1,\ldots,1,y]) = y u_b + (1-y)u_1$ for a certain value of $y \in (0,1)$, otherwise $T_1(m,U) = T(m,U)$.
        \end{lemma}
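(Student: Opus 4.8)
The plan is to reduce Lemma~\ref{lm:T_1(m,U)} to an elementary one-variable analysis. Write $f(y) := T(m,U,[1,\ldots,1,y])$ and $g(y) := y u_b + (1-y)u_1$, so that by \eqref{eq:T_1(m,U)} we have $T_1(m,U) = \min_{0\le y\le 1}\max\{f(y),g(y)\}$. From \eqref{eq:arrival-time-Y}, $f(y) = 1 - \frac1m\big(\sum_{k=1}^{b-1}(1-u_k) + (1-u_b)y\big)$ is affine with slope $-\frac1m(1-u_b)<0$ (using $u_b<1$), hence strictly decreasing; and $g(y)=u_1+(u_b-u_1)y$ is affine and non-decreasing. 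Consequently $f-g$ is strictly decreasing and $h:=\max\{f,g\}$ is convex and continuous on the compact interval $[0,1]$, so its minimum is attained.

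First I would record the four endpoint quantities: $f(1)=T(m,U)$, $g(1)=u_b$, $g(0)=u_1$, and $f(0)=f(1)+\frac1m(1-u_b)=T(m,U)+\frac1m(1-u_b)$. Combining the last identity with part~\ref{lempart:min_u1} of Lemma~\ref{lem:lbs-for-bsprob} ($u_1\le T(m,U)$) and $u_b<1$ gives the strict inequality $(f-g)(0)=f(0)-u_1>0$. Then I would split on the hypothesis. If $u_b\le T(m,U)$, then $(f-g)(1)=T(m,U)-u_b\ge 0$; since $f-g$ is strictly decreasing this forces $f\ge g$ throughout $[0,1]$, so $h=f$ and the minimum of the strictly decreasing $f$ over $[0,1]$ equals $f(1)=T(m,U)$, which is the ``otherwise'' case $T_1(m,U)=T(m,U)$. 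If instead $u_b>T(m,U)$, then $(f-g)(1)<0<(f-g)(0)$, so by the intermediate value theorem and strict monotonicity of $f-g$ there is a unique $y^\ast$ with $f(y^\ast)=g(y^\ast)$, and strictness of both endpoint inequalities forces $y^\ast\in(0,1)$. On $[0,y^\ast]$ one has $h=f$ (decreasing) and on $[y^\ast,1]$ one has $h=g$ (non-decreasing), so $h$ attains its minimum at $y^\ast$ with value $f(y^\ast)=g(y^\ast)$; this is precisely $T_1(m,U)=T(m,U,[1,\ldots,1,y^\ast])=y^\ast u_b+(1-y^\ast)u_1$ with $y^\ast\in(0,1)$.

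There is no serious obstacle in this argument; the only point needing a little care is the strict inequality $f(0)>g(0)$, which is what guarantees that the crossover point (when it exists) lies strictly inside $(0,1)$ rather than at $y=0$, and this is exactly where the previously established bound $u_1\le T(m,U)$ and the fact $u_b<1$ are used. I would also remark that the degenerate sub-case $u_b=u_1$ (all bikes equally fast), in which $g$ is constant, needs no separate treatment: there $u_b=u_1\le T(m,U)$ by Lemma~\ref{lem:lbs-for-bsprob}, so it already falls under the first branch.
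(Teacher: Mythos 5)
Your proposal is correct and follows essentially the same route as the paper's proof: both observe that $T(m,U,[1,\ldots,1,y])$ decreases and $y u_b + (1-y)u_1$ increases in $y$, so the minimum of their maximum occurs at the crossing point unless it falls at an endpoint, and both rule out the left endpoint via $u_1 \le T(m,U)$ and characterize when the right endpoint is the minimizer. Your version merely makes the paper's terse argument more explicit (endpoint values, the intermediate value theorem, and the degenerate sub-case $u_b = u_1$), with no substantive difference in method.
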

        \begin{proof}
            First observe that the two terms in the max of expression~\ref{eq:T_1(m,U)} move in different directions as $y$ changes -- $T(m,U,[1,\ldots,1,y])$ decreases with increasing $y$ and the term $y u_b + (1-y)u_1$ increases with increasing $y$. Thus, unless it occurs when $y$ is at an endpoint of $[0,1]$, the minimum will occur at a value $y \in (0,1)$ that makes $T(m,U,[1,\ldots,1,y]) = y u_b + (1-y)u_1$. Since $u_1 < T(m,U)$ for any $(m,U)$ it is easy to see that the minimum never occurs at the left endpoint $y=0$. On the other hand, since $T(m,U,[1,\ldots,1,y])$ decreases to $T(m,U)$ at $y=1$, the minimum only occurs at the right endpoint if $T(m,U,[1,\ldots,1,y=1]) = T(m,U) \geq u_b$.
        \end{proof} 
        Second, we show that $T_1$ is, indeed, a lower bound on the completion time for the \RBSP problem when at most $\ell = 1$ bikes may be abandoned.
        \begin{lemma}
            \label{lem:t1mu_lb}
            Let $(m,U = \{u_1 \le u_2 \le \cdots \le u_b \})$ be an input to the \RBSP problem with an abandonment limit $\ell = 1$ and $u_b > T(m,U)$. Let $(X,M)$ be a feasible schedule that satisfies the abandonment condition. Then 
            $\tau(X,M) \ge T_1(m,U)$.
        \end{lemma}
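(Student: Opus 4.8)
The plan is to reduce everything to the lower bounds already proved in Lemma~\ref{lem:lbs-for-rbsprob} together with the characterization of $T_1$ from Lemma~\ref{lm:T_1(m,U)}, by case-splitting on whether the slowest bike $b$ is abandoned. The key preliminary observation I would make first: since $u_b > T(m,U)$, Lemma~\ref{lm:T_1(m,U)} gives $T_1(m,U) = T(m,U,[1,\ldots,1,y]) = y u_b + (1-y)u_1$ for some $y \in (0,1)$, and part~\ref{lempart:min_u1} of Lemma~\ref{lem:lbs-for-bsprob} gives $u_1 \le T(m,U) < u_b$; hence $T_1(m,U) = y u_b + (1-y)u_1 < u_b$. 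This single inequality disposes of the ``easy'' direction below.

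Now let $Y = [y_1,\ldots,y_b]$ be the abandonment vector of $(X,M)$; since the abandonment limit is $\ell = 1$, at most one coordinate of $Y$ is strictly less than $1$. I would split on the value of $y_b$. If $y_b = 1$ (the slowest bike reaches the end — this subsumes both the case where no bike is abandoned and the case where some bike $i < b$ is abandoned), then part~\ref{lempart:lb_rbs_1} of Lemma~\ref{lem:lbs-for-rbsprob} applied with $i = b$ yields $\tau(X,M) \ge y_b u_b + (1-y_b) u_1 = u_b > T_1(m,U)$. If instead $y_b < 1$, then bike $b$ is the one abandoned, so every other bike makes it to the end and $Y = [1,\ldots,1,y_b]$; applying parts~\ref{lempart:lb_rbs_1} (with $i=b$) and~\ref{lempart:lb_rbs_2} of Lemma~\ref{lem:lbs-for-rbsprob} gives
\[
\tau(X,M) \ge \max\bigl\{\, y_b u_b + (1-y_b)u_1,\ T(m,U,[1,\ldots,1,y_b]) \,\bigr\} \ge \min_{0 \le y \le 1}\max\bigl\{\, y u_b + (1-y)u_1,\ T(m,U,[1,\ldots,1,y]) \,\bigr\} = T_1(m,U),
\]
where the last equality is the definition~\eqref{eq:T_1(m,U)} of $T_1(m,U)$. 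In both cases $\tau(X,M) \ge T_1(m,U)$, which is what we want.

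I do not expect a real obstacle here; the proof is short once the right case split is chosen. The only points requiring a little care are (i) invoking the abandonment limit $\ell = 1$ to conclude that in the $y_b < 1$ case the abandonment vector is exactly $[1,\ldots,1,y_b]$, which is precisely the form over which the one-variable minimization defining $T_1$ is taken, and (ii) the strictness $u_1 < u_b$ used to get $T_1(m,U) < u_b$, which comes for free from $u_b > T(m,U)$ via part~\ref{lempart:min_u1} of Lemma~\ref{lem:lbs-for-bsprob}. Both are routine, so the bulk of the work is really just assembling the already-established lower bounds in the right order.
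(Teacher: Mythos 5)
Your proof is correct and follows essentially the same route as the paper's: the same case split on whether bike $b$ reaches the end, the same use of Lemma~\ref{lm:T_1(m,U)} to get $T_1(m,U) = y u_b + (1-y)u_1 < u_b$ in the first case, and the same combination of parts~\ref{lempart:lb_rbs_1} and~\ref{lempart:lb_rbs_2} of Lemma~\ref{lem:lbs-for-rbsprob} followed by bounding the max below by the min over $y$ in the second. Your explicit note that the $y_b=1$ case subsumes the possibility that some faster bike is abandoned is a small clarification the paper leaves implicit.
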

        \begin{proof}
            Suppose that the bike $b$ makes it to the end. Then $\tau(X,M) \geq u_b$. Since $u_b > T(m,U)$ Lemma~\ref{lm:T_1(m,U)} tells us that for some $y \in (0,1)$ we have $T_1(m,U) = y u_b + (1-y) u_1 < u_b$ and we can conclude that $\tau(X,M) \geq T_1(m,U)$ when the bike $b$ makes it to the end.
    
            Now suppose that bike $b$ gets abandoned. Let $Y$ be the abandonment vector of $(X,M)$. By assumptions we have $Y = [1,\ldots,1,y_b]$, $y_b \le 1$, and by Lemma~\ref{lem:lbs-for-rbsprob} we have
            \begin{enumerate}
            \item $\tau(X,M) \ge T(m,U,Y=[1,\ldots,1,y_b])$,
            \item $\tau(X,M) \ge y_b u_b + (1-y_b)u_1$.
            \end{enumerate}
            Combining these inequalities we get
            \begin{align*}
            \tau(X,M) &\ge \max\left\{T(m,U,[1,\ldots,1,y_b]),\ y_b u_b + (1-y_b)u_1\right\}\nonumber\\
            &\ge \min_{0 \leq y \leq 1} \max\left\{T(m,U,[1,\ldots,1,y]),\ y u_b + (1-y)u_1\right\}
            = T_1(m,U).
            \end{align*}
        \end{proof}
        Related to the above proof is the following useful result which mimics part~\ref{lempart:TmU_order} of Lemma~\ref{lem:lbs-for-bsprob}:
        \begin{lemma}\label{lm:T1mU_order}
            If $u_b > T(m,U)$ then for $k=2,3,\ldots,b-1$ we have $T_1(m,U) \leq T_1(m-1,U\setminus \{u_k\})$ iff $u_k \leq T_1(m,U)$ with equality only when $u_k = T_1(m,U)$.
        \end{lemma}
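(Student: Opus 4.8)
The plan is to collapse both $T_1(m,U)$ and $T_1(m-1,U\setminus\{u_k\})$ into the one--dimensional min--max picture behind the definition of $T_1$, and then to imitate the averaging argument used for part~\ref{lempart:TmU_order} of Lemma~\ref{lem:lbs-for-bsprob}. Write $\phi(y) := T(m,U,[1,\dots,1,y])$ and $\psi(y) := y u_b + (1-y)u_1$, so that $\phi$ is strictly decreasing (because $u_b<1$) and $\psi$ is strictly increasing (because $u_b>T(m,U)$ forces $u_1<u_b$; otherwise all inverse speeds would be equal and $u_b=T(m,U)$). By Lemma~\ref{lm:T_1(m,U)} there is a $y^\ast\in(0,1)$ with $c := T_1(m,U) = \phi(y^\ast) = \psi(y^\ast)$, and in particular $u_1<c<u_b$. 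Since $2\le k\le b-1$, deleting bike $u_k$ leaves $u_1$ and $u_b$ as the fastest and slowest bikes, so the definition of $T_1$ gives $T_1(m-1,U\setminus\{u_k\}) = \min_{0\le y\le 1}\max\{\phi'(y),\psi(y)\}$, where $\phi'(y) := T(m-1,U\setminus\{u_k\},[1,\dots,1,y])$ and $\psi$ is the very same function as before.

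The first real step is the averaging identity
\[
\phi'(y)\ =\ \frac{m\,\phi(y)-u_k}{m-1}\qquad\text{for every }y,
\]
which is a one-line computation from~\eqref{eq:arrival-time-Y}: both $\phi$ and $\phi'$ have the form $1$ minus $\tfrac{1}{(\text{number of agents})}$ times $\sum_j(1-u_j)y_j$ (with $y_j=1$ for $j<b$ and $y_b=y$), and $\phi'$ is obtained from $\phi$ by deleting the single term $(1-u_k)$ from the sum and replacing $m$ by $m-1$, so rearranging yields the identity. Evaluating it at $y^\ast$ and using $\phi(y^\ast)=c$ gives $\phi'(y^\ast)-c = (c-u_k)/(m-1)$; hence $\phi'(y^\ast)\ge c$ exactly when $u_k\le c$, with equality exactly when $u_k=c$, and $\phi'(y^\ast)<c$ exactly when $u_k>c$. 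This is the crux of the lemma; everything else is monotonicity bookkeeping with the linear functions $\phi'$ and $\psi$, carried out directly on the min--max value.

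If $u_k\le c$ then $\phi'(y^\ast)\ge c=\psi(y^\ast)$. For every $y\ne y^\ast$ one has $\max\{\phi'(y),\psi(y)\}>c$ --- for $y>y^\ast$ because $\psi(y)>\psi(y^\ast)=c$, and for $y<y^\ast$ because $\phi'(y)>\phi'(y^\ast)\ge c$ --- while at $y=y^\ast$ the value is $\max\{\phi'(y^\ast),c\}$, which equals $c$ if $u_k=c$ and exceeds $c$ if $u_k<c$. Hence $T_1(m-1,U\setminus\{u_k\})\ge c=T_1(m,U)$, with equality precisely when $u_k=c$. Conversely, if $u_k>c$ then $\phi'(y^\ast)<c=\psi(y^\ast)$; let $y_0$ be the unique (possibly negative) solution of $\phi'(y_0)=c$, so that strict decrease of $\phi'$ and $\phi'(y^\ast)<c$ force $y_0<y^\ast$. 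Any $y$ with $\max\{y_0,0\}<y<y^\ast$ (a nonempty subinterval of $[0,1]$ since $y_0<y^\ast$ and $0<y^\ast<1$) then satisfies $\phi'(y)<c$ and $\psi(y)<c$ simultaneously, so $\max\{\phi'(y),\psi(y)\}<c$ and therefore $T_1(m-1,U\setminus\{u_k\})<c=T_1(m,U)$. The three cases together say exactly that $T_1(m,U)\le T_1(m-1,U\setminus\{u_k\})$ iff $u_k\le T_1(m,U)$, with equality only when $u_k=T_1(m,U)$.

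The one place to be careful --- and the reason I argue about the min--max value rather than about its minimizer --- is that the minimizer of $\max\{\phi',\psi\}$ need not be an interior crossover of the two graphs: this happens exactly when $u_b\le T(m-1,U\setminus\{u_k\})$, in which case $T_1$ of the reduced instance degenerates to $T$ of that instance (the ``otherwise'' branch of Lemma~\ref{lm:T_1(m,U)}). Reasoning about $\min_y\max\{\phi'(y),\psi(y)\}$ as a whole, using only that $\phi'$ is strictly decreasing and $\psi$ strictly increasing, absorbs this endpoint case automatically, so no separate sub-case analysis is needed. The remaining ingredients --- the strict decrease of $\phi$ and $\phi'$ (from $u_b<1$) and the strict inequality $u_1<u_b$ (from $u_b>T(m,U)$) --- are immediate.
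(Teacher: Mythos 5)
Your proof is correct and follows essentially the same route as the paper's: both hinge on the identity $f'(y)=f(y)+\frac{f(y)-u_k}{m-1}$ (your $\phi'(y)=\frac{m\phi(y)-u_k}{m-1}$) and on evaluating the two curves at the crossover point $y^\ast$ where $T_1(m,U)=f(y^\ast)=g(y^\ast)$. The only difference is that you spell out the monotonicity bookkeeping (in particular the strict inequalities when $u_k\neq T_1(m,U)$) that the paper compresses into ``the lemma easily follows from this observation,'' which is a welcome bit of extra care rather than a divergence in method.
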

        \begin{proof}
            Let $g(y) = y u_b + (1-y)u_1$, $f(y) = T(m,U,[1,\ldots,1,y])$ and $f'(y) = T(m-1,U\setminus\{u_k\},[1,\ldots,1,y])$. Then we may write $T_1(m,U) = \min_{0 \leq y \leq 1} \max \{f(y),g(y)\}$ and $T_1(m,U \setminus \{u_k\}) = \min_{0 \leq y \leq 1} \max \{f'(y),g(y)\}$. 
            
            We can rewrite $f'(y)$ to be in terms of $f(y)$ by referring to the definition of $T(m,U,Y)$. We have
            \begin{align*}
                f'(y) &= 1 - \frac{1}{m-1}\left[(1-u_b)y + \sum_{j=1,j\neq k}^{b-1}(1-u_j)\right]
                = 1 - \frac{1}{m-1}\left[(1-u_b)y + \sum_{j=1}^{b-1}(1-u_j) - (1-u_k)\right]\\
                &= 1 - \frac{m}{m-1}[1-f(y)] + \frac{1-u_k}{m-1} = f(y) + \frac{f(y)-u_k}{m-1}.
            \end{align*}
            Thus $T_1(m-1,U\setminus \{u_k\}) = \min_{0 \leq y \leq 1} \max \{g(y),f(y)+\frac{f(y)-u_k}{m-1}\}$.
            
            Now let $y_0 = \arg\min_{0 \leq y \leq 1} \max\{f(y),g(y)\}$. Since we are assuming that $u_b > T(m,U)$ we can conclude from Lemma~\ref{lm:T_1(m,U)} that $T_1(m,U) = f(y_0) = g(y_0)$ and $T_1(m-1,U\setminus\{u_k\})$ is located within the interval with endpoints $g(y_0) = T_1(m,U)$ and $f'(y_0)=f(y_0)+\frac{f(y_0)-u_k}{m-1} = T_1(m,U) + \frac{T_1(m,U)-u_k}{m-1}$. The lemma easily follows from this observation.
        \end{proof} 

\subsection{\texorpdfstring{$\AllMakeIt^*$}{AllMakeIt*} Solves the \texorpdfstring{\RBSP}{RBS} Problem when \texorpdfstring{$u_b \le T(m,U)$}{ub <= T(m,U)}}
    When $u_b \le T(m,U)$ then $\AllMakeIt^*$ produces a schedule where no bikes are abandoned. This schedule is also a possible solution for the \RBSP problem with abandonment limit 1, but we need to verify that there is no better schedule. This is a direct consequence of Lemma~\ref{lem:lbs-for-rbsprob} and the following monotonicity property of $T(m,U,Y)$, which is clear from the definition of $T(m,U,Y)$:

    \begin{claim}\label{claim:lb_monotonicity}
    Consider two abandonment vectors $Y = [y_1, \ldots, y_b]$ and $Y' = [y_1', \ldots, y_b']$ such that $Y' \le Y$. That is for all $i \in [b]$ we have $y_i' \le y_i$. Then
    $T(m,U,Y') \ge T(m,U,Y)$.
    \end{claim}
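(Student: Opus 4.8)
The plan is to simply unfold the definition of $T(m,U,Y)$ given in Equation~\eqref{eq:arrival-time-Y} and compare the two expressions coefficient by coefficient. Recall that
\[
T(m,U,Y') - T(m,U,Y) = \frac{1}{m}\sum_{k=1}^b (1-u_k)\,y_k \;-\; \frac{1}{m}\sum_{k=1}^b (1-u_k)\,y_k' = \frac{1}{m}\sum_{k=1}^b (1-u_k)(y_k - y_k').
\]
The whole argument then rests on two elementary facts: first, by the standing assumption of the \RBSP problem we have $0 < u_1 \le u_2 \le \cdots \le u_b < 1$, so each coefficient $1-u_k$ is strictly positive; second, by hypothesis $Y' \le Y$ componentwise, so $y_k - y_k' \ge 0$ for every $k \in [b]$. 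Hence every summand $(1-u_k)(y_k - y_k')$ is nonnegative, the sum is nonnegative, and dividing by $m > 0$ preserves this, so $T(m,U,Y') - T(m,U,Y) \ge 0$.

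I would then add a one-line remark connecting this to intuition: a larger $y_k$ means bike $k$ is ridden over a longer stretch, which contributes more total speed-up, which in turn lowers the average (and hence the best possible maximum) arrival time; Claim~\ref{claim:lb_monotonicity} is the formal statement of exactly this monotonicity.

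There is essentially no obstacle here — the only point requiring any care is the sign of $1-u_k$, and this is immediate from the problem's hypothesis that all bike speeds exceed $1$ (equivalently, all inverse speeds lie in $(0,1)$). No case analysis, induction, or appeal to feasibility (Definition~\ref{def:feasible}) is needed, since $T(m,U,Y)$ is defined purely as an affine function of $Y$.
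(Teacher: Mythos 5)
Your proof is correct and is exactly the computation the paper has in mind: the paper simply asserts the claim is ``clear from the definition of $T(m,U,Y)$,'' and your coefficient-by-coefficient expansion, using $1-u_k>0$ and $y_k-y_k'\ge 0$, is precisely the justification left implicit there. Nothing further is needed.
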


    Next, we prove the main result of this subsection.
    \begin{theorem}
    \label{thm:rbs_ub_lt_tmu}
    $\AllMakeIt^*$ solves the \RBSP  problem with abandonment limit 1 optimally when $u_b \le T(m,U)$.
    \end{theorem}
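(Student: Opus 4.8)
The plan is to combine the performance guarantee of $\AllMakeIt^*$ obtained in the \BSP analysis with the generalized lower bound of Lemma~\ref{lem:lbs-for-rbsprob}. First I would invoke Theorem~\ref{thm:bs_ub_ge_tmu}: under the hypothesis $u_b \le T(m,U)$, $\AllMakeIt^*$ returns in polynomial time a feasible schedule $(X,M)$ for the \BSP problem with $\tau(X,M) = T(m,U)$. Since in any solution of the \BSP problem every bike reaches the end, the abandonment vector of $(X,M)$ is $[1,\ldots,1]$; in particular $(X,M)$ is a legal schedule for the \RBSP problem with abandonment limit $1$ (in fact with any abandonment limit $\ell \ge 0$). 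So it only remains to prove the matching lower bound: every feasible \RBSP schedule has completion time at least $T(m,U)$.

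For the lower bound, let $(X',M')$ be an arbitrary feasible schedule for the \RBSP problem (I will not even need to use the constraint $\ell = 1$), and let $Y = Y(X',M') = [y_1, \ldots, y_b]$ be its abandonment vector, so $y_i \le 1$ for all $i$. By part~\ref{lempart:lb_rbs_2} of Lemma~\ref{lem:lbs-for-rbsprob}, $\tau(X',M') \ge T(m,U,Y)$. Since $Y \le [1,\ldots,1]$ componentwise, the monotonicity property of Claim~\ref{claim:lb_monotonicity} gives $T(m,U,Y) \ge T(m,U,[1,\ldots,1])$, and comparing the definitions in \eqref{eq:arrival-time-Y} and \eqref{eq:TmU} shows $T(m,U,[1,\ldots,1]) = 1 - \frac1m\sum_{k=1}^b (1-u_k) = T(m,U)$. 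Chaining these inequalities yields $\tau(X',M') \ge T(m,U)$.

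Putting the two halves together, the schedule $(X,M)$ output by $\AllMakeIt^*$ is feasible for the \RBSP problem with abandonment limit $1$ and attains the universal lower bound $T(m,U)$, hence it is optimal; the same reasoning also reproves Theorem~\ref{thm:rbs-spec-case1} and establishes optimality for every $\ell$. I do not expect any genuine obstacle here: the only point requiring care is that the lower bound must survive the possibility of abandoning bikes, which is precisely why Lemma~\ref{lem:lbs-for-rbsprob} was phrased for an arbitrary abandonment vector together with the monotonicity of Claim~\ref{claim:lb_monotonicity}; given those tools the argument is immediate, and one should just double-check that $\AllMakeIt^*$, being an algorithm for the \BSP problem, never abandons a bike, so that its output is admissible for the relaxed problem.
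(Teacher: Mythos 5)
Your proposal is correct and follows essentially the same route as the paper: both establish the upper bound via the guarantee $\tau(X,M)=T(m,U)$ for $\AllMakeIt^*$ (whose output never abandons a bike) and the matching lower bound via the chain $\tau(X',M') \ge T(m,U,Y(X',M')) \ge T(m,U,[1,\ldots,1]) = T(m,U)$ using part~\ref{lempart:lb_rbs_2} of Lemma~\ref{lem:lbs-for-rbsprob} and Claim~\ref{claim:lb_monotonicity}. No gaps; your observation that the argument is independent of $\ell$ matches the paper's remark that this also yields Theorem~\ref{thm:rbs-spec-case1}.
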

    \begin{proof}
    Let $(X,M)$ be the schedule produced by $\AllMakeIt^*$ and let $(X',M')$ be some other feasible schedule for the \RBSP problem. Then we have:
    \[\tau(X',M') \ge T(m, U, Y(X', M')) \ge T(m, U, Y(X,M)) = T(m, U) = \tau(X,M).\]
    The first inequality above is by Lemma~\ref{lem:lbs-for-rbsprob}. The second inequality follows from Claim~\ref{claim:lb_monotonicity} by observing that $\AllMakeIt^*$ produces a schedule in which all bikes make it to the end, i.e., $Y(X,M) = [1, 1, \ldots,1]$ and, consequently, $Y(X', M') \le Y(X,M)$.
    \end{proof}

\subsection{Optimal Schedule  when \texorpdfstring{$u_b > T(m,U)$}{ub>T(m,U)} and \texorpdfstring{$u_{b-1} \le T_1(m,U)$}{ub1 le T1(m,U)}}\label{ssec:rbs_t1mu}
    
    In the \BSP problem the relationship between $u_b$ and $T(m,U)$ determined the structure of an optimal schedule. Analogously, in the case of \RBSP with $\ell = 1$ and $u_b > T(m,U)$, the relationship between $u_{b-1}$ and $T_1(m,U)$ determines the structure of an optimal schedule. In the remainder of this subsection we show that when $u_{b-1} \le T_1(m,U)$ there is, in fact, a schedule that achieves completion time $T_1(m,U)$, which is optimal in light of the above lemma.
    
    In order to design a schedule that has the completion time that matches the bound $T_1(m,U)$ we refer to Lemma~\ref{lm:T_1(m,U)}. Since $T_1(m,U) = u_b y + (1-y) u_1$ for a certain value of $y \in (0,1)$, if a schedule with completion time $T_1(m,U)$ exists, then the bike $u_b$ must be used continuously until location $y$ and the agent that abandons bike $b$ must ride bike $u_1$ from that point onward until the end in that schedule. In principle, bike $b$ may be involved in a number of swap-switches before reaching the location $y$, however, we take the simplest approach of dedicating a single agent (namely, agent $m$) to riding bike $b$ for the entirety of distance $y$. It turns out that this partial schedule can be completed to a feasible and optimal schedule by using an approach similar to phase 2 of \textsc{AllMakeIt} algorithm. Using the initial interval $y$, we get a group of synchronized agents sharing some bikes in front of the rest of the agents using the rest of the bikes. The average speed of the synchronized group is slow, so the rest of the agents start to catch up to the synchronized group and keep ``getting absorbed'' into that group. The schedule is represented schematically in Figure~\ref{fig:all_but_one_make_it}. Let $\textsc{AllButOne}(m,U)$ be an algorithm computing this schedule. As mentioned at the beginning of this section, we shall not provide the details of this algorithm understanding that these details can be easily filled in by following the steps of Section~\ref{sec:bs}.     
                
    More specifically, the schedule $M$ output by \textsc{AllButOne} is as follows. We first determine the largest index $q$, $1 \leq q \leq b$, for which $u_q \leq T(m_q, U_q)$. The interval $[0,1]$ is then partitioned into $b-q+1$ parts $Z=[z_1,z_2,\ldots,z_{b-q+1}]$. During the first interval $z_1$ (the intention is $z_1=y$ where $y$ is as above) the group of agents $1,2,\ldots,m_q$ will perform $\textsc{AllMakeIt}$ using the bikes $U_q$, and the rest of the agents use the remaining bikes $U \setminus U_q$. During the interval $j=2,3,\ldots,b-q+1$ the group of agents $1,2,\ldots,m_{q+j-2}$ will perform $\textsc{AllMakeIt}$ using the bikes $\{u_2,\ldots,u_{q+j-2}\}$; agent $i$, $m_{q+j-1}\leq i \leq m-1$, will use the bike $i-(m-b)$; and agent $m$ will use the bike $u_1$. In light of this, we introduce the notation:
    \[U'_k = U_k \setminus \{u_1\} = \{u_2,u_3,\ldots,u_k\}.\] 
        
    We say that the \emph{unexpanded partition} $Z$ is \emph{valid} for $M$ if it satisfies the requirement that agents $1,2,\ldots,m_{q+j-1}$ all reach the end of $z_j$ at the same time. In particular, all agents complete the schedule $M$ at the same time. We are now in a position to prove Theorem~\ref{thm:rbs_ub_gt_tmu_ub1_lt_t1mu}.    
        
    \begin{figure}[!h]
        \centering
        \includegraphics[scale=0.4]{all_but_one_make_it_5.pdf}
        \caption{The unexpanded schedule of $\textsc{AllButOne}(m,U)$. The index $q$, $1 \leq q \leq b$, is the largest index for which $u_q \leq T(m_q,U_q)$. In each large rectangle the indicated number of agents will perform $\textsc{AllMakeIt}$ with the indicated set of bikes. With a valid unexpanded partition $Z$, the agents $1,2,\ldots,m_{q+j-1}$ all reach the end of $z_j$ at the same time. \label{fig:all_but_one_make_it}}
    \end{figure}     
        
    \paragraph*{Proof of Theorem~\ref{thm:rbs_ub_gt_tmu_ub1_lt_t1mu}.}
    
    \begin{proof}
        Let $q$ be the largest index with $1 \leq q \leq b$, for which $u_q \leq T(m_q, U_q)$. Observe that $q$ exists by Lemma~\ref{lem:ub_gt_tmu}. Since $\textsc{AllMakeIt}$ can only be used with inputs $(m,U)$ satisfying $u_b \leq T(m,U)$, we must confirm that the inputs used in the calls to $\AllMakeIt$ made during $\AllButOne$ satisfy this condition. Thus, we need to show that $u_q \leq T(m_q,U_q)$ and that $u_{q+i} \leq T(m_{q+i},U'_{q+i})$ for $i=0,1,2,\ldots,b-q-1$. Clearly, the former inequality is true by the definition of $q$. To show the latter inequality, Lemma~\ref{cor:fast_set} tells us that it will suffice to demonstrate only that $u_{b-1} \leq T(m_{b-1}=m-1,U'_{b-1})$. Since we are assuming that $u_{b-1} \leq T_1(m,U)$ we will instead show that $T_1(m,U) \leq T(m-1,U_{b-1}')$. We have
        \[T_1(m,U) \leq T(m,U_{b-1}) \leq T(m-1,U_{b-1} \setminus \{u_1\}) = T(m-1,U'_{b-1})\]
        where the first inequality results from Claim~\ref{claim:lb_monotonicity} and the fact that $T_1(m,U) = T(m,U,Y=[1,\ldots,1,y])$ with $y\in (0,1)$ (see Lemma~\ref{lm:T_1(m,U)}). The second inequality results from part~\ref{lempart:TmU_order} of Lemma~\ref{lem:lbs-for-bsprob}.

        Next we demonstrate that the feasibility requirements of Definition~\ref{def:feasible} are satisfied by the schedule. It is obvious that requirements $1$ and $2$ are satisfied. Moreover, there is only one pickup-switch that occurs outside of the calls to $\AllMakeIt$, and this bike switch occurs at the end of $z_1$ between agent $m$ and one of the agents participating in the call to $\AllMakeIt(m_q,U_q)$. Each of the agents participating in the call to $\AllMakeIt(m_q,U_q)$ will cross $z_1$ with an average speed of $1/T(m_q,U_q)$ and, by definition of $q$, and from Part~\ref{lempart:TmU_order} of Lemma~\ref{lem:lbs-for-bsprob}, we conclude that $u_b > T(m_q,U_q)$. This bike switch must therefore be possible and we can conclude that requirement $3$ of Definition~\ref{def:feasible} is also satisfied -- provided that a valid partition vector exists.
        
        Similarly to what we did in the proof of Theorem~\ref{thm:all_make_it}, we shall finish the proof of this theorem by demonstrating the existence of a valid unexpanded partition $Z$. This is sufficient because $Z$ can then be expanded into a true partition $X_Z$ that makes $(X_Z,M)$ feasible and every agent completes at the same time (guaranteed by validity of Z). Part~\ref{lempart:tmuy_equals} of Lemma~\ref{lem:lbs-for-rbsprob} then guarantees that the completion time of the agents is $\tau(X_Z,M) = T(X,M,Y=[1,\ldots,1,z_1])$ (since the bike $u_b$ is abandoned at location $z_1$). The fact that $T(X,M,Y=[1,\ldots,1,z_1]) = T_1(m,U)$ follows from our requiring agent $m$ to use the bike $u_b$ until location $z_1$ and then finishing on bike $u_1$, i.e. we explicitly constructed the algorithm so that agent $m$ completes in time $z_1 u_b + (1-z_1) u_1$. Thus the completion time of the algorithm will be $\tau(X_Z,M) = T(X,M,Y=[1,\ldots,1,z_1]) = z_1 u_b + (1-z_1) u_1$, which is precisely the definition of $T_1(m,U)$ (in the case that $u_b > T(m,U)$, see Lemma~\ref{lm:T_1(m,U)}). In summary, we will complete the proof by demonstrating the existence of a valid unexpanded partition $Z$ for the unexpanded matrix $M'$ of the schedule. 
        
        By Claim~\ref{claim:scaling} and the fact that rescaling does not alter the validity of a partition, it will suffice to construct a valid partition over any interval, and we will thus assume that $z_1=1$. For $j=2,3,\ldots,b-q+1$ we choose
        \[z_j = \frac{\sum_{p=1}^{j-1} [\widetilde{M}'(m_{q+j-1},p) - \widetilde{M}'(m_{q+j-2},p)]z_p}{\widetilde{M}'(m_{q+j-2},j)-\widetilde{M}'(m_{q+j-1},j)}\]
        We first show that $z_j \geq 0$ for all $j \in [b-q+1]$ to confirm that $Z$ is indeed a partition of over the interval $[0,\sum_{j=1}^{b-q+1}z_j]$. We do so by demonstrating that each coefficient of $z_p$ is non-negative thereby implying the non-negativity of $z_j$ by a simple induction on $j$.

        Referring to Figure~\ref{fig:all_but_one_make_it} it is easy to see that for $j=2,3,\ldots,b-q+1$ the unexpanded schedule matrix $M'$ is given by
        \[\widetilde{M}'(i,j) = \begin{cases}
            T(m_{q+j-2},U'_{q+j-2}),& 1 \leq i \leq m_{q+j-2}\\
            u_{q+j-1},& m_{q+j-1} \leq i \leq m.
        \end{cases}\]
        From this one can observe that agent $m_{q+j-1}$ uses a bike at least as a fast as agent $m_{q+j-2}$ in every interval preceding $z_j$ and thus we indeed have $\widetilde{M}'(m_{q+j-1},p) \geq \widetilde{M}'(m_{q+j-2},p)$ for each $p=1,2,\ldots,j-1$. We can also see that $\widetilde{M}'(m_{q+j-1},j) = u_{q+j-1}$ and $\widetilde{M}'(m_{q+j-2},j) = T(m_{q+j-2},U'_{q+j-2})$, and we have already demonstrated that $u_{q+i} \leq T(m_{q+i}, U'_{q+i})$ for each $i=0,1,\ldots,b-q-1$. Thus, we have $u_{q+j-1} \leq T(m_{q+j-1},U'_{q+j-1})$ and by Lemma~\ref{cor:fast_set} we have $T(m_{q+j-1},U'_{q+j-1}) \leq T(m_{q+j-2},U'_{q+j-2})$.

        We will now show that $Z$ is valid. Similarly to \eqref{eq:tij_general} we have
        \[t'_{i,j} = \sum_{p=1}^{j} \widetilde{M}'(i,p) z_p\]
        Rearranging our definition of $z_j$ we obtain
        \[(\widetilde{M}'(m_{q+j-2},j)-\widetilde{M}'(m_{q+j-1},j))z_j = \sum_{p=1}^{j-1} [\widetilde{M}'(m_{q+j-1},p) - \widetilde{M}'(m_{q+j-2},p)]z_p\]
        and thus
        \[\sum_{p=1}^{j} \widetilde{M}'(m_{q+j-1},p)z_p - \sum_{p=1}^{j}\widetilde{M}'(m_{q+j-2},p)z_p = t'_{m_{q+j-1},j} - t'_{m_{q+j-2},j} = 0.\]
        Thus, the $j^{th}$ completion times of agents $m_{q+j-1}$ and $m_{q+j-2}$ are the same and the validity of $Z$ follows from a simple induction.
    \end{proof}    

\subsection{Optimal Schedule  when \texorpdfstring{$u_b > T(m,U)$}{ub > T(m,U)} and \texorpdfstring{$u_{b-1} > T_1(m,U)$}{ub1>T1(m,U)}}
    In this section we solve the \RBSP problem with abandonment limit $\ell = 1$ efficiently and optimally for the case when the second slowest bike is the bottleneck. The idea is analogous to what we did for the \BSP problem for the case $u_b > T(m,U)$, namely, we reduce our problem to the previously studied case of Subsection~\ref{ssec:rbs_t1mu}. The following observation is essentially all we need:

    \begin{lemma}\label{lem:ub_gt_t1mu}
        Let $(m, U=\{u_1 \le u_2 \ldots \le u_b\})$ be the input to the \RBSP problem. If $u_b > T(m,U)$ and $u_{b-1} > T_1(m,U)$ then there exists $k \in \{1, 2, \ldots, b-2\}$ such that
        $u_{b-k-1} \le T_1(m-k, U_{b-k-1} \cup \{u_b\}) \le u_{b-1}$.
    \end{lemma}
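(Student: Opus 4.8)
The plan is to mirror the strategy used for Lemma~\ref{lem:ub_gt_tmu} in the \BSP setting, proving the claim by induction on $b$. The key structural fact we want is that iterating the reduction ``abandon the slowest bike and move one agent onto the fastest bike'' eventually lands us in a regime where the analog of the ``$u_b \le T$'' condition holds. Concretely, for the reduced input with $m-k$ agents and bike set $U_{b-k-1} \cup \{u_b\}$ (i.e., we have discarded the $k$ intermediate slow bikes $u_{b-k}, \ldots, u_{b-1}$ but retained the very slowest bike $u_b$, and we have $k$ fewer agents because $k$ agents are now ``spent'' riding $u_b$ or $u_1$), the quantity $T_1(m-k, U_{b-k-1}\cup\{u_b\})$ plays the role that $T(m_k,U_k)$ played before, and the second-slowest bike of that reduced set is $u_{b-k-1}$. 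So we want to find $k$ for which $u_{b-k-1}$ drops to or below this $T_1$ value, while the $T_1$ value itself stays below $u_{b-1}$.

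First I would set up the base case. For the smallest relevant $b$ (here $b \ge 3$ is forced, since we need $k \in \{1,\ldots,b-2\}$ nonempty, so the base is $b=3$, $k=1$): we must show $u_1 \le T_1(m-1, \{u_1,u_b\}) \le u_{b-1} = u_2$. The right inequality $T_1(m-1,\{u_1,u_b\}) \le u_2$ should follow because $T_1$ of a two-bike set with fastest bike $u_1$ is an average-type quantity bounded by considering the abandonment of $u_b$ — one can bound it by $\max$ of the relevant terms and use $u_2 > T_1(m,U)$ together with monotonicity. The left inequality $u_1 \le T_1(\cdot)$ is exactly part~\ref{lempart:min_u1_tmuy}-style: $u_1$ is a lower bound for any such $T$/$T_1$ measure (the fastest bike's inverse speed is always $\le$ the average).

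For the inductive step, assume the statement for inputs with $b-1$ bikes. Given $(m,U)$ with $u_b > T(m,U)$ and $u_{b-1} > T_1(m,U)$, I would branch on the relationship between $u_{b-2}$ and an appropriate $T_1$-value of the ``drop $u_{b-1}$'' reduced instance $(m-1, U_{b-2}\cup\{u_b\})$ — precisely the $k=1$ candidate. If $u_{b-2} \le T_1(m-1, U_{b-2}\cup\{u_b\})$, then I claim $k=1$ already works: the right inequality $T_1(m-1, U_{b-2}\cup\{u_b\}) \le u_{b-1}$ should come from a monotonicity argument analogous to Lemma~\ref{lm:T1mU_order} — removing the member $u_{b-1}$, which by hypothesis satisfies $u_{b-1} > T_1(m,U)$, strictly decreases the $T_1$-value, giving $T_1(m-1, U\setminus\{u_{b-1}\}) \le T_1(m,U) < u_{b-1}$; and $U\setminus\{u_{b-1}\} = U_{b-2}\cup\{u_b\}$. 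Otherwise, $u_{b-2} > T_1(m-1, U_{b-2}\cup\{u_b\})$, and I would want to invoke the induction hypothesis on the $(b-1)$-bike instance $(m-1, U_{b-2}\cup\{u_b\})$ — whose slowest bike is $u_b$ and second-slowest is $u_{b-2}$ — to extract an index $k'$, then set $k = k'+1$ and translate the resulting inequalities back, using Lemma~\ref{lm:T1mU_order}-type monotonicity to chain $T_1(m-k, \cdot) \le u_{b-2} \le u_{b-1}$.

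The main obstacle I anticipate is two-fold. First, one must verify that the induction hypothesis is genuinely applicable to $(m-1, U_{b-2}\cup\{u_b\})$: this requires checking that its slowest bike $u_b$ still exceeds $T(m-1, U_{b-2}\cup\{u_b\})$ and its second-slowest $u_{b-2}$ still exceeds $T_1(m-1, U_{b-2}\cup\{u_b\})$ — the latter is exactly the ``otherwise'' branch hypothesis, but the former needs an argument (dropping $u_{b-1}$, which is slow relative to the average, should not pull $T$ above $u_b$; a part~\ref{lempart:TmU_order}-style computation handles this). Second, and more delicate, the $T_1$ quantities in the reduced instances have a ``slowest bike $= u_b$'' but their own abandonment minimization is over abandoning $u_b$; one must make sure the recursion is consistent, i.e., that $T_1(m-k, U_{b-k-1}\cup\{u_b\})$ as it appears at each level is computed with $u_b$ as the abandonable slowest bike throughout. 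Establishing the right generalization of Lemma~\ref{lm:T1mU_order} that allows removing an intermediate bike $u_j$ (not the slowest) from a set whose slowest member is $u_b$, and tracking the direction of the inequality via whether $u_j \lessgtr T_1$, is the technical heart of the argument; once that monotonicity lemma is in hand, the induction is a routine case split as above.
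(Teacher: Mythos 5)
Your proposal follows the paper's proof essentially step for step: induction on $b$, with the $k=1$ case settled by the monotonicity of $T_1$ under removal of $u_{b-1}$ (this is exactly Lemma~\ref{lm:T1mU_order} as already stated — it covers removing any intermediate bike $u_k$ with $2 \le k \le b-1$ from a set whose slowest member is $u_b$, so no further generalization is needed), and the remaining case handled by applying the induction hypothesis to $(m-1, U_{b-2}\cup\{u_b\})$ and taking $k = k'+1$; the extra premise you rightly flag, $u_b > T(m-1,U_{b-2}\cup\{u_b\})$, indeed follows from $u_{b-1} > T_1(m,U) \ge T(m,U)$ and part~\ref{lempart:TmU_order} of Lemma~\ref{lem:lbs-for-bsprob}. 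The only cosmetic difference is the base case: the paper starts at $b=2$, where the premise $u_{b-1}=u_1 > T_1(m,U)$ is impossible by part~\ref{lempart:min_u1_tmuy} of Lemma~\ref{lem:lbs-for-rbsprob} so the claim holds vacuously, whereas you prove $b=3$ directly; both work.
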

    \begin{proof}
        We prove the claim by induction on $b$. For $b = 2$ we have $T_1(m, \{u_1, u_2\}) > u_1$ by part~\ref{lempart:min_u1_tmuy} of Lemma~\ref{lem:lbs-for-rbsprob} and the premise of the ``if'' statement is false. Thus, the statement itself is true.

        Next consider $b \ge 3$ and assume the claim is true for any input with $b-1$ bikes. We have $u_{b-1} > T_1(m,U) \geq T_1(m-1,U\setminus \{u_{b-1}\}) = T_1(m-1,U_{b-2} \cup \{b\})$ by Lemma~\ref{lm:T1mU_order}. Thus, if $u_{b-2} \le T(m-1, U_{b-2} \cup \{b\})$ we can set $k = 1$. Otherwise, the claim holds by applying the induction assumption to $(m-1, U_{b-2} \cup \{b\})$.
    \end{proof}
    
    Armed with this lemma, we can now prove the main result of this subsection (Theorems~\ref{thm:rbs_ub_gt_tmu_ub1_gt_t1mu} and \ref{thm:rbs_ub_lt_t1mu}).

    \paragraph*{Proof of Theorem~\ref{thm:rbs_ub_gt_tmu_ub1_gt_t1mu}}
    \begin{proof}
        Let $k$ be as in Lemma~\ref{lem:ub_gt_t1mu}. In the schedule $M$ the first $m-k$ bikers execute the schedule given by \textsc{AllButOne}$(m-k, U_{b-k-1}\cup\{u_b\})$ and the remaining $k$ bikers simply ride bikes $u_{b-k}, u_{b-k+1}, \ldots, u_{b-1}$, respectively, for the entirety of interval $[0,1]$. Note that the choice of $k$ guarantees that input $(m-k, U_{b-k-1}\cup\{u_b\})$ satisfies the preconditions of \textsc{AllButOne} and hence the first $m-k$ bikers have completion time $T_1(m-k, U_{b-k-1}\cup\{u_b\}) \le u_{b-1}$ (the inequality is by Lemma~\ref{lem:ub_gt_t1mu}). The remaining $k$ bikers have completion times $u_{b-k} \le u_{b-k+1}\le \ldots \le u_{b-1}$. Clearly, the overall completion time is exactly $u_{b-1}$. The feasibility of schedule and polynomial time computability follows from the guarantees of \textsc{AllButOne}.
    \end{proof}

    \paragraph*{Proof of Theorem~\ref{thm:rbs_ub_lt_t1mu}}
    \begin{proof}
        Let $(X,M)$ be the schedule produced by $\AllButOne$ and let $(X',M')$ be some other feasible schedule for the \RBSP problem. Let $Y' = Y(X',M') = [y'_1,y'_2,\ldots,y'_b]$. If the bike $b$ is not abandoned in $(X',M')$ then $\tau(X',M') \geq u_b$ and we already know that $u_b > T_1(m,U)$. Thus assume that the bike $b$ is abandoned in $(X',M')$, i.e. $y_b' < 1$. By parts~\ref{lempart:lb_rbs_1} and~\ref{lempart:lb_rbs_2} of Lemma~\ref{lem:lbs-for-rbsprob} we have
        \[\tau(X',M') \ge \max \{ T(m, U, Y'), y_i' u_i + (1-y_i') u_1\} \ge \max\{ T(m, U, Y'), y_b' u_b + (1-y_b') u_1\}\]
        and by Claim~\ref{claim:lb_monotonicity} we have $T(m, U, Y') \ge T(m,U,[1,\ldots,1,y'_b])$. Thus,
        \begin{align*}
            \tau(X',M') &\ge \max\{ T(m, U, [1,\ldots,1,y'_b]),\ y_b' u_b + (1-y_b') u_1\}\\
            &\ge \min_y \max\{ T(m, U, [1,\ldots,1,y]),\ y u_b + (1-y) u_1\}\\
            &= T_1(m,U).
        \end{align*}
    \end{proof}

\section{Agents never need to wait}\label{sec:no-wait}
    In this last section we justify our initial assumption that agents should always travel at their maximum possible speed and never wait. We first generalize the notion of a schedule, to allow a non-zero waiting time at the end of each interval. Note that this representation is general enough to cover those schedules for which agents bike/walk slower than maximum possible speed, since it is trivial to convert such a schedule to one where all agents bike/walk at full speed and instead wait enough time to make up the difference.    We then show that such a schedule can be converted to one in which all waiting times are zero.     
    
    To this end, we modify the representation of a schedule introduced in Section~\ref{sec:prelim} to include a matrix $D$ whose entry $D(i,j)$ indicates an amount of time that the agent $i$ should wait at the end of an interval $j$. We call $D$ the {\em waiting matrix}. According to this kind of augmented schedule $\Alg = (X,M,D)$, agent $i$ rides bike $M(i, j)$ across interval $x_j$, and then waits for time $D(i, j)$. 
        
    The completion time of agent $i$ is now computed from the $i^{th}$ row of $M \cdot X+D$, i.e.
    \[t_{i} = \sum_{j=1}^{n} (M(i,j)x_j + D(i,j)) \]
    Similarly, the $j^{th}$ completion times become
    \[t_{i,j} = \sum_{k=1}^{j} (M(i,k)x_k + D(i,k))\]
    The feasibility requirements need only minimal changes to accommodate $D$. Namely, we need to use the new definitions of $t_{i,j}$ in Definition~\ref{def:feasible} and add the extra requirement that $D$ contain only non-negative entries. Finally, we note that the standardization procedure can be trivially modified to operate on schedules which include a waiting matrix (the only difference is that we now have to delete/merge appropriate columns of $D$ when there are zero/redundant columns, and switch appropriate entries of $D$ when there are swap-switches).
    
    We will now demonstrate that it is never necessary for an agent to wait.
    \begin{lemma}\label{lem:waiting}
        Let $(X,M,D)$ represent a feasible schedule which instructs an agent to wait, i.e. there exists a pair of indices $(i_0,j_0)$ such that $D$ contains an entry $D(i_0,j_0)>0$. Then there exists a feasible schedule $(X',M',D')$ with completion time at least as good as $(X,M,D)$, and such that $D'(i,j) = D(i,j)$ when $(i,j) \neq (i_0,j_0)$ and $D'(i_0,j_0)=D(i_0,j_0)-d$ for some positive constant $d$ bounded away from 0.
    \end{lemma}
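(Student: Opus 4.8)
The plan is to take the agent--interval pair $(i_0,j_0)$ with $D(i_0,j_0)>0$ and argue that the ``wasted'' waiting time can be traded for genuine forward progress by some agent without violating feasibility. The natural candidate is to reduce $D(i_0,j_0)$ and compensate: either agent $i_0$ could have ridden its current bike (or walked) a little farther during interval $j_0$, or, if the bike $i_0$ uses in interval $j_0$ is needed elsewhere, we shift the waiting to a later point. So first I would set up the ``slack'' we want to remove: let $d^\ast = D(i_0,j_0)$ and consider decreasing this entry. The key quantity to track is how a decrease in $D(i_0,j_0)$ propagates to the completion times $t_{i_0,j}$ for $j\ge j_0$ (they all drop by the same amount) and, through the feasibility constraints in Definition~\ref{def:feasible}(3), to the other agents that perform pickup-switches with agent $i_0$ after interval $j_0$.

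The main steps, in order: (1) Observe that part~3 of Definition~\ref{def:feasible} is the only feasibility constraint that couples agents' timings, and it is a family of inequalities $t_{i',j-1}\le t_{i,j-1}$; decreasing $D(i_0,j_0)$ only decreases $t_{i_0,j}$ for $j\ge j_0$, so the only constraints that can become violated are those of the form $t_{i_0,j-1}\le t_{i,j-1}$ where agent $i_0$ hands a bike to some agent $i$ — but decreasing the left side of a ``$\le$'' never violates it. Hence the constraints in which $i_0$ is the \emph{giver} stay satisfied for free. (2) The constraints that could break are those where $i_0$ is the \emph{receiver}, i.e. $t_{i',j-1}\le t_{i_0,j-1}$ for some switch into $i_0$ at interval $j>j_0$; these impose a \emph{lower} bound on how far we can push $t_{i_0,j-1}$ down. (3) Therefore define $d$ to be the largest decrease of $D(i_0,j_0)$ that keeps all such receiver-constraints satisfied and keeps $D(i_0,j_0)-d\ge 0$; this $d$ is the minimum of $d^\ast$ and the slacks $t_{i_0,j-1}-t_{i',j-1}$ over the relevant switches, and it is strictly positive because each such slack is currently nonnegative and, if some slack is zero, we are in the degenerate case where we can instead \emph{swap} the roles (a swap-switch) — but more cleanly, I would simply note that if the minimal slack is $0$ we cannot decrease here, so instead we \emph{move} the waiting: decrease $D(i_0,j_0)$ by $d$ and increase $D(i_0,j_0+1)$ (or the latest interval before the binding switch) by the same $d$, which leaves all $t_{i_0,j}$ unchanged for the binding $j$ while making $D'(i_0,j_0)=D(i_0,j_0)-d$ with $d$ bounded away from $0$ — and this keeps feasibility since total timings of $i_0$ are preserved. (4) Either way the overall completion time $\tau$ does not increase: in the first case all $t_{i_0,j}$ weakly decrease and no other $t_i$ changes; in the second case no $t_i$ changes at all.

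The subtle point — and the main obstacle — is making precise the clause ``for some positive constant $d$ bounded away from 0,'' i.e. guaranteeing we always make \emph{definite} progress and the process of repeatedly applying the lemma actually terminates (or at least drives all waiting times to zero). The clean resolution is: whenever $D(i_0,j_0)>0$, at least one of the following holds and yields a $d$ we can lower-bound — (a) $D(i_0,j_0)$ exceeds all the receiver-slacks, so we can decrease it down to the largest such slack, a decrease of size $\ge$ (something expressible in the data); (b) some receiver-slack is $0$, in which case that switch happens at equal times and is a swap-switch, which \textsc{Standardize} removes, strictly decreasing either the schedule size or the number of switches, after which we are in a strictly smaller configuration; (c) there are no receiver constraints at all on $i_0$ after $j_0$, so we can decrease $D(i_0,j_0)$ all the way to $0$. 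I would present the argument by cases along these lines, invoking Claim~\ref{claim:scaling} is not needed here, but I would use the already-established fact that \textsc{Standardize} preserves feasibility and completion time to handle case (b). The routine verifications — that the modified $(X',M',D')$ still satisfies parts 1 and 2 of Definition~\ref{def:feasible} (these only refer to $M$, which is unchanged) and that $D'$ remains nonnegative — I would dispatch in a sentence.
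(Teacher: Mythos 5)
Your main line of argument is the same as the paper's: only the feasibility constraints in which agent $i_0$ is the \emph{receiver} of a bike after interval $j_0$ can break, so one takes $d$ to be the minimum of $D(i_0,j_0)$ and the slacks $t_{i_0,j-1}-t_{i',j-1}$ over those switches, and positivity of $d$ is secured by first putting the schedule in standard form, since a zero-slack pickup-switch is exactly a swap-switch that \textsc{Standardize} eliminates. The paper does precisely this: it assumes without loss of generality that the schedule is in standard form, introduces the switch matrix $S$, and sets $d=\min\{D(i_0,j_0),\ \min_{j\ge j_0,\,S(i_0,j)\ne 0}(t_{i_0,j-1}-t_{S(i_0,j),j-1})\}$, which is strictly positive precisely because standard form forbids swap-switches.

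The one genuinely problematic ingredient is your ``move the waiting'' fallback. First, it produces a $D'$ that differs from $D$ in a second entry $(i_0,j_1)$ with $j_1>j_0$, contradicting the lemma's conclusion that $D'$ agrees with $D$ everywhere except at $(i_0,j_0)$; and since it merely relocates waiting time rather than destroying any, repeated application would not drive the waiting matrix to zero, which is what Theorem~\ref{cor:waiting1} needs. Second, it is not feasibility-preserving in general: shifting the wait from the end of interval $j_0$ to the end of interval $j_1$ lowers $t_{i_0,j}$ for $j_0\le j<j_1$, so any receiver switch into $i_0$ at an intermediate interval whose slack is smaller than the amount moved becomes infeasible, and if the binding zero-slack switch occurs at interval $j_0+1$ there is no later interval to move the wait to at all. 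You should drop that branch and commit to the standardization route of your case (b), which is the paper's proof. (A minor wording issue: in your case (a) the admissible decrease is governed by the \emph{minimum} of the receiver slacks, not by decreasing ``down to the largest such slack.'')
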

    \begin{proof}
        Consider any feasible schedule $\Alg = (X,M,D)$ such that $D$ contains an entry $D(i_0,j_0)>0$ and assume without loss of generality that $\Alg$ is expressed in standard form. Now define the $m \times n$ matrix $S$ called the {\em switch matrix}. This matrix is computed (only) from $M$ and specifies the {\em previous} agent during a pickup switch.  
        \begin{equation}\label{eq:switch}
            S(i,j) = \begin{cases}
                i',& M(i,j) = M(i',j-1) \neq 1,\ i\neq i'\\
                0,& \mbox{otherwise}.
            \end{cases}
        \end{equation}
        In other words, if agent $i$ performs a pickup-switch between $x_{j-1}$ and $x_j$, then $S(i,j)$ is defined as the agent $i'$ that rode the bike across interval $x_{j-1}$ and dropped it to be picked up by agent $i$; otherwise  $S(i,j)=0$. 
            
        Let $t_{i,j}$ represent the $j^{th}$ completion time of agent $i$ using $\Alg$. Since $\Alg$ is feasible and in standard form we know that
        \begin{equation} \label{eq:valid}
            t_{i,j-1}-t_{S(i,j),j-1} > 0 \mbox{ whenever } S(i,j) > 0.
        \end{equation}
        Note that the difference is strictly greater than 0 because $\Alg$ is in standard form (and thus does not contain any swap-switches). 

        Now consider the schedule $\Alg' = (X,M,D')$ where $D'(i,j) = D(i,j)$ for $(i,j) \neq (i_0,j_0)$ and $D'(i_0,j_0) = D(i_0,j_0)-d$ for some $d \in (0,D(i_0,j_0)]$ that we will determine shortly. In other words, the only difference between $\Alg$ and $\Alg'$ is in the entry $(i_0,j_0)$ of the waiting matrix. Let $t'_{i,j}$ represent the $j^{th}$ completion time of agent $i$ using $\Alg'$ and let $S'$ represent the switch matrix of $\Alg'$. Since the only difference between $\Alg$ and $\Alg'$ is in the entry $(i_0,j_0)$ of $D$ and $D'$ it is not hard to see that $S' = S$ and
        \begin{equation}\label{eq:tp_ij}
            t'_{i,j} = \begin{cases}
            t_{i,j},& i \neq i_0 \mbox{ or } j < j_0\\
            t_{i,j}-d,& i = i_0 \mbox{ and } j \geq j_0.
            \end{cases}
        \end{equation}
        From this one can immediately conclude that $\Alg'$ will complete in time no worse (and possibly better) than $\Alg$. It remains only to show that there exists a constant $d \in (0,D(i_0,j_0)]$ bounded away from zero such that $\Alg'$ remains feasible, that is, it satisfies:
        \begin{equation*}
            t'_{i,j-1}-t'_{S(i,j),j-1} \geq 0 \mbox{ whenever } S(i,j) > 0.
        \end{equation*}
        (note that we do not require $\Alg'$ to be in standard form). Using \eqref{eq:tp_ij} we can rewrite these conditions as
        \begin{align}
                &t_{i,j-1} - t_{S(i,j),j-1} \geq 0,& j < j_0 \mbox{ or } (i \neq i_0 \mbox{ and }S(i, j) \neq i_0) \label{eq:firstcond}\\
                &t_{i,j-1} - t_{i_0,j-1} \geq -d,& j \geq j_0 \mbox{ and } S(i,j) = i_0 \label{eq:secondcond}\\
                &t_{i,j-1} - t_{S(i,j),j-1} \geq d,& j \geq j_0 \mbox{ and } i=i_0 \label{eq:thirdcond}
        \end{align} 
        
        Since we already know that $t_{i,j-1}-t_{S(i,j),j-1} > 0$ we can immediately conclude that \eqref{eq:firstcond} and \eqref{eq:secondcond} are satisfied. By choosing 
        \[d = \min\{D(i_0,j_0), \min_{\substack{j \geq j_0\\ S(i_0,j)\neq 0}} t_{i_0,j-1}-t_{S(i_0,j),j-1}\}\]
        clearly \eqref{eq:thirdcond} is satisfied, and  the schedule $\Alg'$ will remain feasible. Note that \eqref{eq:valid}, together with the assumption that $D(i_0, j_0)>0$ assure that $d > 0$. This completes the proof.
    \end{proof}

    \begin{theorem}\label{cor:waiting1}
        Let $\Alg$ represent a schedule  which instructs at least one agent to wait or walk at speed less than 1 or bike with speed less than $v_i$ while using bike $i$. Then there exists a schedule $\Alg'$, with completion time at least as good as $\Alg$, and where no agents are required to wait and all agents move at maximum possible speed. 
    \end{theorem}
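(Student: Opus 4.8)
The plan is to reduce the general statement to Lemma~\ref{lem:waiting} by a two-step argument. First I would observe, as indicated in the introduction to this section, that any schedule in which some agent walks at speed less than $1$, or rides bike $i$ at speed less than $v_i$, can be converted — without changing any completion time — into a schedule of the augmented form $(X,M,D)$ in which every agent always moves at maximum speed and the ``slack'' is absorbed into waiting entries of $D$. Concretely, for each interval $x_j$ and agent $i$, if the agent traverses $x_j$ at a sub-maximal speed, replace that by traversing $x_j$ at full speed and then waiting for the difference; this leaves $t_{i,j}$ and $t_i$ unchanged for all $i,j$ and preserves feasibility (Definition~\ref{def:feasible} with the modified $t_{i,j}$). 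So without loss of generality the hypothesis becomes: $\Alg=(X,M,D)$ is a feasible augmented schedule with at least one strictly positive entry of $D$.

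Next I would apply Lemma~\ref{lem:waiting} repeatedly. Each application takes a feasible schedule with a positive waiting entry $D(i_0,j_0)$ and produces a feasible schedule whose completion time is no worse, in which that entry has been decreased by a quantity $d>0$ bounded away from $0$; all other entries of $D$ are unchanged. The natural way to organize the iteration is to always pick the entry $(i_0,j_0)$ and decrease it as much as Lemma~\ref{lem:waiting} permits — i.e. take $d$ equal to the minimum in the lemma's proof, which either zeroes out $D(i_0,j_0)$ or tightens some feasibility constraint $t_{i_0,j-1}=t_{S(i_0,j),j-1}$ (creating a swap-switch, which \textsc{Standardize} would then remove). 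I would process the entries in order of increasing $j_0$ (and, within a column, in increasing $i_0$), so that reducing a waiting time in column $j_0$ cannot re-introduce positive waiting in any earlier-processed column: by Equation~\eqref{eq:tp_ij}, decreasing $D(i_0,j_0)$ only shifts completion times $t_{i_0,j}$ for $j\ge j_0$ earlier and leaves everything in columns $<j_0$ untouched. Hence after finitely many steps — at most $mn$, one per entry of $D$, interleaved with standardizations — we reach a schedule $\Alg'$ with $D'\equiv 0$, which is exactly a schedule in which no agent waits and every agent moves at full speed, with completion time at most that of $\Alg$.

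The main obstacle is making the termination argument clean. A single invocation of Lemma~\ref{lem:waiting} need not set $D(i_0,j_0)$ to zero — it might instead be stopped by a newly tight constraint $t_{i_0,j-1}=t_{S(i_0,j),j-1}$, i.e. a swap-switch, leaving $D(i_0,j_0)$ still positive. The fix is to run \textsc{Standardize} after such a step to remove the swap-switch (swapping the two agents' suffixes, which does not affect completion times or feasibility), and then re-apply Lemma~\ref{lem:waiting} to the same entry; since each removed swap-switch strictly decreases the number of pickup-switches in $M$, and between swap-switch removals each application of the lemma with the maximal $d$ zeroes out the current entry, the whole process halts. One must also check that \textsc{Standardize} does not resurrect a positive waiting entry in an already-cleared column, which is immediate since it only deletes, merges, or permutes columns/rows and never increases any $D$ entry. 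Assembling these observations gives the theorem; the bookkeeping is routine once the ordering of entries and the interleaving with standardization are fixed.
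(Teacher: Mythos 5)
Your proposal is correct and follows essentially the same route as the paper: the paper's proof of Theorem~\ref{cor:waiting1} is simply ``repeated application of Lemma~\ref{lem:waiting},'' with the conversion of sub-maximal speeds into waiting entries of $D$ already noted in the preamble of Section~\ref{sec:no-wait}. Your additional care about termination (interleaving \textsc{Standardize} to remove the swap-switches that can block a single application from zeroing an entry, and tracking the number of pickup-switches) fills in a detail the paper leaves implicit, but it is the same argument.
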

    \begin{proof}
        Follows from repeated application of Lemma~\ref{lem:waiting}.
    \end{proof}

\section{Conclusion}\label{sec:conclusion}
   In this paper we introduced  a new paradigm of mobile robot optimization problems involving both autonomous mobile agents and passive transportation devices called bikes that can increase the speed of agents.  We studied the \BSP problem, in which $m$ agents and $b \le m$ bikes are required to traverse a unit interval. We gave a polynomial time algorithm to construct an arrival time-optimal schedule. We also gave a polynomial time algorithm for the \RBSP problem, in the case when at most one bike is allowed to be abandoned. 

    There are many open questions that remain. First, the development of algorithms for the \RBSP problem when  more than one bike can be abandoned is required. The techniques introduced in this paper can be extended further to cover the case that at most 2 bikes can be abandoned, however, this results in a messy case analysis that does not lend any intuition as to how the problem can be elegantly solved. Additionally, one can study more general versions of the problem where agents/bikes do not all begin at the same location, or even versions where the speed of a bike depends both on the bike and on the ID of the agent that is riding it.

\newpage
\bibliographystyle{plain}
\bibliography{allrefs}

\appendix

\section{Pseudocode of procedure \texorpdfstring{$\textsc{Standardize}$}{Standardize}}\label{sec:appendix_bs}
    \begin{algorithm}[H] \caption{$\textsc{Standardize}(X,M)$} \label{alg:standardize}
        \begin{algorithmic}[1]             
        \Input {$m \times n$ schedule matrix $M$, and $n \times 1$ partition vector $X$}
        \Init
            \State $X' \leftarrow []$; \Comment{partition vector of standardized algorithm.}
            \State $M' \leftarrow []$; \Comment{schedule matrix of standardized algorithm.}
            \State $n' \leftarrow 0$; \Comment{size of standardized algorithm}                
            \State $T \leftarrow []$; \Comment{to keep track of the $(j-1)^\text{st}$ completion times of agents.}
        \Begin
        \For {$(j=1$ to $n)$}
            \If {$(X(j)=0)$} { \textbf{continue}; } \Comment{A zero column}  \EndIf 
            \If {$(n'= 0)$} \Comment{First non-zero column, start checking for swap-switches on next iteration}
                \State $X' \leftarrow X(j)$; 
                \State $M' \leftarrow M(1:m,j)$;
                \State $n' \leftarrow 1$;                          
                \State $T \leftarrow T + M(1:m,j) \cdot X(j)$; \Comment{Update $(j-1)^\text{st}$ completion times}
            \Else \Comment{Check for swap-switches}
                \State $i \leftarrow 1$;
                \While {$(i \leq m)$}
                    \If {$(M(i,j) \neq 1$ and $M(i,j) \neq M'(i,n'))$}\Comment{A pickup-switch}
                        \For {$(i'=1$ to $m)$}
                            \If {$(M'(i',j-1)=M(i,j))$}
                                \If {$T(i)=T(i')$}  \Comment{A swap-switch}
                                    \State $\textsc{Swap}(M(i,j:n),M(i',j:n))$; \Comment{Swap schedules of agents $i,i'$}
                                \Else 
                                    \State $i \leftarrow i+1$;
                                \EndIf
                                \State \textbf{break};
                            \EndIf
                        \EndFor
                    \Else { $i \leftarrow i+1$} \EndIf
                \EndWhile
                \If { $M'(1:m,n') = M(1:m,j)$} \Comment {A redundant column}
                    \State $X'(n') \leftarrow X'(n')+X(j)$; \Comment{Merge entries corresponding to redundant columns}
                \Else
                    \State $X' \leftarrow [X'; X(j)]$;
                    \State $M' \leftarrow [M', M(1:m,j)]$;
                    \State $n' \leftarrow n'+1$;                                                
                \EndIf            
                \State $T \leftarrow T + M(1:m,j) \cdot X(j)$; \Comment{Update $(j-1)^\text{st}$ completion times}
            \EndIf              
        \EndFor
        \State \Return $(X',M')$;
        \end{algorithmic}
    \end{algorithm}     

    Algorithm~\ref{alg:standardize} builds up the standardized couple $(X',M')$ column-by-column by looping over the columns of $M$, skipping any zero columns. After the first non-zero column is added to $(X',M')$ we begin checking for swap-switches (lines 13--23), removing any that we find (line 19). Once we have removed all swap-switches from the current column we check if the column is redundant. If not we add it to the couple $(X',M')$ (otherwise we merge the intervals corresponding to the redundant columns, line 25). 
        
    The computational complexity of this implementation is $O(m^2 n)$ since, in each of the $n$ columns, there can be at most $m$ swap-switches and for each of these switches the inner loop on line 16 will run once (at least one swap-switch will be removed on each iteration of this inner loop). The space complexity is clearly $O(mn)$.

\end{document}